\documentclass[journal]{IEEEtran}

\usepackage{amsmath,amssymb,amsfonts,amsthm,mathtools}
\usepackage{bm}
\usepackage{mathrsfs}

\newcommand{\bTmat}{\mathbf{T}}
\newcommand{\bS}{\mathbf{S}}

\newcommand{\bW}{\mathbf{W}}
\newcommand{\bu}{\mathbf{u}}
\newcommand{\ba}{\mathbf{a}}
\newcommand{\bd}{\mathbf{d}}

\usepackage{graphicx}
\usepackage{float}
\usepackage{placeins}

\usepackage[caption=false,font=footnotesize]{subfig}

\usepackage{booktabs}
\usepackage{multirow}

\usepackage{algorithm}
\usepackage{algpseudocode}

\algrenewcommand\algorithmicindent{1.15em}

\usepackage{enumitem}
\usepackage{xcolor}
\usepackage{cite}
\usepackage{url}

\usepackage[
    colorlinks=true,
    linkcolor=blue,
    citecolor=blue,
    urlcolor=blue
]{hyperref}

\setlist[itemize]{
    leftmargin=1.25em,
    itemsep=1pt,
    topsep=2pt,
    parsep=0pt,
    partopsep=0pt
}

\setlist[enumerate]{
    leftmargin=1.45em,
    itemsep=1pt,
    topsep=2pt,
    parsep=0pt,
    partopsep=0pt
}

\allowdisplaybreaks

\newtheorem{assumption}{Assumption}
\newtheorem{definition}{Definition}
\newtheorem{lemma}{Lemma}
\newtheorem{proposition}{Proposition}
\newtheorem{theorem}{Theorem}

\newtheorem{remark}{Remark}
\newcommand{\bpsi}{\bm{\psi}}

\usepackage{algorithm}
\usepackage{algpseudocode}

\DeclareMathOperator{\diag}{diag}
\DeclareMathOperator{\col}{col}

\DeclareMathOperator{\sgn}{sgn}

\newcommand{\bx}{\bm{x}}

\newcommand{\bz}{\bm{z}}

\newcommand{\bphi}{\bm{\phi}}

\renewcommand{\arraystretch}{0.96}

\usepackage{orcidlink}

\usepackage{xcolor}
\usepackage{tikz}
\usepackage{scalerel}
\usepackage{hyperref}

\newcommand{\orcidicon}[1]{%
\href{https://orcid.org/#1}{%
\mbox{%
\scalerel*{%
\begin{tikzpicture}[yscale=-1,transform shape]
\pic{orcidlogo};
\end{tikzpicture}%
}{|}%
}%
}%
}

\begin{document}
\title{%
\resizebox{1.02\textwidth}{!}{%
\bfseries
\begin{tabular}{c}
Fixed-Time Resilient Integral Reinforcement Learning for Input-Constrained\\[-1.0pt]
Unknown Nonlinear Systems Under FDI Attacks and Disturbances:\\[-1.0pt]
A Data-Driven Admissible Warm Start
\end{tabular}%
}%
}
% \author{
% Tien Dat Vu~\orcidicon{0009-0004-0792-000X}\,\IEEEmembership{Student Member~IEEE,}
% Minh Doan~\orcidicon{0000-0002-9460-4256}\,\IEEEmembership{Member~IEEE}
% \thanks{T. D. Vu and M. Doan are with the Faculty of Mechanical Engineering,
% Ho Chi Minh City University of Technology (HCMUT), Vietnam National University
% Ho Chi Minh City (VNU-HCM), Ho Chi Minh City, Vietnam
% (e-mail: dat.vuv@hcmut.edu.vn; minh.doan@hcmut.edu.vn).}

% \thanks{Corresponding author: N. M. Doan (e-mail: minh.doan@hcmut.edu.vn).}
% }

\author{
Tien Dat Vu and
Minh Doan
\thanks{T. D. Vu and M. Doan are with the Faculty of Mechanical Engineering,
Ho Chi Minh City University of Technology (HCMUT), Vietnam National University
Ho Chi Minh City (VNU-HCM), Ho Chi Minh City, Vietnam
(e-mail: dat.vuv@hcmut.edu.vn; minh.doan@hcmut.edu.vn).}

\thanks{Corresponding author: N. M. Doan (e-mail: minh.doan@hcmut.edu.vn).}
}

% \markboth{IEEE Transactions on Control Systems Technology}%
% {How to Use the IEEEtran \LaTeX \ Templates}
\markboth{Preprint}%
{How to Use the IEEEtran \LaTeX \ Templates}
\maketitle

\begin{abstract}
This paper develops a resilient learning controller for unknown nonlinear systems operating under actuator limits, false-data-injection attacks, and external disturbances. The key idea is to learn a saturated secure policy directly from finite trajectory data while guaranteeing that both the learning error and the closed-loop state converge to compact neighborhoods within a uniform fixed time independent of initial conditions. An integral formulation removes the unknown drift from the implementable learning law, while stored informative data sustain learning after online excitation fades. To mitigate the closed-loop sensitivity to arbitrary critic initialization, pre-deployment data, which may also be reused from the replay stack, are lifted through a finite-dimensional Koopman representation to construct a stabilizing initial policy, whose inverse saturated-policy map provides a data-driven critic-weight warm start. The resulting controller preserves input constraints by construction and guarantees practical fixed-time robustness under persistent attacks and disturbances. The proposed learning and initialization architecture is further verified through a two-link robot stabilization example, where the results demonstrate rapid state recovery, bounded critic learning, reliable actuator-constraint satisfaction, and improved closed-loop behavior under informed critic initialization.
\end{abstract}

% ============================================================
% Index terms
% ============================================================
\begin{IEEEkeywords}
False-data-injection attacks, fixed-time
stability, Hamilton--Jacobi--Isaacs equation, input-constrained
nonlinear systems, integral reinforcement learning.
\end{IEEEkeywords}

\section{Introduction}
\label{sec:introduction}

Learning optimal control for unknown nonlinear systems becomes particularly difficult when actuator limits, malicious inputs, disturbances, and convergence-time requirements must be handled simultaneously. In safety-critical settings, false-data-injection (FDI) attacks and disturbances directly affect the plant, saturation invalidates unconstrained designs, and asymptotic stabilization is insufficient when recovery must occur within a predictable time. This paper develops a secure differential-game framework that embeds actuator constraints into the optimal-control problem, removes the unknown drift from the implementable learning law through an integral Bellman--Isaacs identity, retains informative transient data via finite experience replay, and shapes both critic and closed-loop dynamics toward practical fixed-time behavior. The resulting single-critic architecture preserves input admissibility by construction and avoids persistent excitation during regulation.

The framework is rooted in Hamilton--Jacobi--Isaacs (HJI) theory, where the controller acts against maximizing adversarial channels and robustness admits an \(H_\infty\)-type interpretation \cite{BasarBernhard1995,AbuKhalafLewis2005,AbuKhalafLewisHuang2006}. For unknown nonlinear drift, adaptive dynamic programming and reinforcement learning approximate optimal policies \cite{VrabieLewis2009,LewisVrabie2009,VamvoudakisLewis2010}, including robust actor--critic and identifier--critic schemes \cite{Bhasin2013,BianJiang2014,ModaresLewisJiang2015,LuoWuHuang2015}. Integral reinforcement learning avoids explicit drift evaluation through finite-interval Bellman identities \cite{VrabieLewis2009,Modares2014}, while experience replay preserves informative transient directions as excitation decays near the target \cite{Lin1992,Modares2014,Mnih2015}.

Convergence-time predictability adds another challenge. Finite-time stability allows settling time to depend on the initial condition, whereas fixed-time stability provides an initial-condition-independent upper bound \cite{BhatBernstein2000,Polyakov2012,MoulayPerruquetti2006}, useful when guaranteed recovery or completion times are required \cite{SanchezTorres2018PredefinedTime,Cao2022PrescribedTimeTracking} and motivating fixed-time control of uncertain nonlinear systems \cite{WangLai2020FixedTimeControl}. Meanwhile, FDI-resilient control has mainly focused on detection, estimation, and resilient-control methods \cite{Pasqualetti2013,Fawzi2014,Teixeira2015,Dibaji2019}. Secure learning, saturation, finite-data excitation, and fixed-time recovery therefore remain largely separated.

Critic initialization is also critical. Classical policy-iteration and HJB/HJI learning generally require an initial admissible or stabilizing policy to ensure well-posed policy evaluation and confinement within the value-function approximation region \cite{AbuKhalafLewis2005,VrabieLewis2009,VamvoudakisLewis2010}, motivating initialization schemes such as homotopy-based policy iteration \cite{ChenLewisXie2024}. Since \(\hat{\mathbf W}(0)\) immediately affects the value-function gradient and applied policy, an initial-condition-independent critic convergence bound does not guarantee an admissible transient under arbitrary initialization, although initial weights are often chosen from prior knowledge, trial-and-error, or auxiliary model-based designs.

When pre-deployment data are available, Koopman methods provide finite-dimensional lifted representations for approximately linear prediction and control synthesis \cite{WilliamsKevrekidisRowley2015,BruntonEtAl2016,KordaMezic2018}, with growing links to optimal and reinforcement learning \cite{KrolickiSutavaniVaidya2022}. However, their use as a certified bridge from collected trajectories to an admissible \emph{critic-space initialization} for an otherwise model-free integral HJI learner remains underdeveloped. Here, pre-deployment or compatible replay data construct a stabilizing lifted policy, which is mapped through the inverse saturated-policy relation to obtain a physically meaningful warm start \(\hat{\mathbf W}(0)\).

The remaining gap is thus to jointly ensure admissible initialization, unknown-drift-free learning, finite-data informativity, actuator-constrained secure control, and initial-condition-independent practical fixed-time regulation. Accordingly, this paper provides an end-to-end architecture linking pre-deployment data, admissible initialization, online optimal learning, and resilient fixed-time closed-loop regulation.

The contributions are as follows. First, a secure HJI formulation is developed for unknown nonlinear systems under matched FDI attacks, additive disturbances, and symmetric actuator constraints; a nonquadratic input penalty yields a smooth saturated policy, while the state cost induces the two-power structure required for fixed-time analysis. Second, a critic-only integral reinforcement-learning architecture recovers the control, attack, and disturbance policies from a single learned value function; the integral Bellman--Isaacs identity removes explicit drift dependence, while finite-data replay preserves informative directions without persistent excitation. Third, a two-power critic update guarantees practical fixed-time convergence of the critic-weight error under bounded approximation residuals. Fourth, a unified stability analysis propagates critic error through the learned saturated policy and proves practical fixed-time boundedness of the nonlinear closed loop under actuator constraints, FDI attacks, and external disturbances. Fifth, finite pre-deployment trajectories identify a local Koopman-lifted representation and a stabilizing bounded policy, whose inverse saturated-policy relation yields a data-driven critic warm start \(\hat{\mathbf W}(0)\) without replacing the original nonlinear integral Bellman--Isaacs learning problem.

% ============================================================
\section{Preliminaries and Problem Formulation}
\label{sec:prelim_problem}
% ============================================================

% ------------------------------------------------------------
\subsection{Preliminaries}
\label{subsec:preliminaries}

\textbf{Notation:}
Let \(\mathbb{R}\), \(\mathbb{R}_{\geq0}\), and \(\mathbb{R}_{>0}\) denote the real, nonnegative-real, and positive-real sets, respectively, and \(\mathbb{R}^{n\times m}\) the set of real \(n\times m\) matrices. For \(\boldsymbol{x}\in\mathbb{R}^n\) and \(\boldsymbol{A}\in\mathbb{R}^{n\times m}\), \(\|\boldsymbol{x}\|\) denotes the Euclidean norm, \(\|\boldsymbol{A}\|\) the induced \(2\)-norm, \(\boldsymbol{A}^{\top}\) the transpose, and \(\boldsymbol{I}_n\) the identity matrix. For symmetric \(\boldsymbol{A}\), \(\lambda_{\min}(\boldsymbol{A})\) and \(\lambda_{\max}(\boldsymbol{A})\) denote its minimum and maximum eigenvalues. The operators \(\operatorname{col}(\cdot)\) and \(\operatorname{diag}(\cdot)\) denote column stacking and diagonal construction. The symbol \(\mathbf{0}_n\) denotes the zero vector in \(\mathbb{R}^n\), while \(\mathbf{0}_{m\times n}\) denotes the \(m\times n\) zero matrix. For \(V:\mathbb{R}^n\to\mathbb{R}\), \(\nabla V(\boldsymbol{x})\) denotes its column gradient. The space \(C^k(\Omega)\) consists of functions with continuous derivatives up to order \(k\) on \(\Omega\), while \(L_2\) and \(L_\infty\) denote the square-integrable and essentially bounded signal spaces, respectively. The symbols \(\mathcal{K}\) and \(\mathcal{K}_\infty\) denote the standard classes of strictly increasing comparison functions, with \(\mathcal{K}_\infty\) additionally unbounded. The function \(\operatorname{sgn}(\cdot)\) denotes the sign function; for vector arguments, \(\tanh(\cdot)\), \(\tanh^{-1}(\cdot)\), and \(\operatorname{sgn}(\cdot)\) are applied elementwise.

\begin{theorem}[Fixed-time value-function condition \cite{Polyakov2012,Gong2025}]
\label{thm:fixed_time_value_condition}
Consider an absolutely continuous trajectory \(\bz(t)\) and a
continuously differentiable positive definite function
\(V:\Omega_z\to\mathbb R_{\geq0}\), with \(V(\bm 0)=0\). Suppose
that there exist constants
\begin{equation}
    a,b>0,
    \qquad
    0<\rho<1<\theta,
    \qquad
    \Delta_V\geq0,
\label{eq:value_condition_parameters}
\end{equation}
such that, along the trajectory,

\begin{equation}
    \dot V(\bz)
    \leq
    -aV^\rho(\bz)
    -bV^\theta(\bz)
    +\Delta_V .
\label{eq:fixed_time_value_condition}
\end{equation}
Then \(V(\bz(t))\) is practically fixed-time stable with respect
to
\begin{equation}
\begin{aligned}
    \Omega_V(\vartheta)
    =
    \bigg\{
        \bz\in\Omega_z:\;
        aV^\rho(\bz)+bV^\theta(\bz)
        \leq
        \frac{\Delta_V}{\vartheta}
    \bigg\},
\end{aligned}
\label{eq:value_residual_set}
\end{equation}
where \(\vartheta\in(0,1)\). The corresponding entering time
satisfies
\begin{equation}
    T_V
    \leq
    \frac{1}{(1-\vartheta)a(1-\rho)}
    +
    \frac{1}{(1-\vartheta)b(\theta-1)},
\label{eq:value_settling_time}
\end{equation}
which is independent of \(V(\bz(0))\). If \(\Delta_V=0\), then
the convergence is exact fixed-time convergence to the origin.
\end{theorem}

% \begin{proof}
% Let \(Y(t)=V(\bz(t))\). Since \(V\) is continuously
% differentiable and \(\bz(t)\) is absolutely continuous,
% \(Y(t)\) is absolutely continuous. Equation
% \eqref{eq:fixed_time_value_condition} satisfies the conditions
% of Lemma~\ref{lem:fixed_time}. The result therefore follows
% directly.
% \end{proof}

\subsection{Problem Formulation}
\label{subsec:problem_formulation}
% ------------------------------------------------------------

Consider the continuous-time nonlinear system
\begin{equation}
    \dot{\bx}
    =
    f(\bx)
    +
    \mathbf g(\bx)(\bu+\ba)
    +
    \bd,
\label{eq:plant}
\end{equation}
where \(\bx\in\mathbb R^n\), \(\bu\in\mathbb R^m\), \(\ba\in\mathbb R^m\), and \(\bd\in\mathbb R^n\) denote the state, intended control input, matched false-data-injection (FDI) signal, and external disturbance, respectively. The vector field \(f:\mathbb R^n\to\mathbb R^n\) represents the drift dynamics, while \(\mathbf g:\mathbb R^n\to\mathbb R^{n\times m}\) denotes the input effectiveness matrix. Since the FDI signal is matched with the control channel, the actuator receives \(\bu+\ba\) rather than the intended command \(\bu\), whereas \(\bd\) enters the system additively. Throughout this work, the control and learning problems are considered
on a prescribed compact operating region
\(\Omega\subset\mathbb R^n\) containing the origin. The compactness of
\(\Omega\) specifies a finite state domain over which the regularity,
admissibility, and function-approximation conditions invoked in the
subsequent HJI and learning analysis are required to hold. In
particular, continuous functions defined on \(\Omega\) attain finite
bounds, a property that will be used repeatedly in the subsequent
analysis.

\begin{remark}[FDI location and actuator constraint]
The constraint \(|u_i|<\bar u_i\) limits only the secure control
component generated by the controller. Two actuator-side FDI
architectures must be distinguished. If \(\ba\) is a matched malicious
input injected after the constrained action \(\bu\) has been generated,
then the plant receives \(\bu+\ba\) and the model above applies
directly. In this case, \(|a_i|\leq\bar a_i\) implies only
\(|u_i+a_i|\leq\bar u_i+\bar a_i\), not
\(|u_i+a_i|<\bar u_i\). Conversely, if \(\ba\) corrupts the command
before physical actuator saturation, the plant-input channel would
instead contain the componentwise saturated signal
\(\operatorname{sat}(\bu+\ba)\), and the differential game and saddle
policies developed below would generally require reformulation. The
present work adopts the former architecture.
\end{remark}

\begin{definition}[Admissible secure policy]
\label{def:admissible_secure_policy}
A continuous policy \(\bu_c:\Omega\rightarrow\mathcal U\) is admissible
on \(\Omega\), denoted by \(\bu_c\in\Psi(\Omega)\), if
\(\bu_c(\bm0)=\bm0\), the resulting closed-loop system is
asymptotically stable on \(\Omega\) in the absence of exogenous signals,
and the associated infinite-horizon cost is finite.
\end{definition}

\begin{remark}[Operating-region limitation and data-driven admissible initialization]
\label{rem:operating_domain}
\label{rem:data_driven_admissible_initialization}
A principal structural limitation of the present HJI--ADP formulation is that its learning and approximation guarantees are established on a prescribed compact operating region \(\Omega\), rather than globally on \(\mathbb R^n\). Such domain-dependent formulations are standard in nonlinear HJB/ADP, where approximation is performed over a stabilizing region or learning is initialized by an admissible policy \cite{AbuKhalaf2005SaturatingActuatorsHJB,Vamvoudakis2010OnlineActorCritic}. The present work does not separately design a nominal controller that first drives the state into \(\Omega\) and guarantees its invariance; instead, a feasible \(\Omega\) is assumed to be identified a priori, with the reinforcement intervals corresponding to trajectories evolving therein. A practical architecture may employ a nominal stabilizing controller before activating learning \cite{Heydari2018StabilizingInitialPolicy}, while constructing or enlarging certified admissible regions and preserving invariance during online learning remains an important open problem \cite{Lee2015InvariantExploration}.

% Importantly, the required admissible initialization need not be analytically available. In our recent fixed-time multi-agent IRL study \cite{VuEtAl2026FixedTimeMASIRL}, Appendix~A showed that finite previously collected trajectories can be reused to identify local Koopman lifted models, construct Riccati-based stabilizing controllers, and certify their admissibility on validated compact regions through explicit Koopman-residual bounds. This suggests a broader data-reuse principle: historical data may provide not only an admissible initial policy but also a set of critic weights whose induced policies inherit a certified stabilizing margin. Accordingly, Appendix~A of the present work develops an offline Koopman-based bridge from finite trajectory data to an admissible critic-weight initialization region, while the subsequent online IRL remains governed entirely by the finite-window Bellman--Isaacs formulation and is model-free with respect to the unknown nonlinear drift. The subsequent results are therefore understood conditionally on the learning trajectory remaining in \(\Omega\); this scope is stated explicitly to avoid any circular argument in which membership in \(\Omega\) is invoked to establish its own invariance.

Importantly, the required admissible initialization need not be obtained
analytically. In our recent fixed-time multi-agent IRL study
\cite{VuEtAl2026FixedTimeMASIRL}, finite previously collected trajectories
were shown to be reusable for constructing data-driven Koopman lifted
representations and corresponding stabilizing controllers. Motivated by
this idea, Section~\ref{khoitao} of the present work establishes a direct
bridge between such a Koopman-based pre-deployment design and the online
Bellman--Isaacs learning architecture. Specifically, finite trajectory
data are first lifted through a Koopman representation to obtain a
finite-dimensional linear surrogate, from which a stabilizing control
signal is constructed. Rather than using this surrogate model during the
subsequent learning phase, the resulting control signal is mapped back
through the saturated policy parameterization in
\eqref{eq:learned_control_policy} to infer a compatible and informed
initial critic weight \(\hat{\bW}(0)\) (See~\eqref{eq:learned_control_policy}). The Koopman model therefore serves
only as an offline data-driven warm-start mechanism, whereas the ensuing
online IRL remains governed entirely by the finite-window
Bellman--Isaacs residual and is model-free with respect to the unknown
nonlinear drift. The subsequent results are understood conditionally on
the learning trajectory remaining in \(\Omega\); this scope is stated
explicitly to avoid any circular argument in which membership in
\(\Omega\) is invoked to establish its own invariance.
\end{remark}

\begin{assumption}[Adversarial-signal regularity]
\label{ass:signal_regular}
The FDI signal and external disturbance are measurable and satisfy
\begin{equation}
\begin{aligned}
\ba &\in L_2([0,\infty);\mathbb R^m)
      \cap L_\infty([0,\infty);\mathbb R^m),\\
\bd &\in L_2([0,\infty);\mathbb R^n)
      \cap L_\infty([0,\infty);\mathbb R^n).
\end{aligned}
\end{equation}
\end{assumption}

The control input is subject to the symmetric componentwise
constraint
\begin{equation}
    |u_j(t)|<\lambda,
    \qquad
    j=1,\ldots,m,
\label{eq:symmetric_constraint}
\end{equation}
where \(\lambda>0\) is known. Hence,
\begin{equation}
    \mathcal U
    =
    \left\{
        \bu\in\mathbb R^m:
        |u_j|<\lambda,\;
        j=1,\ldots,m
    \right\}.
\label{eq:U_set}
\end{equation}

% \begin{assumption}
% \label{ass:system_regular}
% The origin is an equilibrium of the unforced nominal system, that
% is, \(f(\bm 0)=\bm 0\). On a compact set
% \(\Omega\subset\mathbb R^n\) containing the origin, \(f(\bx)\)
% and \(\mathbf g(\bx)\) are locally Lipschitz and bounded. The
% input effectiveness matrix \(\mathbf g(\bx)\) is known, whereas
% the drift \(f(\bx)\) is not required to be known by the learning
% algorithm. Moreover, there exists a constant \(\bar g>0\) such
% that
% \begin{equation}
%     \|\mathbf g(\bx)\|
%     \leq
%     \bar g,
%     \qquad
%     \forall\bx\in\Omega.
% \label{eq:g_bound}
% \end{equation}
% The signals \(\ba(t)\) and \(\bd(t)\) are measurable and locally
% square integrable, and the corresponding solution of
% \eqref{eq:plant} exists uniquely on every finite time interval on
% which it remains in \(\Omega\).
% \end{assumption}

\begin{assumption}
\label{ass:system_regular}
The origin is an equilibrium of the nominal system, i.e., \(f(\bm0)=\bm0\). On the compact set \(\Omega\subset\mathbb R^n\), the vector fields \(f(\bx)\) and \(\mathbf g(\bx)\) are locally Lipschitz, \(\mathbf g(\bx)\) is known, \(f(\bx)\) is unknown, and \(\|\mathbf g(\bx)\|\le\bar g\) for some constant \(\bar g>0\).
\end{assumption}

% \begin{assumption}
% \label{ass:cost_regular}
% The state penalty
% \(Q:\Omega\to\mathbb R_{\geq0}\) is continuous and positive
% definite. There exist class-\(\mathcal K_\infty\) functions
% \(\underline q(\cdot)\) and \(\bar q(\cdot)\) such that
% \begin{equation}
%     \underline q(\|\bx\|)
%     \leq
%     Q(\bx)
%     \leq
%     \bar q(\|\bx\|),
%     \qquad
%     \forall\bx\in\Omega .
% \label{eq:Q_bounds}
% \end{equation}
% The weighting matrices satisfy
% \begin{equation}
% \begin{aligned}
%     \mathbf R
%     &=
%     \diag(r_1,\ldots,r_m)>0,\\
%     \mathbf T
%     &=
%     \mathbf T^\top>0,\\
%     \mathbf S
%     &=
%     \mathbf S^\top>0,
% \end{aligned}
% \label{eq:game_weight_conditions}
% \end{equation}
% and are known.
% \end{assumption}

% \begin{remark}
% \label{rem:general_state_penalty}
% The state penalty \(Q(\bx)\) is intentionally kept general in
% Assumption~\ref{ass:cost_regular}. This separates the basic secure
% optimal-control formulation from the subsequent fixed-time
% shaping design. When fixed-time closed-loop convergence is
% established, \(Q(\bx)\) will be specialized to contain one state
% power below two and another state power above two. That
% specialization is a design choice within the HJI running cost and
% is not imposed as a structural assumption on the original plant.
% \end{remark}

To embed the input constraint directly into the optimality
equation, define the nonquadratic control penalty
\begin{equation}
    U(\bu)
    =
    2\sum_{j=1}^{m}
    \int_{0}^{u_j}
    \lambda r_j
    \tanh^{-1}
    \left(
        \frac{v}{\lambda}
    \right)
    dv,
    \qquad
    \bu\in\mathcal U .
\label{eq:nonquadratic_U}
\end{equation}
The inverse hyperbolic tangent is evaluated componentwise whenever
it is applied to a vector.

The secure infinite-horizon performance index is
% \begin{equation}
% \begin{aligned}
%     J(\bx_0,\bu,\ba,\bd)
%     =
%     \int_{0}^{\infty}
%     \big[
%         Q(\bx)
%         +U(\bu)
%         -\gamma_a^2\ba^\top\mathbf T\ba
%         -\gamma_d^2\bd^\top\mathbf S\bd
%     \big]dt,
% \end{aligned}
% \label{eq:hji_cost}
% \end{equation}

\begin{equation}
\begin{aligned}
J(\bx_0,\bu,\ba,\bd)
&=
\int_{0}^{\infty}
\Big[
Q(\bx)+U(\bu)\\
&\qquad
-\gamma_a^2\ba^\top\mathbf T\ba
-\gamma_d^2\bd^\top\mathbf S\bd
\Big]dt .
\end{aligned}
\label{eq:hji_cost}
\end{equation}

where \(\gamma_a>0\) and \(\gamma_d>0\) are prescribed attenuation
levels. The controller minimizes \eqref{eq:hji_cost}, whereas the
FDI and disturbance channels maximize it.

% The optimal value function is defined as
% \[
% V^*(\bx(0))
% :=
% \min_{\bu\in\mathcal U}
%     \max_{\ba\in\mathbb R^m,\;\bd\in\mathbb R^n}
% J(\bx(0),\bu,\ba,\bd).
% \]

The optimal value function is defined as
\[
V^*(\bx(0))
:=
\min_{\bu\in\mathcal U}
\max_{\ba,\bd}
J(\bx(0),\bu,\ba,\bd).
\]

For a continuously differentiable candidate value function
\(V(\bx)\), define the Hamiltonian
\begin{equation}
\begin{aligned}
    H(\bx,\bu,\ba,\bd,\nabla V)
    ={}&
    Q(\bx)
    +U(\bu)
    -\gamma_a^2\ba^\top\mathbf T\ba
    -\gamma_d^2\bd^\top\mathbf S\bd\\
    &+
    (\nabla V(\bx))^\top
    \left[
        f(\bx)
        +
        \mathbf g(\bx)(\bu+\ba)
        +
        \bd
    \right].
\end{aligned}
\label{eq:H_def}
\end{equation}

If the value function is continuously differentiable, it satisfies
the Hamilton--Jacobi--Isaacs equation
\begin{equation}
    0
    =
    \min_{\bu\in\mathcal U}
    \max_{\ba,\;\bd}
    H(\bx,\bu,\ba,\bd,\nabla V^*).
\label{eq:HJI_eq}
\end{equation}

% \begin{assumption}
% \label{ass:HJI_solution}
% There exists a continuously differentiable positive definite
% solution
% \[
%     V^*:\Omega\to\mathbb R_{\geq0},
%     \qquad
%     V^*(\bm 0)=0,
% \]
% of \eqref{eq:HJI_eq}. The policies generated by \(V^*\) are
% admissible on \(\Omega\). In addition, there exists a constant
% \(\bar c_V>0\) such that
% \begin{equation}
%     V^*(\bx)
%     \leq
%     \bar c_V\|\bx\|^2,
%     \qquad
%     \forall\bx\in\Omega .
% \label{eq:value_quadratic_upper_bound}
% \end{equation}
% \end{assumption}

% \begin{remark}
% \label{rem:value_upper_bound}
% Condition \eqref{eq:value_quadratic_upper_bound} is used only to
% translate measurable state powers into powers of the ideal value
% function. It is satisfied locally, for example, when
% \(\nabla V^*(\bm 0)=\bm 0\) and \(\nabla V^*\) is locally
% Lipschitz on a neighborhood of the origin. It does not require
% the value function itself to be quadratic.
% \end{remark}

The stationary conditions associated with
\eqref{eq:HJI_eq} are
\begin{equation}
    \nabla_{\bu}H=\bm 0,
    \qquad
    \nabla_{\ba}H=\bm 0,
    \qquad
    \nabla_{\bd}H=\bm 0.
\label{eq:H_stationarity}
\end{equation}
Using \eqref{eq:H_stationarity}, the minimizing control policy
satisfies
\begin{equation}
    2\lambda\mathbf R
    \tanh^{-1}
    \left(
        \frac{\bu^*}{\lambda}
    \right)
    +
    \mathbf g^\top(\bx)\nabla V^*(\bx)
    =
    \bm 0,
\label{eq:stationary_u}
\end{equation}
and is therefore
\begin{equation}
    \bu^*(\bx)
    =
    -\lambda
    \tanh
    \left(
        \frac{1}{2\lambda}
        \mathbf R^{-1}
        \mathbf g^\top(\bx)
        \nabla V^*(\bx)
    \right).
\label{eq:u_star}
\end{equation}
Consequently,
\begin{equation}
    |u_j^*(\bx)|<\lambda,
    \qquad
    j=1,\ldots,m.
\label{eq:optimal_input_constraint}
\end{equation}

The maximizing FDI and disturbance policies are
\begin{equation}
    \ba^*(\bx)
    =
    \frac{1}{2\gamma_a^2}
    \mathbf T^{-1}
    \mathbf g^\top(\bx)
    \nabla V^*(\bx),
\label{eq:a_star}
\end{equation}
and
\begin{equation}
    \bd^*(\bx)
    =
    \frac{1}{2\gamma_d^2}
    \mathbf S^{-1}
    \nabla V^*(\bx).
\label{eq:d_star}
\end{equation}

% \begin{remark}
% \label{rem:virtual_adversarial_policies}
% The policies \(\ba^*(\bx)\) and \(\bd^*(\bx)\) are the
% worst-case maximizing policies associated with the HJI game.
% They are not control commands applied by the secure controller.
% Their role is to characterize the worst-case attack and
% disturbance directions against which the minimizing policy
% \(\bu^*(\bx)\) is designed.
% \end{remark}

Substituting \eqref{eq:u_star}--\eqref{eq:d_star} into
\eqref{eq:H_def} gives the closed HJI identity
\begin{equation}
\begin{aligned}
    0
    ={}&
    Q(\bx)
    +U(\bu^*)
    +
    (\nabla V^*(\bx))^\top f(\bx)\\
    &+
    (\nabla V^*(\bx))^\top
    \mathbf g(\bx)\bu^*\\
    &+
    \frac{1}{4\gamma_a^2}
    (\nabla V^*(\bx))^\top
    \mathbf g(\bx)\mathbf T^{-1}
    \mathbf g^\top(\bx)
    \nabla V^*(\bx)\\
    &+
    \frac{1}{4\gamma_d^2}
    (\nabla V^*(\bx))^\top
    \mathbf S^{-1}
    \nabla V^*(\bx).
\end{aligned}
\label{eq:HJI_closed}
\end{equation}

\begin{proposition}
\label{prop:saddle}
Under Assumptions~\ref{ass:system_regular}, the triple
\((\bu^*,\ba^*,\bd^*)\), defined by
\eqref{eq:u_star}--\eqref{eq:d_star}, is the pointwise saddle
policy triple of the Hamiltonian. In particular,
\begin{equation}
\begin{aligned}
    H(\bx,\bu^*,\ba,\bd,\nabla V^*)
    &\leq
    H(\bx,\bu^*,\ba^*,\bd^*,\nabla V^*)\\
    &\leq
    H(\bx,\bu,\ba^*,\bd^*,\nabla V^*)
\end{aligned}
\label{eq:saddle_property}
\end{equation}
% for every \(\bu\in\mathcal U,
%     \quad
%     \ba\in\mathbb R^m,
%     \quad
%     \bd\in\mathbb R^n.\)

\end{proposition}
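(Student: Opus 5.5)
The plan is to exploit the separable structure of the Hamiltonian \eqref{eq:H_def}. For fixed \(\bx\) and fixed gradient \(\nabla V^*(\bx)\), it splits into three decoupled pieces plus the constant \(Q(\bx)+(\nabla V^*)^\top f(\bx)\):
\[
\phi_u(\bu)=U(\bu)+(\nabla V^*)^\top\mathbf g\bu,
\]
\[
\phi_a(\ba)=-\gamma_a^2\ba^\top\mathbf T\ba+(\nabla V^*)^\top\mathbf g\ba,
\]
\[
\phi_d(\bd)=-\gamma_d^2\bd^\top\mathbf S\bd+(\nabla V^*)^\top\bd .
\]
Because there are no cross terms between \(\bu\), \(\ba\) and \(\bd\), the saddle inequalities \eqref{eq:saddle_property} reduce to three separate facts:
\begin{itemize}
\item \(\bu^*\) globally minimizes \(\phi_u\) over \(\mathcal U\);
\item \(\ba^*\) globally maximizes \(\phi_a\) over \(\mathbb R^m\);
\item \(\bd^*\) globally maximizes \(\phi_d\) over \(\mathbb R^n\).
\end{itemize}
Indeed, fixing \(\bu=\bu^*\) and varying \((\ba,\bd)\) changes only \(\phi_a+\phi_d\), which gives the left inequality. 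Fixing \((\ba^*,\bd^*)\) and varying \(\bu\) changes only \(\phi_u\), which gives the right inequality. Throughout I will assume the standing weight conditions \(\mathbf R=\diag(r_j)>0\), \(\mathbf T=\mathbf T^\top>0\), \(\mathbf S=\mathbf S^\top>0\), which are implicit in the cost.

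\textbf{Adversarial channels.} The function \(\phi_a\) is a strictly concave quadratic, since its Hessian is \(-2\gamma_a^2\mathbf T<0\). Its unique stationary point is \(\ba^*\) in \eqref{eq:a_star}. Completing the square gives the exact identity
\[
\phi_a(\ba)=\phi_a(\ba^*)-\gamma_a^2(\ba-\ba^*)^\top\mathbf T(\ba-\ba^*).
\]
The same argument with \(\mathbf S\) and \(\bd^*\) in \eqref{eq:d_star} gives the analogous identity for \(\phi_d\). These identities yield the left inequality directly.

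\textbf{Control channel.} I would argue componentwise, since \(U\) is a sum of scalar terms and \(\mathbf R\) is diagonal. Write \(\sigma_j=(\mathbf g^\top\nabla V^*)_j\). Then
\[
\phi_u(\bu)=\sum_j\left[2\lambda r_j\int_0^{u_j}\tanh^{-1}(v/\lambda)\,dv+\sigma_j u_j\right].
\]
On \((-\lambda,\lambda)\) each scalar summand has second derivative \(2r_j/(1-u_j^2/\lambda^2)>0\), so it is strictly convex. It also tends to \(+\infty\) at the endpoints, because the integral of \(\tanh^{-1}\) grows faster than any linear term there. Its unique critical point solves \(2\lambda r_j\tanh^{-1}(u_j/\lambda)=-\sigma_j\), which is \eqref{eq:stationary_u}, and the solution is exactly \(u_j^*\) in \eqref{eq:u_star}. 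Strict convexity makes this critical point the global minimizer over \(\mathcal U\), so \(\phi_u(\bu)\ge\phi_u(\bu^*)\), with equality only at \(\bu^*\). This is the right inequality.

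\textbf{Main obstacle.} The only genuinely nonquadratic step is the control channel. There I must check that the minimizer lies in the open set \(\mathcal U\) and is not pushed to the boundary. This is handled by the range of \(\tanh\), since \(|u_j^*|<\lambda\) as in \eqref{eq:optimal_input_constraint}, together with strict convexity on the open interval. If a cleaner bound is wanted, I would instead write \(\phi_u(\bu)-\phi_u(\bu^*)\) as a Bregman divergence of the convex function \(U\):
\[
\phi_u(\bu)-\phi_u(\bu^*)=U(\bu)-U(\bu^*)-\nabla U(\bu^*)^\top(\bu-\bu^*)\ge0,
\]
using \(\nabla U(\bu^*)=-\mathbf g^\top\nabla V^*\). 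Assumption~\ref{ass:system_regular} is used only to make \(\mathbf g(\bx)\) well defined and bounded, so that the Hamiltonian is finite pointwise on \(\Omega\).
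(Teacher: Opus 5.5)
Your proof is correct and is essentially the argument the paper leaves implicit. The paper gives no separate proof; it only writes the first-order conditions \eqref{eq:H_stationarity} that lead to \eqref{eq:u_star}--\eqref{eq:d_star}. Your separable decomposition, the completed squares for \(\ba\) and \(\bd\), and strict convexity of the control term supply exactly the second-order verification the paper omits.

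You are also right to state \(\mathbf R=\diag(r_j)>0\), \(\mathbf T=\mathbf T^\top>0\), \(\mathbf S=\mathbf S^\top>0\) explicitly. These conditions are not contained in Assumption~\ref{ass:system_regular}. Without symmetry and definiteness of \(\mathbf T,\mathbf S\), the formulas \eqref{eq:a_star}--\eqref{eq:d_star} need not be maximizers at all.

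One claim in your control-channel step is false and should be removed: the scalar summand does not tend to \(+\infty\) as \(u_j\to\pm\lambda\). Indeed \(\int_0^{u}\tanh^{-1}(v/\lambda)\,dv=u\tanh^{-1}(u/\lambda)+\frac{\lambda}{2}\ln(1-u^2/\lambda^2)\to\lambda\ln 2\). So \(U\) stays bounded on \(\mathcal U\), with each summand tending to \(2\lambda^2 r_j\ln 2\). Only the derivative \(2\lambda r_j\tanh^{-1}(u_j/\lambda)\) diverges.

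The argument does not need coercivity, though. The derivative \(2\lambda r_j\tanh^{-1}(u_j/\lambda)+\sigma_j\) is a continuous, strictly increasing map of \((-\lambda,\lambda)\) onto \(\mathbb R\), so it has exactly one zero, namely \(u_j^*\). Convexity on the open interval, or your Bregman-divergence identity, then gives global minimality over \(\mathcal U\). With that sentence replaced, the proof is complete.
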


The origin is retained as the nominal regulation target and is not
excluded from the operating region \(\Omega\). For the subsequent
fixed-time analysis, choose \(r_->0\) such that
\(B_{r_-}(0)\subset\Omega\), consistently with the prescribed
practical terminal region, and define the nonterminal comparison set
\(\Omega^{r}:=\{x\in\Omega:\|x\|\ge r_-\}\).
If \(x=0\), the regulation objective has already been achieved,
whereas if \(0<\|x\|<r_-\), the trajectory already lies inside
the prescribed terminal neighborhood and no fixed-time entrance
estimate is required. Hence, the nontrivial convergence analysis is
needed only for \(\|x\|\ge r_-\), i.e., on
\(\Omega^{r}\). Since \(\Omega\) is compact and
\(\{x:\|x\|\ge r_-\}\) is closed,
\(\Omega^{r}\) is compact, with \(0\notin\Omega^{r}\) and
\(\inf_{x\in\Omega^{r}}\|x\|\ge r_->0\).
Consequently, the ratios
\(V^\ast(x)/\|x\|^2\) and
\(\|\nabla_xV^\ast(x)\|/\|x\|\) are well defined on
\(\Omega^{r}\), which is precisely the region required for the
fixed-time comparison analysis. If disturbances or FDI signals drive
the trajectory outside the terminal neighborhood, the same comparison
argument becomes applicable again once \(\|x\|\ge r_-\).
Thus, excluding the origin from \(\Omega^{r}\) is purely an analytical
device and does not remove the origin from either the HJI operating
domain or the admissible-policy formulation.

\begin{lemma}[Local bounds of the ideal HJI value function]
\label{lemm1}
Suppose that \(V^\ast\) is positive definite on \(\Omega\), satisfies \(V^\ast(0)=0\), and \(V^\ast\in C^1(\Omega\setminus\{0\})\). Then there exist a function \(\underline{\alpha}\in\mathcal K_\infty\) and constants \(c_V>0\) and \(c_{\nabla}>0\) such that \(\underline{\alpha}(\|x\|)\leq V^\ast(x)\leq c_V\|x\|^2\) and \(\|\nabla_xV^\ast(x)\|\leq c_{\nabla}\|x\|\) for all \(x\in\Omega^{r}\).
\label{lem:local_value_bounds}
\end{lemma}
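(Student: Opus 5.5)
The argument is pure compactness on \(\Omega^{r}\), which by construction is compact and excludes the origin, with \(\inf_{x\in\Omega^{r}}\|x\|\ge r_->0\). The three bounds are handled in turn: the two upper bounds come from continuity of suitable ratios on \(\Omega^{r}\), and the lower bound comes from a standard comparison-function construction using positive definiteness. I assume \(\Omega^{r}\neq\emptyset\); otherwise every claim is vacuous and any constants work.

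\emph{Upper bounds.} Since \(V^\ast\in C^1(\Omega\setminus\{0\})\), both \(V^\ast\) and \(\nabla_xV^\ast\) are continuous on \(\Omega^{r}\subset\Omega\setminus\{0\}\). On \(\Omega^{r}\) the denominators \(\|x\|^2\ge r_-^2\) and \(\|x\|\ge r_-\) are bounded away from zero. Hence the ratios \(V^\ast(x)/\|x\|^2\) and \(\|\nabla_xV^\ast(x)\|/\|x\|\) are continuous on the compact set \(\Omega^{r}\) and attain finite maxima there. I set \(c_V\) and \(c_{\nabla}\) to these maxima, replacing each by \(\max\{\cdot,1\}\) so that both constants are strictly positive. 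This gives \(V^\ast(x)\le c_V\|x\|^2\) and \(\|\nabla_xV^\ast(x)\|\le c_{\nabla}\|x\|\) on \(\Omega^{r}\).

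\emph{Lower bound: constructing \(\underline\alpha\).} Let \(R:=\max_{x\in\Omega}\|x\|<\infty\). For \(s\in[0,R]\), define
\begin{equation*}
\phi(s):=\inf\{V^\ast(x):x\in\Omega,\ s\le\|x\|\le R\}.
\end{equation*}
\begin{itemize}
\item \(\phi\) is nondecreasing, since increasing \(s\) shrinks the feasible set.
\item \(\phi(0)=0\), because \(x=0\) is feasible and \(V^\ast(0)=0\).
\item \(\phi(s)>0\) for \(s>0\). For \(s\ge r_-\), the feasible set lies in \(\Omega^{r}\), which is compact, and \(V^\ast\) is continuous and strictly positive there, so the infimum is attained and positive. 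For \(0<s<r_-\), the feasible set lies in \(\Omega\setminus\{0\}\) where \(V^\ast\) is only known to be continuous, but the lemma asserts the lower bound only for \(\|x\|\ge r_-\). So I will simply set \(\phi(s):=\phi(r_-)\,s/r_-\) on \([0,r_-]\), which is positive for \(s>0\) and does not exceed \(\phi(r_-)\).
\item Every \(x\in\Omega^{r}\) satisfies \(\phi(\|x\|)\le V^\ast(x)\) by definition of the infimum.
\end{itemize}

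\emph{Main obstacle: upgrading \(\phi\) to a \(\mathcal K_\infty\) function.} \(\phi\) need not be continuous or strictly increasing, and it lives only on \([0,R]\). The plan is:
\begin{enumerate}
\item Take the averaged function \(\psi(s)=\frac1s\int_0^s\phi(\tau)\,d\tau\), set \(\psi(0)=0\). It is continuous, satisfies \(\psi\le\phi\) by monotonicity, and is positive for \(s>0\).
\item Set \(\underline\alpha_0(s)=\psi(s)\,s/(1+s)\) to obtain strict increase.
\item For \(s>R\), continue linearly, e.g. \(\underline\alpha(s)=\underline\alpha_0(R)+(s-R)\), to obtain unboundedness.
\end{enumerate}
The result satisfies \(\underline\alpha\le\phi\) on \([0,R]\), so \(\underline\alpha\in\mathcal K_\infty\) and \(\underline\alpha(\|x\|)\le V^\ast(x)\) on \(\Omega^{r}\).

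I expect step 2 to be the delicate point: I need to verify that \(\psi\cdot\frac{s}{1+s}\) is strictly increasing. This requires \(\psi\) to be nondecreasing, which holds because averages of a nondecreasing function are nondecreasing. Strict increase then follows from the strictly increasing positive factor \(s/(1+s)\).
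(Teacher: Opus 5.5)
Your proof is correct, and your upper bounds are obtained exactly as in the paper: Weierstrass applied to the continuous ratios \(V^\ast(x)/\|x\|^2\) and \(\|\nabla_xV^\ast(x)\|/\|x\|\) on the compact set \(\Omega^{r}\). Your \(\max\{\cdot,1\}\) safeguard is a harmless improvement, since the paper's \(c_{\nabla}\) would be zero if \(\nabla_xV^\ast\) vanished on \(\Omega^{r}\).

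The lower bound is where you diverge. The paper observes that the same ratio \(V^\ast(x)/\|x\|^2\) is continuous and strictly positive on \(\Omega^{r}\), so its minimum \(\underline c_V\) is positive. It then takes \(\underline\alpha(s):=\underline c_Vs^2\), which is already in \(\mathcal K_\infty\). Your chain of steps also checks out: the annulus infimum \(\phi\), the linear patch on \([0,r_-]\), averaging, multiplication by \(s/(1+s)\), and linear extension beyond \(R\). The patched \(\phi\) is nondecreasing and positive on \((0,R]\), \(\psi\le\phi\) by monotonicity, and strict increase holds because \(\psi(s_1)>0\). But this is general-purpose machinery for sets that touch the origin, which is exactly what excising \(B_{r_-}(0)\) makes unnecessary.

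The paper's route buys brevity and an explicit quadratic sandwich \(\underline c_V\|x\|^2\le V^\ast(x)\le c_V\|x\|^2\). Yours, with the linear patch replaced by something like \(\min\{\phi(\tau),\tau\}\) inside the average, would extend the lower bound to all of \(\Omega\), where a quadratic lower bound can fail. The lemma does not need that extension, however.

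One small correction to your reasoning: for \(0<s<r_-\) the annulus is still compact and avoids the origin, so the raw infimum is positive there. The patch is actually needed for a different reason. \(V^\ast\) is not assumed continuous at \(0\), so the raw \(\phi(s)\) need not tend to \(0\) as \(s\downarrow0\), and that would break continuity of \(\psi\) at the origin.
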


\begin{proof}
By construction, \(\Omega^{r}\) is compact, \(0\notin\Omega^{r}\), and \(\inf_{x\in\Omega^{r}}\|x\|\geq r_->0\). Since \(V^\ast\in C^1(\Omega\setminus\{0\})\), the functions \(V^\ast(x)/\|x\|^2\) and \(\|\nabla_xV^\ast(x)\|/\|x\|\) are continuous on \(\Omega^{r}\). Hence, by the Weierstrass theorem, the finite constants \(c_V:=\max_{x\in\Omega^{r}}V^\ast(x)/\|x\|^2\) and \(c_{\nabla}:=\max_{x\in\Omega^{r}}\|\nabla_xV^\ast(x)\|/\|x\|\) exist, which directly yield \(V^\ast(x)\leq c_V\|x\|^2\) and \(\|\nabla_xV^\ast(x)\|\leq c_{\nabla}\|x\|\) on \(\Omega^{r}\). Moreover, since \(V^\ast(x)>0\) for every \(x\in\Omega^{r}\), continuity and compactness imply that \(\underline c_V:=\min_{x\in\Omega^{r}}V^\ast(x)/\|x\|^2>0\). Defining \(\underline{\alpha}(s):=\underline c_Vs^2\) gives \(\underline{\alpha}\in\mathcal K_\infty\) and \(\underline{\alpha}(\|x\|)\leq V^\ast(x)\) on \(\Omega^{r}\), which completes the proof.
\end{proof}

\begin{remark}[Cost-induced fixed-time structure]
\label{rem:cost_induced_fixed_time}
For the fixed-time design developed later, the general penalty
\(Q(\bx)\) is specialized as
\begin{equation}
    Q(\bx)
    =
    \bx^\top\mathbf Q_x\bx
    +
    \kappa_1\|\bx\|^{2\alpha}
    +
    \kappa_2\|\bx\|^{2\beta},
\label{eq:fixed_time_state_penalty}
\end{equation}
where
\begin{equation}
    \mathbf Q_x
    =
    \mathbf Q_x^\top>0,
    \qquad
    \kappa_1,\kappa_2>0,
    \qquad
    0<\alpha<1<\beta.
\label{eq:fixed_time_cost_parameters}
\end{equation}
This construction depends only on the measurable state and does
not introduce the unknown value function into the running cost.

\begin{remark}
\label{remark3}
Lemma~\ref{lem:local_value_bounds} is introduced solely to provide the local value-function and gradient bounds required for the subsequent fixed-time stability analysis.
\end{remark}

From Lemma \ref{lem:local_value_bounds},
\begin{equation}
    \|\bx\|^{2\alpha}
    \geq
    \bar c_V^{-\alpha}
    \big(V^*(\bx)\big)^\alpha,
    \qquad
    \|\bx\|^{2\beta}
    \geq
    \bar c_V^{-\beta}
    \big(V^*(\bx)\big)^\beta.
\label{eq:state_power_to_value_power}
\end{equation}
Along the ideal saddle-point trajectory,
\begin{equation}
\begin{aligned}
    \dot V^*(\bx)
    ={}&
    -Q(\bx)
    -U(\bu^*)\\
    &+
    \gamma_a^2
    (\ba^*(\bx))^\top
    \mathbf T\ba^*(\bx)\\
    &+
    \gamma_d^2
    (\bd^*(\bx))^\top
    \mathbf S\bd^*(\bx).
\end{aligned}
\label{eq:ideal_value_derivative}
\end{equation}
Since \(\Omega^r\) is compact and \(V^*\) is continuously
differentiable, the last two terms are bounded on \(\Omega^r\).
Hence there exists \(\Delta_H\geq0\) such that
\begin{equation}
\begin{aligned}
    &\gamma_a^2
    (\ba^*(\bx))^\top
    \mathbf T\ba^*(\bx)\\
    &\quad+
    \gamma_d^2
    (\bd^*(\bx))^\top
    \mathbf S\bd^*(\bx)
    \leq
    \Delta_H,
    \qquad
    \forall\bx\in\Omega^r.
\end{aligned}
\label{eq:ideal_adversarial_bound}
\end{equation}
Using \(U(\bu^*)\geq0\) and
\eqref{eq:state_power_to_value_power} in
\eqref{eq:ideal_value_derivative} gives
\begin{equation}
    \dot V^*(\bx)
    \leq
    -c_1\big(V^*(\bx)\big)^\alpha
    -c_2\big(V^*(\bx)\big)^\beta
    +\Delta_H,
\label{eq:ideal_fixed_time_value_inequality}
\end{equation}
where
\begin{equation}
    c_1
    =
    \kappa_1\bar c_V^{-\alpha},
    \qquad
    c_2
    =
    \kappa_2\bar c_V^{-\beta}.
\label{eq:ideal_fixed_time_coefficients}
\end{equation}
Thus, the two fixed-time powers arise directly from the
HJI-shaped state penalty rather than from an external assumption
on the unknown value function. In the learned closed loop,
critic-weight and neural-approximation errors produce additional
residual terms, which will be handled explicitly in
Section~\ref{sec:stability}.
\end{remark}

\begin{remark}[Computation of \(\bar c\)]
\label{rem:computable_cbar}
The constant \(\bar c\) in Lemma~\ref{lemm1} need not be evaluated from the unknown \(V^*\). All signals entering the relevant cost and finite-window quantities admit uniform bounds on \(\Omega^r\). For the primitive formulation in~\eqref{eq:plant}, Assumption~\ref{ass:signal_regular} gives \(\|\ba(t)\|\leq\bar a\) and \(\|\bd(t)\|\leq\bar d\). If the signals are generated by the ideal saddle policies, Lemma~~\ref{lemm1} , Assumption~\ref{ass:system_regular}, and compactness of \(\Omega^r\) give \(\bar g:=\max_{\bx\in\Omega^r}\|\bm g(\bx)\|<\infty\), \(\bar k:=\max_{\bx\in\Omega^r}\|\bm k(\bx)\|<\infty\), and \(\bar x_r:=\max_{\bx\in\Omega^r}\|\bx\|<\infty\), so that, from (14b)--(14c), \(\|\ba^*(\bx)\|\leq\|\mathbf T^{-1}\|\bar g c_{\nabla}\bar x_r/(2\gamma_a^2)=:\bar a^*<\infty\) and \(\|\bd^*(\bx)\|\leq\|\mathbf S^{-1}\|\bar k c_{\nabla}\bar x_r/(2\gamma_d^2)=:\bar d^*<\infty\). Likewise, since \(\|\nabla\bm\phi(\bx)\|\leq\bar\phi_g\) on \(\Omega^r\) and Lemma~\ref{lemm3} gives \(\|\hat{\bW}(t)\|\leq\bar W_{\hat{}}\), the learned saddle signals satisfy \(\|\hat{\ba}(t)\|\leq\|\mathbf T^{-1}\|\bar g\bar\phi_g\bar W_{\hat{}}/(2\gamma_a^2)=:\bar{\hat a}<\infty\) and \(\|\hat{\bd}(t)\|\leq\|\mathbf S^{-1}\|\bar k\bar\phi_g\bar W_{\hat{}}/(2\gamma_d^2)=:\bar{\hat d}<\infty\), while the saturation parameterization gives \(|\hat u_i(t)|<\bar u_i\).

Accordingly, let \(u_b\in\Psi(\Omega)\) be any certified admissible secure policy and let \(\bar J_b<\infty\) be a computable worst-case performance certificate satisfying \(\bar J_b\geq\sup_{\bx_0\in\Omega^r}\sup_{\ba,\bd}J(\bx_0;u_b,\ba,\bd)\), where the admissible \(\ba,\bd\) obey Assumptions~1 and~2. Since \(V^*(\bx)=\min_u\max_{\ba,\bd}J(\bx;u,\ba,\bd)\), one has \(V^*(\bx)\leq\sup_{\ba,\bd}J(\bx;u_b,\ba,\bd)\leq\bar J_b\) for every \(\bx\in\Omega^r\). Moreover, \(\|\bx\|\geq r_-\) on \(\Omega^r\). The operating domain \(\Omega\), and hence \(\Omega^r=\{\bx\in\Omega:\|\bx\|\geq r_-\}\), is prescribed a priori from the design specifications, physical operating limits, safety requirements, or the intended data-collection region, rather than inferred from the unknown drift dynamics. Hence
\[
\bar c
=
\max_{\bx\in\Omega^r}
\frac{V^*(\bx)}{\|\bx\|^2}
\leq
\frac{\bar J_b}{r_-^2}
=:
\bar c_b .
\]
Thus, the explicit certified number \(\bar c_b=\bar J_b/r_-^2\) may be used everywhere in place of the unknown exact \(\bar c\), and consequently \(\|\bx\|^{2\gamma_1}\geq\bar c_b^{-\gamma_1}(V^*(\bx))^{\gamma_1}\) and \(\|\bx\|^{2\nu}\geq\bar c_b^{-\nu}(V^*(\bx))^\nu\). The price is conservatism: a small prescribed radius \(r_-\) or a worst-case \(\bar J_b\) computed over a large operating domain can make \(\bar c_b\) large, thereby increasing the penalty or learning gains required by the predefined-time design and potentially producing more aggressive near-saturation transients, greater numerical stiffness and dynamic range, and tighter sampling and computational requirements in digital hardware. This is the deliberate tradeoff for obtaining an a-priori computable convergence certificate without knowing \(V^*\).
\end{remark}

Along the saddle-point trajectory,
\begin{equation}
\dot V^*(\bx)
+
\ell(\bx,\bu^*,\ba^*,\bd^*)
=
0.
\label{eq:differential_BI_star}
\end{equation}
Integrating \eqref{eq:differential_BI_star} over the reinforcement
interval $[t-\Delta T,t]$, where $\Delta T>0$, yields
\begin{equation}
\begin{aligned}
&V^*(\bx(t))
-
V^*(\bx(t-\Delta T))\\
&\quad+
\int_{t-\Delta T}^{t}
\ell\big(
\bx(\tau),
\bu^*(\tau),
\ba^*(\tau),
\bd^*(\tau)
\big)d\tau
=
0.
\end{aligned}
\label{eq:integral_BI_star}
\end{equation}

\begin{remark}
\label{rem:integral_BI_bridge}
Equation \eqref{eq:integral_BI_star} is the bridge from the secure
HJI equation to integral reinforcement learning. The differential
HJI identity contains the unknown drift \(f(\bx)\), whereas
\eqref{eq:integral_BI_star} only requires state increments,
implemented policies, and an integral of the measurable running
cost. Consequently, the critic can be trained without explicitly
evaluating \(f(\bx)\).
\end{remark}

% \begin{remark}
% \label{rem:secure_game_interpretation}
% The proposed formulation differs fundamentally from first
% designing a nominal HJB controller and subsequently checking its
% robustness. The FDI and disturbance channels are introduced as
% maximizing players from the beginning, while the control
% constraint is embedded in \(U(\bu)\). Therefore, the value
% function learned in the sequel is the value of a saturated secure
% HJI game rather than that of an unconstrained nominal control
% problem.
% \end{remark}

% ============================================================
\section{Critic-Only Fixed-Time Integral Reinforcement Learning}
\label{sec:fixed_time_irl}
% ============================================================

\begingroup
\setlength{\abovedisplayskip}{3pt}
\setlength{\belowdisplayskip}{3pt}
\setlength{\abovedisplayshortskip}{2pt}
\setlength{\belowdisplayshortskip}{2pt}

This section develops a critic-only fixed-time integral
reinforcement learning method for solving the saturated secure
HJI problem formulated in Section~\ref{sec:prelim_problem}.

% ------------------------------------------------------------
\subsection{Critic Approximation and Induced Policies}
\label{subsec:critic_approximation}
% ------------------------------------------------------------

% Let \(L\) denote the number of critic basis functions. For a
% sufficiently rich continuously differentiable basis
% \(\bphi:\Omega^r\to\mathbb R^L\), the ideal value function and its
% gradient can be represented on the compact set \(\Omega^r\) as
% \begin{equation}
% \begin{aligned}
%     V^*(\bx)
%     &=
%     \mathbf W^\top\bphi(\bx)
%     +
%     \varepsilon(\bx),\\
%     \nabla V^*(\bx)
%     &=
%     \big(\nabla\bphi(\bx)\big)^\top\mathbf W
%     +
%     \nabla\varepsilon(\bx),
% \end{aligned}
% \label{eq:value_NN_approximation}
% \end{equation}
% where

% \[
%     \mathbf W\in\mathbb R^L
% \]

% is the constant ideal critic-weight vector,
% \(\bphi(\bx)\in\mathbb R^L\) is the activation vector, and
% \(\varepsilon(\bx)\) is the value-function approximation error.
% The activation vector is selected such that

% \[
%     \bphi(\bm 0)=\bm 0.
% \]

% The critic estimate is defined by
% \begin{equation}
% \begin{aligned}
%     \hat V(\bx)
%     &=
%     \hat{\mathbf W}^\top\bphi(\bx),\\
%     \nabla\hat V(\bx)
%     &=
%     \big(\nabla\bphi(\bx)\big)^\top
%     \hat{\mathbf W},
% \end{aligned}
% \label{eq:value_critic_estimate}
% \end{equation}
% where \(\hat{\mathbf W}(t)\in\mathbb R^L\) is the adjustable
% critic-weight vector. Define the critic-weight estimation error as
% \begin{equation}
%     \widetilde{\mathbf W}
%     =
%     \hat{\mathbf W}
%     -
%     \mathbf W .
% \label{eq:critic_weight_error}
% \end{equation}

Let \(L\) denote the number of critic basis functions. For a sufficiently rich continuously differentiable basis \(\bphi:\Omega^r\to\mathbb R^L\), the ideal value function and its gradient can be represented on \(\Omega^r\) as \(V^*(\bx)=\mathbf W^\top\bphi(\bx)+\varepsilon(\bx)\) and \(\nabla V^*(\bx)=\big(\nabla\bphi(\bx)\big)^\top\mathbf W+\nabla\varepsilon(\bx)\), where \(\mathbf W\in\mathbb R^L\) is the constant ideal critic-weight vector, \(\bphi(\bx)\in\mathbb R^L\) is the activation vector, and \(\varepsilon(\bx)\) is the value-function approximation error. The activation vector is selected such that \(\bphi(\bm0)=\bm0\). The critic estimate and its gradient are defined as \(\hat V(\bx)=\hat{\mathbf W}^\top\bphi(\bx)\) and \(\nabla\hat V(\bx)=\big(\nabla\bphi(\bx)\big)^\top\hat{\mathbf W}\), where \(\hat{\mathbf W}(t)\in\mathbb R^L\) is the adjustable critic-weight vector. Define the critic-weight estimation error as \(\widetilde{\mathbf W}:=\hat{\mathbf W}-\mathbf W\).

% \begin{assumption}[Critic approximation bounds]
% \label{ass:critic_approximation}
% The ideal weight vector is finite. Moreover, there exist positive
% constants
% \[
%     W_M,\quad
%     \phi_M,\quad
%     \phi_{gM},\quad
%     \bar\varepsilon,\quad
%     \bar\varepsilon_g
% \]
% such that, for every \(\bx\in\Omega^r\),
% \begin{equation}
% \begin{aligned}
%     \|\mathbf W\|
%     &\leq W_M,\\
%     \|\bphi(\bx)\|
%     &\leq \phi_M,\\
%     \|\nabla\bphi(\bx)\|
%     &\leq \phi_{gM},
% \end{aligned}
% \label{eq:critic_basis_bounds}
% \end{equation}
% and
% \begin{equation}
% \begin{aligned}
%     |\varepsilon(\bx)|
%     &\leq \bar\varepsilon,\\
%     \|\nabla\varepsilon(\bx)\|
%     &\leq \bar\varepsilon_g.
% \end{aligned}
% \label{eq:critic_error_bounds}
% \end{equation}
% \end{assumption}

% \begin{assumption}[Critic approximation bounds]
% \label{ass:critic_approximation}
% The ideal weight vector is finite. Moreover, there exist positive constants
% \(\phi_M\), \(\phi_{gM}\), \(\bar\varepsilon\), and \(\bar\varepsilon_g\) such that, for every \(\bx\in\Omega^r\),
% \(\|\bphi(\bx)\|\le\phi_M\),
% \(\|\nabla\bphi(\bx)\|\le\phi_{gM}\),
% \(|\varepsilon(\bx)|\le\bar\varepsilon\), and
% \(\|\nabla\varepsilon(\bx)\|\le\bar\varepsilon_g\).
% \end{assumption}

\begin{assumption}[Critic approximation error]
\label{ass:critic_approximation}
There exist positive constants \(\bar\varepsilon\) and \(\bar\varepsilon_g\) such that, for every \(\bx\in\Omega^r\),
\(|\varepsilon(\bx)|\le\bar\varepsilon\) and
\(\|\nabla\varepsilon(\bx)\|\le\bar\varepsilon_g\).
\end{assumption}

\begin{lemma}[Bounded basis functions]
\label{lem:basis_bound}
Suppose the basis function vector \(\bphi(\bx)\) is continuously differentiable on the compact set \(\Omega^r\). Then there exist positive constants \(\phi_M\) and \(\phi_{gM}\) such that, for every \(\bx\in\Omega^r\),
\(\|\bphi(\bx)\|\le\phi_M\) and
\(\|\nabla\bphi(\bx)\|\le\phi_{gM}\).
\end{lemma}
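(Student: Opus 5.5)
The plan is a direct compactness argument of Weierstrass type, in the same spirit as the proof of Lemma~\ref{lem:local_value_bounds}. First, I will set up the continuity facts on \(\Omega^r\). Since \(\bphi\) is continuously differentiable on \(\Omega^r\) (understood, as usual, as \(C^1\) on an open neighborhood of \(\Omega^r\), or with one-sided derivatives up to the boundary), both \(\bphi:\Omega^r\to\mathbb R^L\) and its Jacobian \(\nabla\bphi:\Omega^r\to\mathbb R^{L\times n}\) are continuous maps.

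Next, I compose these maps with norms. The Euclidean norm on \(\mathbb R^L\) and the induced \(2\)-norm on \(\mathbb R^{L\times n}\) are continuous; indeed they are \(1\)-Lipschitz with respect to themselves. Hence the scalar functions \(\bx\mapsto\|\bphi(\bx)\|\) and \(\bx\mapsto\|\nabla\bphi(\bx)\|\) are continuous on \(\Omega^r\).

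I then use compactness. As established before Lemma~\ref{lem:local_value_bounds}, \(\Omega^r\) is compact, being the intersection of the compact set \(\Omega\) with the closed set \(\{\bx:\|\bx\|\ge r_-\}\). If \(\Omega^r\) is empty the claim holds vacuously with any positive constants. Otherwise, the Weierstrass extreme value theorem gives finite maxima, and I define
\[
\phi_M:=\max\Big\{1,\ \max_{\bx\in\Omega^r}\|\bphi(\bx)\|\Big\},\qquad
\phi_{gM}:=\max\Big\{1,\ \max_{\bx\in\Omega^r}\|\nabla\bphi(\bx)\|\Big\}.
\]
Taking the maximum with \(1\) guarantees strict positivity, which is needed because \(\bphi\) could in principle vanish identically on some component. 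The two bounds stated in the lemma then hold for every \(\bx\in\Omega^r\).

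No step here is a genuine obstacle. The only delicate point is the regularity of \(\nabla\bphi\) at boundary points of \(\Omega^r\), where differentiability must be interpreted properly. I would handle this by assuming that \(\bphi\) is \(C^1\) on an open set containing \(\Omega\); this is consistent with the basis being chosen with \(\bphi(\bm0)=\bm0\) on the whole operating region. Under that assumption, \(\nabla\bphi\) is continuous on the compact set \(\Omega^r\) and the argument goes through unchanged.
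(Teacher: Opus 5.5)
Your proposal is correct and follows essentially the same route as the paper: continuity of \(\bphi\) and \(\nabla\bphi\) on the compact set \(\Omega^r\) plus the Weierstrass extreme value theorem, with \(\phi_M\) and \(\phi_{gM}\) defined from the resulting maxima. Your extra step of taking the maximum with \(1\), along with your remarks on the empty set and boundary regularity, refines the paper's argument. The paper omits the maximum with \(1\), and that step is what actually guarantees the strict positivity claimed in the statement.
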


\begin{proof}
Since \(\bphi\in C^{1}(\Omega^r)\), both \(\bphi(\bx)\) and \(\nabla\bphi(\bx)\) are continuous on the compact set \(\Omega^r\). Hence, by the extreme value theorem,
\(\phi_M:=\max_{\bx\in\Omega^r}\|\bphi(\bx)\|<\infty\) and
\(\phi_{gM}:=\max_{\bx\in\Omega^r}\|\nabla\bphi(\bx)\|<\infty\).
Therefore, for every \(\bx\in\Omega^r\),
\(\|\bphi(\bx)\|\le\max_{\bz\in\Omega^r}\|\bphi(\bz)\|=\phi_M\) and
\(\|\nabla\bphi(\bx)\|\le\max_{\bz\in\Omega^r}\|\nabla\bphi(\bz)\|=\phi_{gM}\).
\end{proof}

% \begin{remark}
% Typical choices of the basis function vector include polynomial bases
% \(\bphi(\bx)=[1,\bx^\top,(\bx\otimes\bx)^\top,\ldots]^\top\),
% sigmoidal bases
% \(\phi_i(\bx)=\left(1+e^{-a_i^\top\bx-b_i}\right)^{-1}\),
% hyperbolic tangent bases
% \(\phi_i(\bx)=\tanh(a_i^\top\bx+b_i)\),
% radial basis functions
% \(\phi_i(\bx)=\exp\!\left(-\|\bx-c_i\|^2/(2\sigma_i^2)\right)\),
% and Fourier bases
% \(\phi_i(\bx)=\cos(a_i^\top\bx+b_i)\) or
% \(\phi_i(\bx)=\sin(a_i^\top\bx+b_i)\).
% \end{remark}

\begin{remark}
Typical choices of the basis function vector include polynomial bases
\(\bphi(\bx)=[1,\bx^\top,(\bx\otimes\bx)^\top,\ldots]^\top\),
sigmoidal bases
\(\phi_i(\bx)=(1+e^{-a_i^\top\bx-b_i})^{-1}\),
hyperbolic tangent bases
\(\phi_i(\bx)=\tanh(a_i^\top\bx+b_i)\),
radial basis functions
\(\phi_i(\bx)=\exp(-\|\bx-c_i\|^2/(2\sigma_i^2))\),
and Fourier bases
\(\phi_i(\bx)\in\{\sin(a_i^\top\bx+b_i),\,\cos(a_i^\top\bx+b_i)\}\).
Since these basis functions are continuous (and continuously differentiable) on the compact set \(\Omega^r\), Lemma~\ref{lem:basis_bound} guarantees the existence of finite constants \(\phi_M\) and \(\phi_{gM}\) by the Weierstrass extreme value theorem. In particular, sigmoid, hyperbolic tangent, radial basis, and Fourier functions are intrinsically bounded, whereas polynomial bases become bounded when restricted to the compact domain \(\Omega^r\).
\end{remark}

\eqref{eq:u_star}--\eqref{eq:d_star} gives
\begin{equation}
\begin{aligned}
    \bu^*(\bx)
    =
    -\lambda\tanh
    \bigg[
        \frac{1}{2\lambda}
        \mathbf R^{-1}\mathbf g^\top(\bx)
        \Big(
            \big(\nabla\bphi(\bx)\big)^\top\mathbf W
            +
            \nabla\varepsilon(\bx)
        \Big)
    \bigg],
\end{aligned}
\label{eq:ideal_control_NN}
\end{equation}
\begin{equation}
\begin{aligned}
    \ba^*(\bx)
    =
    \frac{1}{2\gamma_a^2}
    \mathbf T^{-1}\mathbf g^\top(\bx)
    \Big(
        \big(\nabla\bphi(\bx)\big)^\top\mathbf W
        +
        \nabla\varepsilon(\bx)
    \Big),
\end{aligned}
\label{eq:ideal_attack_NN}
\end{equation}
and
\begin{equation}
\begin{aligned}
    \bd^*(\bx)
    =
    \frac{1}{2\gamma_d^2}
    \mathbf S^{-1}
    \Big(
        \big(\nabla\bphi(\bx)\big)^\top\mathbf W
        +
        \nabla\varepsilon(\bx)
    \Big).
\end{aligned}
\label{eq:ideal_disturbance_NN}
\end{equation}

Because \(\mathbf W\) and \(\varepsilon(\bx)\) are unknown, the
implementable policies are generated from the estimated value
gradient 
\begin{equation}
\begin{aligned}
    \hat{\bu}(\bx)
    =
    -\lambda\tanh
    \left[
        \frac{1}{2\lambda}
        \mathbf R^{-1}\mathbf g^\top(\bx)
        \big(\nabla\bphi(\bx)\big)^\top
        \hat{\mathbf W}
    \right],
\end{aligned}
\label{eq:learned_control_policy}
\end{equation}
\begin{equation}
\begin{aligned}
    \hat{\ba}(\bx)
    =
    \frac{1}{2\gamma_a^2}
    \mathbf T^{-1}\mathbf g^\top(\bx)
    \big(\nabla\bphi(\bx)\big)^\top
    \hat{\mathbf W},
\end{aligned}
\label{eq:learned_attack_policy}
\end{equation}
and
\begin{equation}
\begin{aligned}
    \hat{\bd}(\bx)
    =
    \frac{1}{2\gamma_d^2}
    \mathbf S^{-1}
    \big(\nabla\bphi(\bx)\big)^\top
    \hat{\mathbf W}.
\end{aligned}
\label{eq:learned_disturbance_policy}
\end{equation}

Since the hyperbolic tangent in
\eqref{eq:learned_control_policy} is applied componentwise,
\begin{equation}
    |\hat u_j(\bx)|
    <
    \lambda,
    \qquad
    j=1,\ldots,m,
\label{eq:learned_input_bound}
\end{equation}
for every finite \(\hat{\mathbf W}\) and every
\(\bx\in\Omega^r\).

\subsection{Integral Bellman--Isaacs Residual}
\label{subsec:integral_BI_residual}
% ------------------------------------------------------------

Let \(\Delta T>0\) denote the reinforcement-interval length.
Define the ideal finite-window running-cost integral
\begin{equation}
\begin{aligned}
\mathcal R^*(t)
=
\int_{t-\Delta T}^{t}
\ell\big(
\bx(\tau),
\bu^*(\tau),
\ba^*(\tau),
\bd^*(\tau)
\big)d\tau .
\end{aligned}
\label{eq:ideal_window_cost}
\end{equation}

The integral Bellman--Isaacs identity
\eqref{eq:integral_BI_star} can then be written as
\begin{equation}
V^*(\bx(t))
-
V^*(\bx(t-\Delta T))
+
\mathcal R^*(t)
=
0.
\label{eq:exact_integral_BI_identity}
\end{equation}

\begin{remark}[Measured-trajectory mismatch]
\label{rem:measured_trajectory_mismatch}
Equation~\eqref{eq:exact_integral_BI_identity} is exact when the state transition over \([t-\Delta T,t]\) is generated by the ideal saddle vector field. In the learning implementation, however, the available transition is the measured one generated by the current policies. Let \(\dot{\bx}^{\,*}(\tau)\) denote the ideal saddle vector field evaluated at \(\bx(\tau)\) and define \(\eta_{\mathrm{tr}}(t):=\int_{t-\Delta T}^{t}\nabla V^{*\top}(\bx(\tau))[\dot{\bx}(\tau)-\dot{\bx}^{\,*}(\tau)]d\tau\). Hence, along the measured transition, \(V^*(\bx(t))-V^*(\bx(t-\Delta T))+\mathcal R^*(t)-\eta_{\mathrm{tr}}(t)=0\). In particular, \(\eta_{\mathrm{tr}}(t)=0\) whenever the measured and ideal saddle evolutions coincide over the entire reinforcement interval.
\end{remark}

Define the current feature increment
\begin{equation}
\Delta\bphi(t)
=
\bphi(\bx(t))
-
\bphi(\bx(t-\Delta T)).
\label{eq:current_feature_increment}
\end{equation}

The running-cost datum generated by the current policy triple is
\begin{equation}
\begin{aligned}
\widehat{\mathcal R}(t)
=
\int_{t-\Delta T}^{t}
\ell\big(
\bx(\tau),
\hat{\bu}(\tau),
\hat{\ba}(\tau),
\hat{\bd}(\tau)
\big)d\tau .
\end{aligned}
\label{eq:implemented_window_cost}
\end{equation}

The current integral Bellman--Isaacs residual is selected as
\begin{equation}
\xi(t)
=
\hat{\mathbf W}^\top(t)
\Delta\bphi(t)
+
\widehat{\mathcal R}(t).
\label{eq:current_integral_BI_residual}
\end{equation}

Unlike a pointwise Hamiltonian residual,
\eqref{eq:current_integral_BI_residual} does not contain
\(\dot{\bx}\), \(f(\bx)\), or
\((\nabla\hat V)^\top f(\bx)\). Its evaluation only requires the
state values at the two endpoints of the reinforcement interval
and the integral of the running game cost over that interval.

For later analysis, define
\begin{equation}
\Delta\varepsilon(t)
=
\varepsilon(\bx(t))
-
\varepsilon(\bx(t-\Delta T)).
\label{eq:approximation_error_increment}
\end{equation}

Along the measured trajectory, Remark~\ref{rem:measured_trajectory_mismatch} gives
\begin{equation}
\mathbf W^\top\Delta\bphi(t)
+
\Delta\varepsilon(t)
+
\mathcal R^*(t)
-
\eta_{\mathrm{tr}}(t)
=
0.
\label{eq:ideal_weight_integral_identity}
\end{equation}

Since \(\hat{\mathbf W}=\mathbf W+\widetilde{\mathbf W}\),
substitution of \eqref{eq:ideal_weight_integral_identity} into
\eqref{eq:current_integral_BI_residual} yields
\begin{equation}
\xi(t)
=
\widetilde{\mathbf W}^\top(t)
\Delta\bphi(t)
+
\varepsilon_B(t),
\label{eq:current_residual_decomposition}
\end{equation}
where
\begin{equation}
\begin{aligned}
\varepsilon_B(t)
=
\widehat{\mathcal R}(t)
-
\mathcal R^*(t)
-
\Delta\varepsilon(t)
+
\eta_{\mathrm{tr}}(t).
\end{aligned}
\label{eq:current_BI_approximation_error}
\end{equation}

The term \(\varepsilon_B(t)\) collects the value-function
approximation error over the reinforcement interval together with
the difference between the implemented and ideal policy costs and
the measured-to-saddle trajectory mismatch.

\begin{lemma}[Local continuity of the critic-weight trajectory]
\label{lemm3}
Let \(\hat{\mathbf W}(\cdot)\) be a maximal Carath\'eodory solution of the critic
update law in \eqref{eq:fixed_time_critic_update}. Then
\(\hat{\mathbf W}\in AC_{\mathrm{loc}}\), and hence
\(\hat{\mathbf W}\in C^0\), on its maximal interval of existence. In particular,
for every compact interval \([t_1,t_2]\) contained in that interval,
there exists a finite constant \(\bar W_{[t_1,t_2]}>0\) such that
\(\|\hat{\mathbf W}(\tau)\|\leq\bar W_{[t_1,t_2]}\) for all
\(\tau\in[t_1,t_2]\).
\label{lem:critic_weight_continuity}
\end{lemma}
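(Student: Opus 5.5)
The argument uses only the definition of a Carath\'eodory solution and the continuity of absolutely continuous functions on compact intervals; the specific form of the update law \eqref{eq:fixed_time_critic_update} matters only through the local integrability of its right-hand side.

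\textbf{Step 1: absolute continuity from the integral form.} I would write the update law as \(\dot{\hat{\mathbf W}}=F(t,\hat{\mathbf W})\). By definition, a Carath\'eodory solution on its maximal interval \(I\) satisfies, for a fixed \(t_0\in I\) and every \(t\in I\),
\[
\hat{\mathbf W}(t)=\hat{\mathbf W}(t_0)+\int_{t_0}^{t}F\big(\tau,\hat{\mathbf W}(\tau)\big)\,d\tau,
\]
with \(\tau\mapsto F(\tau,\hat{\mathbf W}(\tau))\) in \(L^1_{\mathrm{loc}}(I)\). By the fundamental theorem of calculus for Lebesgue integrals, the indefinite integral of an \(L^1\) function on a compact interval is absolutely continuous. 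Applying this to each compact \([t_1,t_2]\subset I\) gives \(\hat{\mathbf W}\in AC_{\mathrm{loc}}(I)\). Absolute continuity implies continuity, so \(\hat{\mathbf W}\in C^0(I)\).

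\textbf{Step 2: the bound.} The map \(\tau\mapsto\|\hat{\mathbf W}(\tau)\|\) is the composition of a continuous function with the norm, so it is continuous on the compact interval \([t_1,t_2]\). By the Weierstrass extreme value theorem it attains a finite maximum \(M\). Setting \(\bar W_{[t_1,t_2]}:=\max\{M,1\}\) makes the constant strictly positive, as the statement requires, and gives \(\|\hat{\mathbf W}(\tau)\|\le\bar W_{[t_1,t_2]}\) on \([t_1,t_2]\).

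\textbf{Main obstacle.} The only substantive point is confirming the local integrability of \(\tau\mapsto F(\tau,\hat{\mathbf W}(\tau))\) along the solution, which is part of what ``Carath\'eodory solution'' means.

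If one also wants to show that such a solution exists, the fixed-time update presumably contains fractional-power terms such as \(\|\cdot\|^{\rho-1}(\cdot)\), which are continuous but may fail to be Lipschitz at zero. The data terms \(\Delta\bphi\), \(\widehat{\mathcal R}\) and the stored replay samples are measurable in \(t\) and bounded on compact intervals while \(\bx\in\Omega\), by Lemma~\ref{lem:basis_bound} and Assumptions~\ref{ass:signal_regular}--\ref{ass:system_regular}. Together these give the Carath\'eodory measurability and local-boundedness conditions, so local integrability along any continuous trajectory follows.

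The lemma, however, assumes a solution is given, so existence is not needed. Uniqueness is also not needed.
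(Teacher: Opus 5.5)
Your proof is correct and follows essentially the same route as the paper: write the solution in integral form with a locally integrable right-hand side, conclude \(\hat{\mathbf W}\in AC_{\mathrm{loc}}\subset C^0\), and apply the Weierstrass theorem on \([t_1,t_2]\). Your observation that local integrability is already part of the definition of a Carath\'eodory solution is cleaner than the paper's appeal to continuity of the signed-power maps. Also, your choice \(\max\{M,1\}\) secures the strict positivity of \(\bar W_{[t_1,t_2]}\) required by the statement, which the paper's choice \(\bar W_{[t_1,t_2]}=\max_{\tau\in[t_1,t_2]}\|\hat{\mathbf W}(\tau)\|\) does not guarantee.
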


\begin{proof}
Denote the right-hand side of \eqref{eq:fixed_time_critic_update} by
\(\mathcal G_W\), so that \(\dot{\hat{\mathbf W}}=\mathcal G_W\).
The signed-power mappings
\(|s|^q\operatorname{sgn}(s)\) and
\(|s|^r\operatorname{sgn}(s)\), with \(0<q<1<r\), are continuous,
including at \(s=0\). Together with the continuity of the normalized
regressors and residuals entering \eqref{eq:fixed_time_critic_update}, this makes
\(\mathcal G_W\) locally integrable along every maximal
Carath\'eodory solution. Hence
\(\hat{\mathbf W}(t)=\hat{\mathbf W}(t_0)+\int_{t_0}^{t}\mathcal G_W(\tau)\,d\tau\),
which implies \(\hat{\mathbf W}\in AC_{\mathrm{loc}}\subset C^0\).
Therefore, on every compact interval \([t_1,t_2]\), the Weierstrass
theorem gives
\(\bar W_{[t_1,t_2]}:=
\max_{\tau\in[t_1,t_2]}\|\hat{\mathbf W}(\tau)\|<\infty\).
\end{proof}

\begin{lemma}[Bounded finite-window implemented running cost]
\label{lemma4}
Suppose that the learning trajectory remains in the prescribed compact
operating region \(\Omega^r\), and let \(\hat{\mathbf W}\) evolve according to
\eqref{eq:fixed_time_critic_update}. Then, for every fixed
\(t\geq\Delta T\) such that \([t-\Delta T,t]\) is contained in the
maximal interval of existence, there exist finite constants
\(\bar\ell_t>0\), \(\bar\ell_t^*>0\), and
\(\bar\eta_{\mathrm{tr},t}>0\) such that
\(|\ell(x(\tau),\hat u(\tau),\hat a(\tau),\hat d(\tau))|
\leq\bar\ell_t\),
\(|\ell(x(\tau),u^*(\tau),a^*(\tau),d^*(\tau))|
\leq\bar\ell_t^*\), and
\(|\eta_{\mathrm{tr}}(t)|\leq\bar\eta_{\mathrm{tr},t}\).
Consequently,
\(|\hat{\mathcal R}(t)|\leq\Delta T\,\bar\ell_t\) and
\(|\mathcal R^*(t)|\leq\Delta T\,\bar\ell_t^*\).
For the finite replay stack, the corresponding bounds over all recorded
windows can be dominated by finite common constants
\(\bar\ell\), \(\bar\ell^*\), and \(\bar\eta_{\mathrm{tr}}\).
\label{lem:finite_window_running_cost}
\end{lemma}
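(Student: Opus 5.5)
The proof rests on continuity on a compact set, in the same way as Lemma~\ref{lem:basis_bound} and Lemma~\ref{lem:critic_weight_continuity}. Fix \(t\geq\Delta T\) with \([t-\Delta T,t]\) inside the maximal interval of existence. The running cost is \(\ell(\bx,\bu,\ba,\bd)=Q(\bx)+U(\bu)-\gamma_a^2\ba^\top\mathbf T\ba-\gamma_d^2\bd^\top\mathbf S\bd\), and we bound each of its four terms separately along the window.

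\textbf{Bounding the implemented cost.} By Lemma~\ref{lem:critic_weight_continuity}, \(\|\hat{\mathbf W}(\tau)\|\leq\bar W_{[t-\Delta T,t]}\) on the window. Lemma~\ref{lem:basis_bound} gives \(\|\nabla\bphi\|\leq\phi_{gM}\) on \(\Omega^r\), and Assumption~\ref{ass:system_regular} gives \(\|\mathbf g\|\leq\bar g\). Hence \eqref{eq:learned_attack_policy}--\eqref{eq:learned_disturbance_policy} give explicit bounds on \(\|\hat\ba\|\) and \(\|\hat\bd\|\), so the two quadratic adversarial terms are bounded by \(\gamma_a^2\lambda_{\max}(\mathbf T)\|\hat\ba\|^2\) and \(\gamma_d^2\lambda_{\max}(\mathbf S)\|\hat\bd\|^2\). \(Q\) is continuous, so it is bounded on compact \(\Omega^r\). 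The delicate term is \(U(\hat\bu)\), since the integrand \(\tanh^{-1}(v/\lambda)\) blows up as \(|v|\to\lambda\). To handle it, note that the argument \(z_j\) of \(\tanh\) in \eqref{eq:learned_control_policy} satisfies \(|z_j|\leq\bar z:=\|\mathbf R^{-1}\|\bar g\phi_{gM}\bar W_{[t-\Delta T,t]}/(2\lambda)\). Therefore \(|\hat u_j|\leq\lambda\tanh(\bar z)<\lambda\) uniformly on the window. On \([-\lambda\tanh\bar z,\lambda\tanh\bar z]\) the integrand is continuous and bounded, so \(U(\hat\bu)\leq 2\sum_j\lambda r_j\,\lambda\tanh(\bar z)\,\bar z\). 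Summing the four bounds gives \(\bar\ell_t\), and integrating over a window of length \(\Delta T\) yields \(|\widehat{\mathcal R}(t)|\leq\Delta T\,\bar\ell_t\).

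\textbf{Bounding the ideal cost.} The same estimates apply, but the ideal policies are driven by \(\nabla V^*=(\nabla\bphi)^\top\mathbf W+\nabla\varepsilon\). Lemma~\ref{lem:local_value_bounds} gives \(\|\nabla V^*\|\leq c_\nabla\bar x_r\) with \(\bar x_r=\max_{\Omega^r}\|\bx\|\); alternatively one can combine \(\|\mathbf W\|\) with \(\bar\varepsilon_g\) from Assumption~\ref{ass:critic_approximation}. Either way \(\ba^*\) and \(\bd^*\) are bounded. The argument of \(\tanh\) in \eqref{eq:u_star} is bounded by some \(\bar z^*\), so \(U(\bu^*)\) is bounded exactly as before. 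This gives \(\bar\ell_t^*\) and \(|\mathcal R^*(t)|\leq\Delta T\,\bar\ell_t^*\).

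\textbf{Bounding the transition mismatch.} We have \(|\eta_{\mathrm{tr}}(t)|\leq\Delta T\,c_\nabla\bar x_r\sup_\tau\|\dot\bx-\dot\bx^{\,*}\|\). The difference of vector fields is \(\mathbf g(\hat\bu-\bu^*+\ba-\ba^*)+(\bd-\bd^*)\); the drift \(f\) cancels, which matters since \(f\) is unknown. Each piece is bounded: \(\|\hat\bu\|,\|\bu^*\|<\sqrt m\lambda\), the applied \(\ba,\bd\) lie in \(L_\infty\) by Assumption~\ref{ass:signal_regular}, and \(\ba^*,\bd^*\) were bounded above. If the ideal field is read as the full saddle field, \(f\) cancels in the same way, and otherwise continuity of \(f\) on \(\Omega^r\) bounds it. This gives \(\bar\eta_{\mathrm{tr},t}\).

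\textbf{Replay stack and main obstacle.} For the finite replay stack we take the maximum of the finitely many window constants \(\bar\ell_{t_k},\bar\ell^*_{t_k},\bar\eta_{\mathrm{tr},t_k}\), which is finite. The main obstacle is the uniform bound on \(U(\hat\bu)\) near saturation. This is handled by bounding the \(\tanh\) argument, so that \(|\hat u_j|\) stays uniformly away from \(\lambda\), rather than relying on the strict inequality \eqref{eq:learned_input_bound} alone.
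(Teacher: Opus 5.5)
Your proof is correct and follows essentially the paper's route: window-wise boundedness of \(\hat{\mathbf W}\) from Lemma~\ref{lem:critic_weight_continuity}, compactness bounds on \(\mathbf g\), \(\nabla\bphi\), and \(\nabla V^*\), boundedness of the measured and ideal vector fields for \(\eta_{\mathrm{tr}}\), and a finite maximum over the replay stack. The only difference is that the paper obtains \(\bar\ell_t\) by applying the Weierstrass theorem to the continuous map \(\tau\mapsto\ell(\bx(\tau),\hat{\bu}(\tau),\hat{\ba}(\tau),\hat{\bd}(\tau))\) on the compact window; this already handles the near-saturation issue you flag, because the continuous image of the window under \(\hat{\bu}\) is a compact subset of the open box \(\mathcal U\), whereas your bound \(|\hat u_j|\le\lambda\tanh\bar z\) makes the constant explicit.
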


\begin{proof}
By Lemma~\ref{lem:critic_weight_continuity},
\(\hat{\mathbf W}\) is continuous and therefore bounded on the compact interval
\([t-\Delta T,t]\). Since \(x(\tau)\in\Omega^r\), with \(\Omega^r\)
compact, and \(g(x)\) and \(\nabla\phi(x)\) are continuous on
\(\Omega^r\), they are bounded on the same interval. The hyperbolic-tangent structure gives
\(|\hat u_j(\tau)|<\lambda\), \(j=1,\ldots,m\), while
\eqref{eq:learned_attack_policy}--\eqref{eq:learned_disturbance_policy} and the finite-window boundedness of
\(\hat{\mathbf W}\) imply that \(\hat a(\tau)\) and \(\hat d(\tau)\) are also
bounded on \([t-\Delta T,t]\). Since the running cost
\(\ell(x,\hat u,\hat a,\hat d)\) is continuous in its arguments,
\(\tau\mapsto\ell(x(\tau),\hat u(\tau),\hat a(\tau),\hat d(\tau))\)
is continuous on the compact interval \([t-\Delta T,t]\).
Therefore, the Weierstrass theorem yields
\(\bar\ell_t:=
\max_{\tau\in[t-\Delta T,t]}
|\ell(x(\tau),\hat u(\tau),\hat a(\tau),\hat d(\tau))|<\infty\).

Likewise, the ideal saddle policies are continuous on the compact
operating region and hence bounded. Therefore,
\(\ell(x,u^*,a^*,d^*)\) is continuous on the same finite window and admits
\(\bar\ell_t^*:=
\max_{\tau\in[t-\Delta T,t]}
|\ell(x(\tau),u^*(\tau),a^*(\tau),d^*(\tau))|<\infty\).
Moreover, compactness of \(\Omega^r\), boundedness of
\(\nabla V^*\), and boundedness of the measured and ideal saddle vector
fields imply that
\(\tau\mapsto\nabla V^{*\top}(\bx(\tau))
[\dot{\bx}(\tau)-\dot{\bx}^{\,*}(\tau)]\)
is bounded on \([t-\Delta T,t]\), and thus
\(|\eta_{\mathrm{tr}}(t)|<\infty\).

It follows directly that
\(|\hat{\mathcal R}(t)|
\leq\int_{t-\Delta T}^{t}
|\ell(x(\tau),\hat u(\tau),\hat a(\tau),\hat d(\tau))|\,d\tau
\leq\Delta T\,\bar\ell_t\)
and
\(|\mathcal R^*(t)|
\leq\Delta T\,\bar\ell_t^*\).
Since the replay stack contains only finitely many stored windows, the
corresponding finite maxima admit finite common upper bounds
\(\bar\ell\), \(\bar\ell^*\), and \(\bar\eta_{\mathrm{tr}}\).
\end{proof}

\begin{lemma}[Bounded integral Bellman--Isaacs approximation
residual]
\label{lem:bounded_integral_residual}
Under Assumptions~\ref{ass:critic_approximation}, and
Lemma~\ref{lemma4}, the approximation residual
\(\varepsilon_B(t)\) in
\eqref{eq:current_BI_approximation_error} is uniformly bounded.
In particular,
\begin{equation}
|\varepsilon_B(t)|
\leq
\bar\varepsilon_B,
\label{eq:current_BI_error_bound}
\end{equation}
where one admissible bound is
\begin{equation}
\bar\varepsilon_B
=
2\bar\varepsilon
+
\Delta T(\bar\ell+\bar\ell^*)
+
\bar\eta_{\mathrm{tr}}.
\label{eq:explicit_BI_error_bound}
\end{equation}
\end{lemma}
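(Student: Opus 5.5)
The bound follows from the triangle inequality applied termwise to the four-term decomposition of \(\varepsilon_B(t)\) in \eqref{eq:current_BI_approximation_error}, with each term controlled by Assumption~\ref{ass:critic_approximation} or Lemma~\ref{lemma4}. Concretely, I would write
\begin{equation*}
|\varepsilon_B(t)|
\leq
|\widehat{\mathcal R}(t)|
+
|\mathcal R^*(t)|
+
|\Delta\varepsilon(t)|
+
|\eta_{\mathrm{tr}}(t)|,
\end{equation*}
and then bound the four contributions separately.

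For the approximation-error increment \eqref{eq:approximation_error_increment}, both endpoints \(\bx(t)\) and \(\bx(t-\Delta T)\) lie in \(\Omega^r\) under the standing hypothesis that the learning trajectory remains in the compact operating region. Assumption~\ref{ass:critic_approximation} then gives
\begin{equation*}
|\Delta\varepsilon(t)|\leq|\varepsilon(\bx(t))|+|\varepsilon(\bx(t-\Delta T))|\leq 2\bar\varepsilon .
\end{equation*}
For the two running-cost integrals, Lemma~\ref{lemma4} gives \(|\widehat{\mathcal R}(t)|\leq\Delta T\,\bar\ell_t\) and \(|\mathcal R^*(t)|\leq\Delta T\,\bar\ell_t^*\). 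The same lemma also gives \(|\eta_{\mathrm{tr}}(t)|\leq\bar\eta_{\mathrm{tr},t}\). Summing these yields a window-dependent bound
\begin{equation*}
2\bar\varepsilon+\Delta T(\bar\ell_t+\bar\ell_t^*)+\bar\eta_{\mathrm{tr},t}.
\end{equation*}

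The step I expect to need care is passing from these window-dependent constants to the uniform constants \(\bar\ell\), \(\bar\ell^*\), \(\bar\eta_{\mathrm{tr}}\) appearing in \eqref{eq:explicit_BI_error_bound}. Lemma~\ref{lemma4} supplies such common dominating constants over the finitely many windows of the replay stack, because a maximum over finitely many finite numbers is finite. I would therefore state that the bound holds uniformly over the current window and all recorded windows, with \(\bar\ell:=\max\bar\ell_t\), \(\bar\ell^*:=\max\bar\ell^*_t\), and \(\bar\eta_{\mathrm{tr}}:=\max\bar\eta_{\mathrm{tr},t}\) taken over the relevant finite collection.

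This gives the claimed \(\bar\varepsilon_B=2\bar\varepsilon+\Delta T(\bar\ell+\bar\ell^*)+\bar\eta_{\mathrm{tr}}\). Uniformity in \(t\) beyond a finite family of windows is exactly as strong as the uniformity supplied by Lemma~\ref{lemma4}, namely the dominating constants, and I would not claim more than that.
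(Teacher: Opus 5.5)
Your proposal is the paper's proof. It uses the same termwise triangle inequality on \eqref{eq:current_BI_approximation_error}, takes $2\bar\varepsilon$ from Assumption~\ref{ass:critic_approximation}, and takes $\Delta T(\bar\ell+\bar\ell^*)+\bar\eta_{\mathrm{tr}}$ from Lemma~\ref{lemma4}. Your caveat on uniformity in $t$ is accurate and applies equally to the paper, which simply inserts the common constants of Lemma~\ref{lemma4}; those are guaranteed only over the finitely many replay windows. A bound uniform over all current windows would additionally need an a priori uniform bound on $\hat{\mathbf W}(t)$, since $\hat{\ba}$ and $\hat{\bd}$ are linear in $\hat{\mathbf W}$.
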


\begin{proof}
From Assumption~\ref{ass:critic_approximation},
\begin{equation}
\begin{aligned}
|\Delta\varepsilon(t)|
&=
\left|
\varepsilon(\bx(t))
-
\varepsilon(\bx(t-\Delta T))
\right|\\
&\leq
|\varepsilon(\bx(t))|
+
|\varepsilon(\bx(t-\Delta T))|\\
&\leq
2\bar\varepsilon.
\end{aligned}
\label{eq:epsilon_increment_bound}
\end{equation}
Moreover, Lemma~\ref{lemma4} gives
\begin{equation}
\begin{aligned}
\left|
\widehat{\mathcal R}(t)
-
\mathcal R^*(t)
\right|
&\leq
\int_{t-\Delta T}^{t}
\left|
\ell(\bx,\hat{\bu},\hat{\ba},\hat{\bd})
\right|d\tau\\
&\quad+
\int_{t-\Delta T}^{t}
\left|
\ell(\bx,\bu^*,\ba^*,\bd^*)
\right|d\tau\\
&\leq
\Delta T(\bar\ell+\bar\ell^*).
\end{aligned}
\label{eq:window_cost_difference_bound}
\end{equation}
Together with
\(|\eta_{\mathrm{tr}}(t)|\leq\bar\eta_{\mathrm{tr}}\),
combining \eqref{eq:current_BI_approximation_error},
\eqref{eq:epsilon_increment_bound}, and
\eqref{eq:window_cost_difference_bound} proves
\eqref{eq:current_BI_error_bound}.
\end{proof}
\subsection{Experience Replay and Finite Data Informativity}
\label{subsec:experience_replay}
% ------------------------------------------------------------

To preserve informative learning directions, collect a finite
history stack
\begin{equation}
    \mathcal H
    =
    \left\{
        \big(
            \Delta\bphi_k,
            \widehat{\mathcal R}_k
        \big)
    \right\}_{k=1}^{N},
\label{eq:history_stack}
\end{equation}
where \(N\) is the number of stored samples and
\begin{equation}
    \Delta\bphi_k
    =
    \bphi(\bx(t_k))
    -
    \bphi(\bx(t_k-\Delta T)),
\label{eq:stored_feature_increment}
\end{equation}
\begin{equation}
\begin{aligned}
    \widehat{\mathcal R}_k
    =
    \int_{t_k-\Delta T}^{t_k}
    r\big(
        \bx(\tau),
        \hat{\bu}(\tau),
        \hat{\ba}(\tau),
        \hat{\bd}(\tau)
    \big)d\tau .
\end{aligned}
\label{eq:stored_window_cost}
\end{equation}
Once inserted into the history stack,
\(\Delta\bphi_k\) and \(\widehat{\mathcal R}_k\) are held fixed.

The replayed integral Bellman--Isaacs residual associated with
the \(k\)th stored datum is evaluated using the current critic
weight:
\begin{equation}
    \xi_k(t)
    =
    \hat{\mathbf W}^\top(t)\Delta\bphi_k
    +
    \widehat{\mathcal R}_k,
    \qquad
    k=1,\ldots,N.
\label{eq:replayed_integral_residual}
\end{equation}
Define
\begin{equation}
    \Delta\varepsilon_k
    =
    \varepsilon(\bx(t_k))
    -
    \varepsilon(\bx(t_k-\Delta T))
\label{eq:stored_approximation_increment}
\end{equation}
and
\begin{equation}
    \varepsilon_{B,k}
    =
    \widehat{\mathcal R}_k
    -
    \mathcal R_k^*
    -
    \Delta\varepsilon_k,
\label{eq:stored_BI_approximation_error}
\end{equation}
where
\begin{equation}
\begin{aligned}
    \mathcal R_k^*
    =
    \int_{t_k-\Delta T}^{t_k}
    r\big(
        \bx(\tau),
        \bu^*(\tau),
        \ba^*(\tau),
        \bd^*(\tau)
    \big)d\tau .
\end{aligned}
\label{eq:stored_ideal_window_cost}
\end{equation}
Then
\begin{equation}
    \xi_k(t)
    =
    \widetilde{\mathbf W}^\top(t)\Delta\bphi_k
    +
    \varepsilon_{B,k}.
\label{eq:stored_residual_decomposition}
\end{equation}
By the same argument as in
Lemma~\ref{lem:bounded_integral_residual},
\begin{equation}
    |\varepsilon_{B,k}|
    \leq
    \bar\varepsilon_B,
    \qquad
    k=1,\ldots,N.
\label{eq:stored_BI_error_bound}
\end{equation}

Define the normalization terms
\begin{equation}
\begin{aligned}
    m(t)
    &=
    1+\Delta\bphi^\top(t)\Delta\bphi(t),\\
    m_k
    &=
    1+\Delta\bphi_k^\top\Delta\bphi_k,
    \qquad
    k=1,\ldots,N.
\end{aligned}
\label{eq:normalization_terms}
\end{equation}
The normalized online regressor and residual are
\begin{equation}
    \bpsi(t)
    =
    \frac{\Delta\bphi(t)}{m(t)},
    \qquad
    s(t)
    =
    \frac{\xi(t)}{m(t)},
\label{eq:normalized_online_data}
\end{equation}
whereas the normalized replay regressors and residuals are
\begin{equation}
    \bpsi_k
    =
    \frac{\Delta\bphi_k}{m_k},
    \qquad
    s_k(t)
    =
    \frac{\xi_k(t)}{m_k}.
\label{eq:normalized_replay_data}
\end{equation}

Using \eqref{eq:current_residual_decomposition} and
\eqref{eq:stored_residual_decomposition}, the normalized
residuals satisfy
\begin{equation}
\begin{aligned}
    s(t)
    &=
    \widetilde{\mathbf W}^\top(t)\bpsi(t)
    +
    \bar\varepsilon_B(t),\\
    s_k(t)
    &=
    \widetilde{\mathbf W}^\top(t)\bpsi_k
    +
    \bar\varepsilon_{B,k},
\end{aligned}
\label{eq:normalized_residual_decomposition}
\end{equation}
where
\begin{equation}
    \bar\varepsilon_B(t)
    =
    \frac{\varepsilon_B(t)}{m(t)},
    \qquad
    \bar\varepsilon_{B,k}
    =
    \frac{\varepsilon_{B,k}}{m_k}.
\label{eq:normalized_BI_errors}
\end{equation}
Since \(m(t)\geq1\) and \(m_k\geq1\),
\begin{equation}
    |\bar\varepsilon_B(t)|
    \leq
    \bar\varepsilon_B,
    \qquad
    |\bar\varepsilon_{B,k}|
    \leq
    \bar\varepsilon_B.
\label{eq:normalized_BI_error_bounds}
\end{equation}

% \begin{assumption}[Finite excitation of the replay stack]
% \label{ass:finite_excitation}
% The history stack contains at least \(L\) informative samples and
% satisfies
% \begin{equation}
% \begin{aligned}
%     \mathbf\Psi
%     &:=
%     \sum_{k=1}^{N}
%     \bpsi_k\bpsi_k^\top\\
%     &=
%     \sum_{k=1}^{N}
%     \frac{
%         \Delta\bphi_k\Delta\bphi_k^\top
%     }{
%         \left(
%             1+\Delta\bphi_k^\top\Delta\bphi_k
%         \right)^2
%     },
% \end{aligned}
% \label{eq:replay_Gramian}
% \end{equation}
% with
% \begin{equation}
%     \lambda_{\min}(\mathbf\Psi)
%     \geq
%     \underline\lambda
%     >
%     0.
% \label{eq:finite_excitation_condition}
% \end{equation}
% \end{assumption}

\begin{assumption}[Finite excitation of the replay stack]
\label{ass:finite_excitation}
The history stack satisfies
\(\mathbf\Psi:=\sum_{k=1}^{N}\bpsi_k\bpsi_k^\top=\sum_{k=1}^{N}\dfrac{\Delta\bphi_k\Delta\bphi_k^\top}{\left(1+\Delta\bphi_k^\top\Delta\bphi_k\right)^2}\),
with
\(\lambda_{\min}(\mathbf\Psi)\ge\underline\lambda>0\).
\end{assumption}

\begin{remark}[Finite excitation versus persistent excitation]
\label{rem:finite_vs_persistent_excitation}
Assumption~\ref{ass:finite_excitation} is a finite-data
informativity condition. It does not require the online regressor
to remain persistently exciting for all future time. Once a
finite set of linearly independent integral feature increments has
been collected, those directions are retained in
\(\mathcal H\) and repeatedly injected into the critic update.
The closed-loop trajectory may therefore lose excitation as the
state approaches the target set without eliminating the
rank information already stored in the replay matrix
\(\mathbf\Psi\).
\end{remark}

% ------------------------------------------------------------
\subsection{Fixed-Time Integral Critic Update}
\label{subsec:fixed_time_critic_update}
% ------------------------------------------------------------

Select two residual exponents satisfying
\begin{equation}
    0<q<1<r.
\label{eq:critic_residual_exponents}
\end{equation}
Define the fixed-time integral residual loss
\begin{equation}
\begin{aligned}
    \mathcal E(\hat{\mathbf W})
    ={}&
    \frac{1}{q+1}|s(t)|^{q+1}
    +
    \frac{1}{r+1}|s(t)|^{r+1}\\
    &+
    \sum_{k=1}^{N}
    \left[
        \frac{1}{q+1}|s_k(t)|^{q+1}
        +
        \frac{1}{r+1}|s_k(t)|^{r+1}
    \right].
\end{aligned}
\label{eq:fixed_time_residual_loss}
\end{equation}

\begin{remark}
For completeness, the affine dependence of the integral residual on
the current critic weight follows directly from the saddle-point
stationarity conditions. Let the approximate Hamiltonian associated
with the learned policies be
\(\hat{\mathcal H}
=\ell(\bx,\hat{\bu},\hat{\ba},\hat{\bd})+
\hat{\bW}^{\top}\nabla\bm\phi(\bx)
[ f(\bx)+\bm g(\bx)(\hat{\bu}+\hat{\ba})
+\bm k(\bx)\hat{\bd}]\).
Although
\(\hat{\bu}=\hat{\bu}(\hat{\bW})\),
\(\hat{\ba}=\hat{\ba}(\hat{\bW})\), and
\(\hat{\bd}=\hat{\bd}(\hat{\bW})\), differentiation with respect to
\(\hat{\bW}\) gives the direct term
\(\nabla\bm\phi(\bx)
[ f+\bm g(\hat{\bu}+\hat{\ba})+\bm k\hat{\bd}]\)
plus terms proportional to
\(\partial\hat{\bu}/\partial\hat{\bW}\),
\(\partial\hat{\ba}/\partial\hat{\bW}\), and
\(\partial\hat{\bd}/\partial\hat{\bW}\).
The latter vanish because
\(\partial U(\hat{\bu})/\partial\hat{\bu}
+\bm g^{\top}\nabla\bm\phi^{\top}\hat{\bW}=\bm0\),
\(-2\gamma_a^2\bTmat\hat{\ba}
+\bm g^{\top}\nabla\bm\phi^{\top}\hat{\bW}=\bm0\), and
\(-2\gamma_d^2\bS\hat{\bd}
+\bm k^{\top}\nabla\bm\phi^{\top}\hat{\bW}=\bm0\).
Hence
\(\partial\hat{\mathcal H}/\partial\hat{\bW}
=\nabla\bm\phi(\bx)
[ f+\bm g(\hat{\bu}+\hat{\ba})+\bm k\hat{\bd}]
=\nabla\bm\phi(\bx)\dot{\bx}\).
Over the fixed integration window,
\(\int_{t-\Delta T}^{t}
\nabla\bm\phi(\bx(\tau))\dot{\bx}(\tau)\,d\tau
=\bm\phi(\bx(t))-\bm\phi(\bx(t-\Delta T))\).
Therefore, with the normalization already introduced in the integral
residual, the current regressor is fixed during instantaneous
weight differentiation and
\(\nabla_{\hat{\bW}}s(t)=\bm\psi(t)\).
Likewise, each stored tuple
\((\Delta\bm\phi_k,\hat{\mathcal R}_k)\) is frozen after insertion into the replay
stack, so
\(\nabla_{\hat{\bW}}s_k(t)=\bm\psi_k\).

\end{remark}

Consequently,
\begin{equation}
    \frac{\partial s(t)}
    {\partial\hat{\mathbf W}}
    =
    \frac{\Delta\bphi(t)}{m(t)}
    =
    \bpsi(t),
\label{eq:online_residual_gradient}
\end{equation}
and
\begin{equation}
    \frac{\partial s_k(t)}
    {\partial\hat{\mathbf W}}
    =
    \frac{\Delta\bphi_k}{m_k}
    =
    \bpsi_k.
\label{eq:replay_residual_gradient}
\end{equation}

Applying the chain rule to
\eqref{eq:fixed_time_residual_loss} gives
\begin{equation}
\begin{aligned}
    \frac{\partial\mathcal E}
    {\partial\hat{\mathbf W}}
    ={}&
    \bpsi(t)
    \left[
        |s(t)|^q\sgn(s(t))
        +
        |s(t)|^r\sgn(s(t))
    \right]\\
    &+
    \sum_{k=1}^{N}
    \bpsi_k
    \left[
        |s_k(t)|^q\sgn(s_k(t))
        +
        |s_k(t)|^r\sgn(s_k(t))
    \right].
\end{aligned}
\label{eq:fixed_time_loss_gradient}
\end{equation}

Let \(\mathbf\Gamma=\mathbf\Gamma^\top>0\) be a constant
learning-gain matrix and let \(\sigma\geq0\) be a leakage gain.
The critic-weight update law is selected as
\begin{equation}
\begin{aligned}
    \dot{\hat{\mathbf W}}
    ={}&
    -\mathbf\Gamma\bpsi(t)
    \left[
        |s(t)|^q\sgn(s(t))
        +
        |s(t)|^r\sgn(s(t))
    \right]\\
    &-
    \mathbf\Gamma
    \sum_{k=1}^{N}
    \bpsi_k
    \left[
        |s_k(t)|^q\sgn(s_k(t))
        +
        |s_k(t)|^r\sgn(s_k(t))
    \right]\\
    &-
    \sigma\mathbf\Gamma\hat{\mathbf W}.
\end{aligned}
\label{eq:fixed_time_critic_update}
\end{equation}

\begin{remark}[Role and necessity of leakage]
\label{rem:leakage}
The leakage term \(-\sigma\mathbf{\Gamma}\hat{\mathbf W}\) is included in the present formulation mainly to enhance generality and robustness, following the classical robust adaptive-control philosophy of suppressing parameter drift in the presence of persistent approximation, measurement, or numerical-integration errors. Its use, however, introduces an intrinsic bias because it drives \(\hat{\mathbf W}\) toward the origin rather than exactly toward \(\bW^*\), thereby contributing an additional residual term, typically proportional to \(\sigma\|\bW^*\|^2\), and potentially enlarging the critic and closed-loop ultimate bounds. Importantly, leakage is not the fundamental mechanism responsible for critic-weight convergence here: once the replay stack satisfies the required richness condition, the stored-data term itself provides coercive dissipation in the weight-error direction, in the same spirit as composite learning and integral concurrent learning, where historical or integrated data yield parameter convergence under relaxed finite-data excitation conditions~\cite{PanYu2016CompositeLearning,ParikhKamalapurkarDixon2019ICL}. Therefore, when the replay condition is reliably satisfied and approximation and numerical errors are sufficiently controlled, one may simply choose \(\sigma=0\) in implementation without removing the data-driven convergence mechanism. Conversely, \(\sigma>0\) provides an additional robustness safeguard when such perturbations are non-negligible. Thus, the leakage term is retained here as an optional robustification device rather than an essential requirement of the proposed replay-based learning architecture.
\end{remark}

\begin{remark}[Quantitative role of normalization]
\label{rem:normalization_role}
For every vector \(\Delta\bphi\),
\begin{equation}
\begin{aligned}
    \left\|
        \frac{\Delta\bphi}
        {1+\|\Delta\bphi\|^2}
    \right\|
    =
    \frac{\|\Delta\bphi\|}
    {1+\|\Delta\bphi\|^2}
    \leq
    \frac{1}{2}.
\end{aligned}
\label{eq:normalized_regressor_bound}
\end{equation}
Hence
\begin{equation}
    \|\bpsi(t)\|
    \leq
    \frac{1}{2},
    \qquad
    \|\bpsi_k\|
    \leq
    \frac{1}{2}.
\label{eq:all_normalized_regressor_bounds}
\end{equation}
Normalization therefore prevents large finite-window feature
increments from directly producing excessively large gradient
steps. The update magnitude is governed mainly by the normalized
residual powers rather than by unbounded feature increments.
\end{remark}

\endgroup
% ============================================================

% ============================================================
\section{Stability Analysis}
\label{sec:stability}
% ============================================================

\begingroup
\setlength{\abovedisplayskip}{3pt}
\setlength{\belowdisplayskip}{3pt}
\setlength{\abovedisplayshortskip}{2pt}
\setlength{\belowdisplayshortskip}{2pt}

This section establishes the stability of the proposed critic-only fixed-time integral reinforcement learning scheme. Practical fixed-time boundedness of the critic-weight estimation error is first proved from the normalized integral Bellman--Isaacs residuals and the finite experience-replay stack. This result is then combined with the cost-induced fixed-time structure of the saturated secure HJI problem to establish practical fixed-time stability of the learned nonlinear closed-loop system. The proof preserves the integral learning structure by evaluating the state Lyapunov component over a finite reinforcement window, while keeping the critic component pointwise since the critic weights follow an ordinary differential equation.

% ------------------------------------------------------------
\subsection{Practical Fixed-Time Boundedness of the Critic Error}
\label{subsec:critic_weight_stability}
% ------------------------------------------------------------

\begin{remark}
Fixed-time convergence of the critic weights is not required as an
independent physical objective. It is imposed here because the learned
control policy depends directly on the critic through
\(\nabla\hat V(\bx)\). Hence, the critic error enters the closed-loop
stability analysis as a policy-approximation error. Driving
\(\widetilde{\mathbf W}\) into a bounded residual set within an
initial-condition-independent time prevents an arbitrarily long learning
transient from propagating into the closed-loop convergence bound and
thereby supports the subsequent practical fixed-time stability result.
\end{remark}

Recall from \eqref{eq:normalized_residual_decomposition} that the
normalized current and replayed integral Bellman--Isaacs residuals
satisfy
\begin{equation}
\begin{aligned}
    s(t)
    &=
    \widetilde{\mathbf W}^{\top}(t)\bpsi(t)
    +
    \bar\varepsilon_B(t),\\
    s_k(t)
    &=
    \widetilde{\mathbf W}^{\top}(t)\bpsi_k
    +
    \bar\varepsilon_{B,k},
    \qquad
    k=1,\ldots,N,
\end{aligned}
\label{eq:stab_residual_decomposition}
\end{equation}
% where
% \begin{equation}
%     \widetilde{\mathbf W}
%     =
%     \hat{\mathbf W}-\mathbf W.
% \label{eq:stab_weight_error_recall}
% \end{equation}
% By \eqref{eq:normalized_BI_error_bounds}, there exists a constant
% \(\varepsilon_{BM}>0\) such that
% \begin{equation}
%     |\bar\varepsilon_B(t)|
%     \leq
%     \varepsilon_{BM},
%     \qquad
%     |\bar\varepsilon_{B,k}|
%     \leq
%     \varepsilon_{BM},
%     \qquad
%     k=1,\ldots,N.
% \label{eq:stab_normalized_residual_bound}
% \end{equation}

The replay matrix satisfies
\begin{equation}
    \mathbf\Psi
    =
    \sum_{k=1}^{N}
    \bpsi_k\bpsi_k^\top,
\label{eq:stab_replay_matrix}
\end{equation}
and Assumption~\ref{ass:finite_excitation} guarantees that
\begin{equation}
    \lambda_{\min}(\mathbf\Psi)
    \geq
    \underline\lambda
    >
    0.
\label{eq:stab_replay_rank}
\end{equation}

For convenience, define
\begin{equation}
    \mu
    =
    \frac{q+1}{2},
    \qquad
    \nu
    =
    \frac{r+1}{2},
\label{eq:critic_power_definitions}
\end{equation}
where \(0<q<1<r\). Hence,
\begin{equation}
    0<\mu<1<\nu.
\label{eq:critic_power_ranges}
\end{equation}

\begin{lemma}[Residual perturbation inequality]
\label{lem:residual_perturbation}
For any \(p>0\), there exist constants
\(\kappa_p>0\) and \(c_p>0\) such that, for all
\(z,e\in\mathbb R\),
\begin{equation}
\begin{aligned}
    z
    |z+e|^p
    \sgn(z+e)
    \geq
    \kappa_p|z|^{p+1}
    -
    c_p|e|^{p+1}.
\end{aligned}
\label{eq:residual_perturbation_inequality}
\end{equation}
One admissible selection is
\begin{equation}
    \kappa_p
    =
    2^{-(p+1)},
    \qquad
    c_p
    =
    2\cdot 3^p+1.
\label{eq:residual_perturbation_constants}
\end{equation}
\end{lemma}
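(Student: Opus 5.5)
The proof is elementary and needs no earlier result beyond the definition of the signed power. It rests on a two-case split, according to whether the perturbation \(e\) is small or large relative to \(z\); the threshold \(|e|\le|z|/2\) matches the constants \eqref{eq:residual_perturbation_constants}. The trivial case \(z=0\) is dispatched first: the left-hand side vanishes and the right-hand side equals \(-c_p|e|^{p+1}\le0\). Throughout, \(z|z+e|^p\sgn(z+e)\) is viewed as \(|z|\,|z+e|^p\) times \(\pm1\), with the sign \(+\) exactly when \(z\) and \(z+e\) have the same sign.

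\emph{Case 1: \(|e|\le|z|/2\), \(z\neq0\).} Here \(|z+e|\ge|z|-|e|\ge|z|/2>0\), so \(z+e\) is nonzero and has the sign of \(z\). Hence \(z\,\sgn(z+e)=|z|\), and
\[
z|z+e|^p\sgn(z+e)=|z|\,|z+e|^p\ge 2^{-p}|z|^{p+1}\ge 2^{-(p+1)}|z|^{p+1}.
\]
This is at least \(\kappa_p|z|^{p+1}-c_p|e|^{p+1}\), since \(c_p|e|^{p+1}\ge0\).

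\emph{Case 2: \(|e|>|z|/2\).} No sign information is available, so I use the crude lower bound
\[
z|z+e|^p\sgn(z+e)\ge-|z|\,|z+e|^p .
\]
Since \(|z|<2|e|\) and \(|z+e|\le|z|+|e|<3|e|\), this gives \(-|z||z+e|^p\ge -2\cdot3^p|e|^{p+1}\). To reach the target form, I add and subtract the \(\kappa_p\) term. In this regime \(\kappa_p|z|^{p+1}=2^{-(p+1)}|z|^{p+1}\le 2^{-(p+1)}(2|e|)^{p+1}=|e|^{p+1}\). Therefore
\[
-2\cdot3^p|e|^{p+1}\ge \kappa_p|z|^{p+1}-(2\cdot3^p+1)|e|^{p+1},
\]
which explains the extra \(+1\) in \(c_p\).

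Combining the two cases proves \eqref{eq:residual_perturbation_inequality} for all \(z,e\in\mathbb R\) with the stated constants. There is no real obstacle. The only points needing care are, in Case 1, confirming that \(z+e\neq0\) and that it carries the sign of \(z\); and, in Case 2, absorbing the positive \(\kappa_p|z|^{p+1}\) term into the \(|e|^{p+1}\) budget. Both steps use only \(p>0\), so no restriction such as \(p<1\) or \(p>1\) is needed. This is what allows the lemma to be applied with both \(p=q\) and \(p=r\) in the subsequent critic analysis.
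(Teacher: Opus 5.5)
Your proof is correct and takes essentially the same route as the paper. Both use the same split at $|z|=2|e|$: in the first case, $|z+e|\ge|z|/2$ with $z+e$ carrying the sign of $z$; in the second, the crude bound $-|z|\,|z+e|^p\ge-2\cdot 3^p|e|^{p+1}$ together with the absorption $\kappa_p|z|^{p+1}\le|e|^{p+1}$. Your separate handling of $z=0$ is a harmless extra precaution that the paper's case split already covers.
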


% \begin{proof}
% Consider first the case
% \[
%     |z|\geq2|e|.
% \]
% Then \(z+e\) has the same sign as \(z\), and
% \begin{equation}
%     |z+e|
%     \geq
%     |z|-|e|
%     \geq
%     \frac{|z|}{2}.
% \label{eq:perturbation_case_one_magnitude}
% \end{equation}
% Consequently,
% \begin{equation}
% \begin{aligned}
%     z|z+e|^p\sgn(z+e)
%     &=
%     |z||z+e|^p\\
%     &\geq
%     2^{-p}|z|^{p+1}\\
%     &\geq
%     \kappa_p|z|^{p+1}.
% \end{aligned}
% \label{eq:perturbation_case_one}
% \end{equation}

% Consider next the case
% \[
%     |z|<2|e|.
% \]
% By the triangle inequality,
% \begin{equation}
%     |z+e|
%     \leq
%     |z|+|e|
%     <
%     3|e|.
% \label{eq:perturbation_case_two_magnitude}
% \end{equation}
% Therefore,
% \begin{equation}
% \begin{aligned}
%     z|z+e|^p\sgn(z+e)
%     &\geq
%     -|z||z+e|^p\\
%     &>
%     -2|e|(3|e|)^p\\
%     &=
%     -2\cdot3^p|e|^{p+1}.
% \end{aligned}
% \label{eq:perturbation_case_two_first}
% \end{equation}
% Moreover,
% \begin{equation}
%     \kappa_p|z|^{p+1}
%     <
%     \kappa_p2^{p+1}|e|^{p+1}
%     =
%     |e|^{p+1}.
% \label{eq:perturbation_case_two_second}
% \end{equation}
% Combining \eqref{eq:perturbation_case_two_first} and
% \eqref{eq:perturbation_case_two_second} yields
% \begin{equation}
% \begin{aligned}
%     z|z+e|^p\sgn(z+e)
%     \geq
%     \kappa_p|z|^{p+1}
%     -
%     (2\cdot3^p+1)|e|^{p+1}.
% \end{aligned}
% \label{eq:perturbation_case_two_final}
% \end{equation}
% Thus, \eqref{eq:residual_perturbation_inequality} holds in both
% cases.
% \end{proof}

\begin{proof}
Consider first the case \( |z|\ge2|e| \). Then \(z+e\) has the same sign as \(z\), and \( |z+e|\ge|z|-|e|\ge|z|/2 \). Hence,
\( z|z+e|^p\sgn(z+e)=|z||z+e|^p\ge2^{-p}|z|^{p+1}\ge\kappa_p|z|^{p+1} \).
If instead \( |z|<2|e| \), then \( |z+e|\le|z|+|e|<3|e| \), implying
\( z|z+e|^p\sgn(z+e)\ge-|z||z+e|^p>-2|e|(3|e|)^p=-2\cdot3^p|e|^{p+1} \).
Moreover,
\( \kappa_p|z|^{p+1}<\kappa_p2^{p+1}|e|^{p+1}=|e|^{p+1} \).
Therefore,
\( z|z+e|^p\sgn(z+e)\ge\kappa_p|z|^{p+1}-(2\cdot3^p+1)|e|^{p+1} \),
which proves \eqref{eq:residual_perturbation_inequality}.
\end{proof}

% \begin{remark}
% \label{rem:residual_perturbation_role}
% Lemma~\ref{lem:residual_perturbation} is required because the
% integral Bellman--Isaacs residual is not exactly equal to the
% critic-weight projection
% \(\widetilde{\mathbf W}^{\top}\bpsi\). The additional term
% collects value-function approximation error, finite-window
% integration error, and policy-induced residual error. The lemma
% preserves a negative power of the critic-weight projection in the
% Lyapunov derivative while transferring all residual
% imperfections into a bounded additive term.
% \end{remark}

% The next result converts reduction of the replayed scalar
% residuals into reduction of the complete critic-weight vector.

\begin{lemma}[Finite-replay projection bound]
\label{lem:finite_replay_projection}
Suppose \eqref{eq:stab_replay_rank} holds. Then, for every
\(\widetilde{\mathbf W}\in\mathbb R^L\),
\begin{equation}
\begin{aligned}
    \sum_{k=1}^{N}
    \left|
        \widetilde{\mathbf W}^{\top}\bpsi_k
    \right|^{q+1}
    \geq
    \eta_q
    \|\widetilde{\mathbf W}\|^{q+1},
\end{aligned}
\label{eq:replay_projection_low}
\end{equation}
where
\begin{equation}
    \eta_q
    =
    \underline\lambda^{\frac{q+1}{2}},
\label{eq:eta_q_definition}
\end{equation}
and
\begin{equation}
\begin{aligned}
    \sum_{k=1}^{N}
    \left|
        \widetilde{\mathbf W}^{\top}\bpsi_k
    \right|^{r+1}
    \geq
    \eta_r
    \|\widetilde{\mathbf W}\|^{r+1},
\end{aligned}
\label{eq:replay_projection_high}
\end{equation}
where
\begin{equation}
    \eta_r
    =
    N^{1-\frac{r+1}{2}}
    \underline\lambda^{\frac{r+1}{2}}.
\label{eq:eta_r_definition}
\end{equation}
\end{lemma}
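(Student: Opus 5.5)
The plan is to reduce both inequalities to the quadratic identity supplied by the replay matrix and then pass between the \(2\)-norm and the \((p)\)-"norm" of the vector of projections. Set \(y_k:=|\widetilde{\mathbf W}^{\top}\bpsi_k|\ge0\) for \(k=1,\ldots,N\). By \eqref{eq:stab_replay_matrix} and \eqref{eq:stab_replay_rank}, \(\sum_{k=1}^{N}y_k^2=\widetilde{\mathbf W}^{\top}\mathbf\Psi\widetilde{\mathbf W}\ge\underline\lambda\|\widetilde{\mathbf W}\|^2\). If \(\widetilde{\mathbf W}=\bm 0\), both \eqref{eq:replay_projection_low} and \eqref{eq:replay_projection_high} hold trivially. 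So we may assume \(\widetilde{\mathbf W}\neq\bm 0\), hence \(S:=\sum_k y_k^2>0\). It then suffices to compare \(\sum_k y_k^{p}\) with \(S^{p/2}\) for \(p=q+1\in(1,2)\) and for \(p=r+1>2\).

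For the low exponent \(p=q+1\), put \(s:=p/2\in(0,1)\) and \(a_k:=y_k^2\). Each ratio \(a_k/S\) lies in \([0,1]\), so \((a_k/S)^{s}\ge a_k/S\). Summing over \(k\) gives \(\sum_k a_k^{s}\ge S^{s}\), which is the subadditivity of \(t\mapsto t^{s}\). Hence \(\sum_k y_k^{q+1}\ge S^{(q+1)/2}\ge\underline\lambda^{(q+1)/2}\|\widetilde{\mathbf W}\|^{q+1}\), which is \eqref{eq:replay_projection_low} with \(\eta_q\) as in \eqref{eq:eta_q_definition}. Note that no factor of \(N\) appears here.

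For the high exponent \(p=r+1>2\), we use Hölder's inequality with exponents \(p/2\) and \(p/(p-2)\) applied to \(\sum_k y_k^2\cdot1\). This gives \(S\le N^{1-2/p}\big(\sum_k y_k^{p}\big)^{2/p}\), equivalently the power-mean inequality. Raising both sides to the power \(p/2\) yields \(\sum_k y_k^{r+1}\ge N^{1-\frac{r+1}{2}}S^{\frac{r+1}{2}}\ge N^{1-\frac{r+1}{2}}\underline\lambda^{\frac{r+1}{2}}\|\widetilde{\mathbf W}\|^{r+1}\), which is \eqref{eq:replay_projection_high} with \(\eta_r\) as in \eqref{eq:eta_r_definition}.

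There is no serious obstacle; the only point requiring care is choosing the correct direction of norm comparison in each regime. For \(p<2\) the \(\ell^p\) quantity dominates the \(\ell^2\) one with no dimension factor. For \(p>2\) the inequality reverses, and the factor \(N^{1-p/2}\) is exactly the price paid. Finally, \(\eta_q,\eta_r>0\) because \(\underline\lambda>0\) and \(N\ge1\).
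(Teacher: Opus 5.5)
Your proposal is correct and follows essentially the same route as the paper. Both arguments reduce to \(\sum_k|\widetilde{\mathbf W}^{\top}\bpsi_k|^2=\widetilde{\mathbf W}^{\top}\mathbf\Psi\widetilde{\mathbf W}\ge\underline\lambda\|\widetilde{\mathbf W}\|^2\) for the stacked projection vector, then use \(\ell^{q+1}\)-versus-\(\ell^2\) norm monotonicity for the low exponent and the comparison \(\|\cdot\|_{r+1}\ge N^{\frac{1}{r+1}-\frac12}\|\cdot\|_2\) for the high exponent. The only difference is that you prove these two norm comparisons explicitly (via subadditivity of \(t\mapsto t^{s}\) and Hölder), whereas the paper simply cites them.
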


\begin{proof}
Define the stacked replay projection
\begin{equation}
    \bz_H
    =
    \begin{bmatrix}
        \widetilde{\mathbf W}^{\top}\bpsi_1&
        \widetilde{\mathbf W}^{\top}\bpsi_2&
        \cdots&
        \widetilde{\mathbf W}^{\top}\bpsi_N
    \end{bmatrix}^{\top}.
\label{eq:stacked_replay_projection}
\end{equation}
Its Euclidean norm satisfies
\begin{equation}
\begin{aligned}
    \|\bz_H\|_2^2
    &=
    \sum_{k=1}^{N}
    \left|
        \widetilde{\mathbf W}^{\top}\bpsi_k
    \right|^2\\
    &=
    \widetilde{\mathbf W}^{\top}
    \left(
        \sum_{k=1}^{N}
        \bpsi_k\bpsi_k^\top
    \right)
    \widetilde{\mathbf W}\\
    &=
    \widetilde{\mathbf W}^{\top}
    \mathbf\Psi
    \widetilde{\mathbf W}\\
    &\geq
    \underline\lambda
    \|\widetilde{\mathbf W}\|^2.
\end{aligned}
\label{eq:stacked_replay_rank_bound}
\end{equation}

Since \(q+1\in(1,2)\), monotonicity of finite-dimensional
\(p\)-norms gives
\begin{equation}
    \|\bz_H\|_{q+1}
    \geq
    \|\bz_H\|_2.
\label{eq:low_norm_monotonicity}
\end{equation}
Consequently,
\begin{equation}
\begin{aligned}
    \sum_{k=1}^{N}
    \left|
        \widetilde{\mathbf W}^{\top}\bpsi_k
    \right|^{q+1}
    &=
    \|\bz_H\|_{q+1}^{q+1}\\
    &\geq
    \|\bz_H\|_2^{q+1}\\
    &\geq
    \underline\lambda^{\frac{q+1}{2}}
    \|\widetilde{\mathbf W}\|^{q+1}.
\end{aligned}
\label{eq:low_projection_proof}
\end{equation}
This proves \eqref{eq:replay_projection_low}.

Since \(r+1>2\), the finite-dimensional norm relation gives
\begin{equation}
    \|\bz_H\|_{r+1}
    \geq
    N^{\frac{1}{r+1}-\frac12}
    \|\bz_H\|_2.
\label{eq:high_norm_relation}
\end{equation}
Raising \eqref{eq:high_norm_relation} to the power \(r+1\) and
using \eqref{eq:stacked_replay_rank_bound} yield
\begin{equation}
\begin{aligned}
    \sum_{k=1}^{N}
    \left|
        \widetilde{\mathbf W}^{\top}\bpsi_k
    \right|^{r+1}
    &=
    \|\bz_H\|_{r+1}^{r+1}\\
    &\geq
    N^{1-\frac{r+1}{2}}
    \|\bz_H\|_2^{r+1}\\
    &\geq
    N^{1-\frac{r+1}{2}}
    \underline\lambda^{\frac{r+1}{2}}
    \|\widetilde{\mathbf W}\|^{r+1}.
\end{aligned}
\label{eq:high_projection_proof}
\end{equation}
This proves \eqref{eq:replay_projection_high}.
\end{proof}

\begin{theorem}[Practical fixed-time convergence of the critic]
\label{thm:critic_fixed_time}
Suppose Assumptions~\ref{ass:critic_approximation},
\ref{ass:finite_excitation} hold. Let the critic weights be
updated according to \eqref{eq:fixed_time_critic_update}. Then
the critic-weight estimation error
\(\widetilde{\mathbf W}\) is practically fixed-time uniformly
ultimately bounded.

More precisely, consider
\begin{equation}
    V_W
    =
    \frac{1}{2}
    \widetilde{\mathbf W}^{\top}
    \mathbf\Gamma^{-1}
    \widetilde{\mathbf W}.
\label{eq:critic_weight_Lyapunov}
\end{equation}
There exist constants \(h_q>0\), \(h_r>0\), and
\(\Delta_W\geq0\) such that
\begin{equation}
    \dot V_W
    \leq
    -h_qV_W^\mu
    -h_rV_W^\nu
    +
    \Delta_W.
\label{eq:critic_final_differential_inequality}
\end{equation}
The constants may be selected as
\begin{equation}
\begin{aligned}
    h_q
    &=
    \kappa_q\eta_q
    (2\underline\gamma)^\mu,\\
    h_r
    &=
    \kappa_r\eta_r
    (2\underline\gamma)^\nu.
\end{aligned}
\label{eq:critic_decay_coefficients}
\end{equation}
Using the constants \(\kappa_q\) and \(\kappa_r\) introduced in
the residual bounds, and defining \(\|\mathbf W\|=W_M\), we obtain
\begin{equation}
\begin{aligned}
    \Delta_W
    ={}&
    (N+1)
    \left(
        c_q\varepsilon_{BM}^{q+1}
        +
        c_r\varepsilon_{BM}^{r+1}
    \right)\\
    &+
    \frac{\sigma}{2}W_M^2.
\end{aligned}
\label{eq:critic_residual_constant}
\end{equation}

Consequently, for every \(\vartheta_W\in(0,1)\),
\(V_W(t)\) enters the residual set
\begin{equation}
\begin{aligned}
    \Omega_W(\vartheta_W)
    =
    \bigg\{
        V_W\geq0:\;
        h_qV_W^\mu+h_rV_W^\nu
        \leq
        \frac{\Delta_W}{\vartheta_W}
    \bigg\}
\end{aligned}
\label{eq:critic_residual_set}
\end{equation}
within the fixed time
\begin{equation}
\begin{aligned}
    T_W
    \leq{}&
    \frac{1}
    {(1-\vartheta_W)h_q(1-\mu)}
    +
    \frac{1}
    {(1-\vartheta_W)h_r(\nu-1)}.
\end{aligned}
\label{eq:critic_settling_time}
\end{equation}
The bound in \eqref{eq:critic_settling_time} is independent of
\(\widetilde{\mathbf W}(0)\). If
\(\varepsilon_{BM}=0\) and \(\sigma=0\), then
\(\Delta_W=0\), and the critic-weight error converges exactly to
the origin in fixed time.
\end{theorem}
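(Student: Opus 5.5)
Since \(\mathbf W\) is constant, \(\dot{\widetilde{\mathbf W}}=\dot{\hat{\mathbf W}}\). Differentiating \eqref{eq:critic_weight_Lyapunov} along \eqref{eq:fixed_time_critic_update} cancels the gain \(\mathbf\Gamma\):
\[
\dot V_W=-\widetilde{\mathbf W}^{\top}\bpsi\,[\,|s|^q\sgn(s)+|s|^r\sgn(s)\,]-\sum_{k=1}^{N}\widetilde{\mathbf W}^{\top}\bpsi_k\,[\,|s_k|^q\sgn(s_k)+|s_k|^r\sgn(s_k)\,]-\sigma\widetilde{\mathbf W}^{\top}\hat{\mathbf W}.
\]
By \eqref{eq:stab_residual_decomposition}, each bracket has the form \(z|z+e|^p\sgn(z+e)\) with \(z=\widetilde{\mathbf W}^{\top}\bpsi\) or \(\widetilde{\mathbf W}^{\top}\bpsi_k\), and \(e=\bar\varepsilon_B\) or \(\bar\varepsilon_{B,k}\). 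Lemma~\ref{lem:residual_perturbation} is applied with \(p=q\) and with \(p=r\) to each of these \(N+1\) terms. Each term then contributes \(-\kappa_p|z|^{p+1}\) plus a residual of at most \(c_p\varepsilon_{BM}^{p+1}\), where \(\varepsilon_{BM}\) is the common bound from \eqref{eq:normalized_BI_error_bounds}. Summed over the terms, these residuals give exactly the factor \((N+1)(c_q\varepsilon_{BM}^{q+1}+c_r\varepsilon_{BM}^{r+1})\). For the leakage term, \(-\widetilde{\mathbf W}^{\top}\hat{\mathbf W}=-\|\widetilde{\mathbf W}\|^2-\widetilde{\mathbf W}^{\top}\mathbf W\le-\tfrac12\|\widetilde{\mathbf W}\|^2+\tfrac12 W_M^2\). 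After discarding the negative part, this yields the \(\tfrac{\sigma}{2}W_M^2\) contribution to \(\Delta_W\).

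Next, the nonpositive online terms \(-\kappa_p|\widetilde{\mathbf W}^{\top}\bpsi|^{p+1}\) are dropped, since the online regressor need not be exciting. The replay sums are lower-bounded by Lemma~\ref{lem:finite_replay_projection}:
\[
\sum_k|\widetilde{\mathbf W}^{\top}\bpsi_k|^{q+1}\ge\eta_q\|\widetilde{\mathbf W}\|^{2\mu},\qquad
\sum_k|\widetilde{\mathbf W}^{\top}\bpsi_k|^{r+1}\ge\eta_r\|\widetilde{\mathbf W}\|^{2\nu}.
\]
To convert norms into \(V_W\), set \(\underline\gamma:=\lambda_{\min}(\mathbf\Gamma^{-1})^{-1}\)'s reciprocal convention aside, and use the sandwich \(V_W\le\tfrac12\lambda_{\max}(\mathbf\Gamma^{-1})\|\widetilde{\mathbf W}\|^2\). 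Defining \(\underline\gamma\) as \(\lambda_{\min}(\mathbf\Gamma)=1/\lambda_{\max}(\mathbf\Gamma^{-1})\), this gives \(\|\widetilde{\mathbf W}\|^2\ge 2\underline\gamma V_W\). Raising this to the powers \(\mu\) and \(\nu\) produces \(-\kappa_q\eta_q(2\underline\gamma)^\mu V_W^\mu-\kappa_r\eta_r(2\underline\gamma)^\nu V_W^\nu\), which matches \eqref{eq:critic_decay_coefficients} and establishes \eqref{eq:critic_final_differential_inequality}.

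Finally, \(\widetilde{\mathbf W}\) is absolutely continuous by Lemma~\ref{lem:critic_weight_continuity}, so \(V_W\) is a \(C^1\) positive definite function along an absolutely continuous trajectory. Theorem~\ref{thm:fixed_time_value_condition} therefore applies with \(a=h_q\), \(b=h_r\), \(\rho=\mu\), \(\theta=\nu\) and \(\Delta_V=\Delta_W\). This gives entry into \(\Omega_W(\vartheta_W)\) within the bound \eqref{eq:critic_settling_time}, and that bound is independent of \(\widetilde{\mathbf W}(0)\). When \(\varepsilon_{BM}=0\) and \(\sigma=0\), the constant \(\Delta_W\) vanishes and the exact-convergence clause of that theorem applies.

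The main obstacle is justifying a uniform, time-independent \(\varepsilon_{BM}\). Lemma~\ref{lem:bounded_integral_residual} relies on Lemma~\ref{lemma4}, whose window constants depend on the critic bound over each window, so there is a potential circularity with the very boundedness being proved. I would resolve this by using the stated standing hypothesis that the trajectory stays in \(\Omega^r\) and by taking the replay-stack common constants as given. I would also note that the estimate \(\dot V_W\le -h_qV_W^\mu+\Delta_W\) itself rules out finite escape of \(\hat{\mathbf W}\), which makes these constants consistent a posteriori.
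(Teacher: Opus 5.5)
Your proposal is correct and follows the paper's proof essentially step for step: cancellation of $\mathbf\Gamma$, Lemma~\ref{lem:residual_perturbation} applied to all $N+1$ residual terms with $p=q$ and $p=r$, the Young bound on the leakage term, discarding the online and leakage negatives, Lemma~\ref{lem:finite_replay_projection}, the conversion $\|\widetilde{\mathbf W}\|^2\ge 2\underline\gamma V_W$, and finally Theorem~\ref{thm:fixed_time_value_condition}. The one thing to drop is your closing ``a posteriori'' consistency remark, which is circular. The online datum $\widehat{\mathcal R}(t)$ contains the virtual-adversary terms $-\gamma_a^2\hat{\mathbf a}^\top\mathbf T\hat{\mathbf a}-\gamma_d^2\hat{\mathbf d}^\top\mathbf S\hat{\mathbf d}$, which grow quadratically in $\|\hat{\mathbf W}\|$, so $\varepsilon_{BM}$ (and hence $\Delta_W$) cannot be bootstrapped from the very inequality it feeds. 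It must instead be taken as a uniform standing bound, exactly as the paper does through Lemma~\ref{lem:bounded_integral_residual} and \eqref{eq:normalized_BI_error_bounds}.
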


\begin{proof}
Since the ideal critic weight \(\mathbf W\) is constant,
\begin{equation}
    \dot{\widetilde{\mathbf W}}
    =
    \dot{\hat{\mathbf W}}.
\label{eq:weight_error_derivative}
\end{equation}
Differentiating \eqref{eq:critic_weight_Lyapunov} gives
\begin{equation}
\begin{aligned}
    \dot V_W
    &=
    \widetilde{\mathbf W}^{\top}
    \mathbf\Gamma^{-1}
    \dot{\widetilde{\mathbf W}}\\
    &=
    \widetilde{\mathbf W}^{\top}
    \mathbf\Gamma^{-1}
    \dot{\hat{\mathbf W}}.
\end{aligned}
\label{eq:critic_Lyapunov_derivative_start}
\end{equation}
Substituting the update law
\eqref{eq:fixed_time_critic_update} into
\eqref{eq:critic_Lyapunov_derivative_start} yields
\begin{equation}
\begin{aligned}
    \dot V_W
    ={}&
    -\widetilde{\mathbf W}^{\top}\bpsi(t)
    \left[
        |s(t)|^q\sgn(s(t))
        +
        |s(t)|^r\sgn(s(t))
    \right]\\
    &-
    \sum_{k=1}^{N}
    \widetilde{\mathbf W}^{\top}\bpsi_k
    \left[
        |s_k(t)|^q\sgn(s_k(t))
        +
        |s_k(t)|^r\sgn(s_k(t))
    \right]\\
    &-
    \sigma
    \widetilde{\mathbf W}^{\top}
    \hat{\mathbf W}.
\end{aligned}
\label{eq:critic_Lyapunov_expanded}
\end{equation}

Define the scalar projections
\begin{equation}
    z(t)
    =
    \widetilde{\mathbf W}^{\top}(t)\bpsi(t),
    \qquad
    z_k(t)
    =
    \widetilde{\mathbf W}^{\top}(t)\bpsi_k.
\label{eq:critic_scalar_projections}
\end{equation}
Then \eqref{eq:stab_residual_decomposition} becomes
\begin{equation}
    s(t)
    =
    z(t)+\bar\varepsilon_B(t),
    \qquad
    s_k(t)
    =
    z_k(t)+\bar\varepsilon_{B,k}.
\label{eq:critic_residual_projection_form}
\end{equation}

Applying Lemma~\ref{lem:residual_perturbation} with \(p=q\)
gives
\begin{equation}
\begin{aligned}
    z(t)|s(t)|^q\sgn(s(t))
    \geq
    \kappa_q|z(t)|^{q+1}
    -
    c_q|\bar\varepsilon_B(t)|^{q+1},
\end{aligned}
\label{eq:online_low_residual_bound}
\end{equation}
and applying the same lemma with \(p=r\) gives
\begin{equation}
\begin{aligned}
    z(t)|s(t)|^r\sgn(s(t))
    \geq
    \kappa_r|z(t)|^{r+1}
    -
    c_r|\bar\varepsilon_B(t)|^{r+1}.
\end{aligned}
\label{eq:online_high_residual_bound}
\end{equation}
Similarly, for every replay datum,
\begin{equation}
\begin{aligned}
    z_k(t)|s_k(t)|^q\sgn(s_k(t))
    \geq
    \kappa_q|z_k(t)|^{q+1}
    -
    c_q|\bar\varepsilon_{B,k}|^{q+1},
\end{aligned}
\label{eq:replay_low_residual_bound}
\end{equation}
and
\begin{equation}
\begin{aligned}
    z_k(t)|s_k(t)|^r\sgn(s_k(t))
    \geq
    \kappa_r|z_k(t)|^{r+1}
    -
    c_r|\bar\varepsilon_{B,k}|^{r+1}.
\end{aligned}
\label{eq:replay_high_residual_bound}
\end{equation}

For the leakage term, use
\begin{equation}
    \hat{\mathbf W}
    =
    \widetilde{\mathbf W}
    +
    \mathbf W.
\label{eq:weight_estimate_decomposition}
\end{equation}
It follows that
\begin{equation}
\begin{aligned}
    -\sigma
    \widetilde{\mathbf W}^{\top}
    \hat{\mathbf W}
    &=
    -\sigma
    \|\widetilde{\mathbf W}\|^2
    -
    \sigma
    \widetilde{\mathbf W}^{\top}\mathbf W\\
    &\leq
    -\frac{\sigma}{2}
    \|\widetilde{\mathbf W}\|^2
    +
    \frac{\sigma}{2}
    \|\mathbf W\|^2\\
    &\leq
    -\frac{\sigma}{2}
    \|\widetilde{\mathbf W}\|^2
    +
    \frac{\sigma}{2}W_M^2.
\end{aligned}
\label{eq:leakage_term_bound}
\end{equation}

Substituting \eqref{eq:online_low_residual_bound}--
\eqref{eq:replay_high_residual_bound} and
\eqref{eq:leakage_term_bound} into
\eqref{eq:critic_Lyapunov_expanded}, and then using
\eqref{eq:online_high_residual_bound}, give
\begin{equation}
\begin{aligned}
    \dot V_W
    \leq{}&
    -\kappa_q|z(t)|^{q+1}
    -\kappa_r|z(t)|^{r+1}\\
    &-
    \kappa_q
    \sum_{k=1}^{N}|z_k(t)|^{q+1}
    -
    \kappa_r
    \sum_{k=1}^{N}|z_k(t)|^{r+1}\\
    &-
    \frac{\sigma}{2}
    \|\widetilde{\mathbf W}\|^2
    +
    \Delta_W.
\end{aligned}
\label{eq:critic_derivative_before_replay_bound}
\end{equation}

The current-data terms and the negative leakage contribution are
nonpositive. They may therefore be dropped to obtain the
conservative estimate
\begin{equation}
\begin{aligned}
    \dot V_W
    \leq{}&
    -\kappa_q
    \sum_{k=1}^{N}
    \left|
        \widetilde{\mathbf W}^{\top}\bpsi_k
    \right|^{q+1}\\
    &-
    \kappa_r
    \sum_{k=1}^{N}
    \left|
        \widetilde{\mathbf W}^{\top}\bpsi_k
    \right|^{r+1}
    +
    \Delta_W.
\end{aligned}
\label{eq:critic_derivative_replay_only}
\end{equation}
Applying Lemma~\ref{lem:finite_replay_projection} gives
\begin{equation}
\begin{aligned}
    \dot V_W
    \leq{}&
    -\kappa_q\eta_q
    \|\widetilde{\mathbf W}\|^{q+1}\\
    &-
    \kappa_r\eta_r
    \|\widetilde{\mathbf W}\|^{r+1}
    +
    \Delta_W.
\end{aligned}
\label{eq:critic_derivative_weight_norm}
\end{equation}

% From Assumption~\ref{ass:learning_gain_matrix},

Since \(\mathbf\Gamma=\mathbf\Gamma^\top>0\) is constant, letting
\(\underline\gamma:=\lambda_{\min}(\mathbf\Gamma)\) and
\(\bar\gamma:=\lambda_{\max}(\mathbf\Gamma)\) gives
\begin{equation}
    \underline\gamma\mathbf I_L
    \leq
    \mathbf\Gamma
    \leq
    \bar\gamma\mathbf I_L,
\label{eq:learning_gain_bounds}
\end{equation}
where \(0<\underline\gamma\leq\bar\gamma\).

\begin{equation}
\begin{aligned}
    \frac{1}{2\bar\gamma}
    \|\widetilde{\mathbf W}\|^2
    \leq
    V_W
    \leq
    \frac{1}{2\underline\gamma}
    \|\widetilde{\mathbf W}\|^2.
\end{aligned}
\label{eq:critic_Lyapunov_eigenvalue_bounds}
\end{equation}
The upper inequality in
\eqref{eq:critic_Lyapunov_eigenvalue_bounds} implies
\begin{equation}
    \|\widetilde{\mathbf W}\|^2
    \geq
    2\underline\gamma V_W.
\label{eq:weight_norm_lower_from_Lyapunov}
\end{equation}
Therefore,
\begin{equation}
\begin{aligned}
    \|\widetilde{\mathbf W}\|^{q+1}
    &\geq
    (2\underline\gamma)^\mu
    V_W^\mu,\\
    \|\widetilde{\mathbf W}\|^{r+1}
    &\geq
    (2\underline\gamma)^\nu
    V_W^\nu.
\end{aligned}
\label{eq:critic_weight_power_relations}
\end{equation}
Substitution of \eqref{eq:critic_weight_power_relations} into
\eqref{eq:critic_derivative_weight_norm} yields
\begin{equation}
    \dot V_W
    \leq
    -h_qV_W^\mu
    -h_rV_W^\nu
    +
    \Delta_W.
\label{eq:critic_fixed_time_comparison_form}
\end{equation}
This is precisely the differential inequality stated in
\eqref{eq:critic_final_differential_inequality}.

Deriving \eqref{eq:critic_residual_set}, \eqref{eq:critic_settling_time} from expression \eqref{eq:critic_final_differential_inequality} and Theorem \ref{thm:fixed_time_value_condition} completes the proof.

\end{proof}

\subsection{Practical Fixed-Time Stability of the Learned System}
\label{subsec:closed_loop_stability}
% ------------------------------------------------------------

We next establish the state-side stability result. Unlike the
critic proof, the physical closed-loop analysis uses the learned
control policy \(\hat{\bu}\) and the actual FDI and disturbance
signals \(\ba\) and \(\bd\). The learned policies
\(\hat{\ba}\) and \(\hat{\bd}\) introduced in
Section~\ref{sec:fixed_time_irl} are virtual maximizing policies
used in construction of the secure Bellman--Isaacs residual; they
are not additional physical control inputs.

% \begin{assumption}[Closed-loop compactness and adversarial bound]
% \label{ass:closed_loop_compactness}
% Under the learned policy \(\hat{\bu}\), the closed-loop solution
% exists uniquely and remains in the compact set \(\Omega^r\). The
% actual FDI and disturbance signals satisfy
% \begin{equation}
%     \gamma_a^2
%     \ba^\top(t)\mathbf T\ba(t)
%     +
%     \gamma_d^2
%     \bd^\top(t)\mathbf S\bd(t)
%     \leq
%     \bar d,
% \label{eq:adversarial_energy_rate_bound}
% \end{equation}
% for some constant \(\bar d\geq0\) and all times under
% consideration.
% \end{assumption}

% \begin{lemma}[Adversarial-signal bound]
% \label{lem:adversarial_bound}
% Under Assumption~\ref{ass:signal_regular}, there exists a finite
% constant \(\bar\delta_{ad}\geq0\) such that
% \begin{equation}
% \gamma_a^2\ba^\top(t)\mathbf T\ba(t)
% +
% \gamma_d^2\bd^\top(t)\mathbf S\bd(t)
% \leq
% \bar\delta_{ad}
% \label{eq:adversarial_energy_rate_bound}
% \end{equation}
% for almost all \(t\geq0\).
% \end{lemma}

% \begin{proof}
% Since \(\ba\in L_\infty\) and \(\bd\in L_\infty\), there exist finite
% constants \(\bar a,\bar d\geq0\) such that
% \(\|\ba(t)\|\leq\bar a\) and \(\|\bd(t)\|\leq\bar d\) a.e. Hence,
% \[
% \bar\delta_{ad}
% =
% \gamma_a^2\lambda_{\max}(\mathbf T)\bar a^2
% +
% \gamma_d^2\lambda_{\max}(\mathbf S)\bar d^2
% \]
% satisfies \eqref{eq:adversarial_energy_rate_bound}.
% \end{proof}

\begin{lemma}[Adversarial-signal bound]
\label{lem:adversarial_bound}
Under Assumption~\ref{ass:signal_regular}, there exists a finite
constant \(\bar d\geq0\) such that
\begin{equation}
\gamma_a^2
\ba^\top(t)\mathbf T\ba(t)
+
\gamma_d^2
\bd^\top(t)\mathbf S\bd(t)
\leq
\bar d
\label{eq:adversarial_energy_rate_bound}
\end{equation}
for almost all \(t\geq0\).
\end{lemma}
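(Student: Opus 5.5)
The bound is a direct consequence of the $L_\infty$ part of Assumption~\ref{ass:signal_regular}; the $L_2$ membership is not needed here. The proof has three steps.

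\emph{Step 1 (essential bounds).} Since $\ba\in L_\infty([0,\infty);\mathbb R^m)$ and $\bd\in L_\infty([0,\infty);\mathbb R^n)$, I would set
\[
\bar a:=\operatorname*{ess\,sup}_{t\ge0}\|\ba(t)\|<\infty,
\qquad
\bar d_0:=\operatorname*{ess\,sup}_{t\ge0}\|\bd(t)\|<\infty .
\]
By definition of the essential supremum, $\|\ba(t)\|\le\bar a$ off a null set $N_a$, and $\|\bd(t)\|\le\bar d_0$ off a null set $N_d$. The union $N_a\cup N_d$ is still null, so both bounds hold simultaneously for almost all $t\ge0$.

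\emph{Step 2 (quadratic forms).} Because $\mathbf T=\mathbf T^\top>0$ and $\mathbf S=\mathbf S^\top>0$ are constant weights of the performance index~\eqref{eq:hji_cost}, the Rayleigh quotient bound gives, for such $t$,
\[
\ba^\top\mathbf T\ba\le\lambda_{\max}(\mathbf T)\bar a^2,
\qquad
\bd^\top\mathbf S\bd\le\lambda_{\max}(\mathbf S)\bar d_0^2 .
\]

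\emph{Step 3 (the constant).} Multiplying by $\gamma_a^2$ and $\gamma_d^2$ and adding, I would define
\[
\bar d:=\gamma_a^2\lambda_{\max}(\mathbf T)\bar a^2+\gamma_d^2\lambda_{\max}(\mathbf S)\bar d_0^2 .
\]
This $\bar d$ is finite and nonnegative, and it satisfies~\eqref{eq:adversarial_energy_rate_bound} for almost all $t\ge0$.

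There is no real obstacle. The only point needing care is the ``almost all $t$'' qualifier: an $L_\infty$ signal is bounded only almost everywhere, so the bound cannot be claimed pointwise for every $t$, and the exceptional sets for $\ba$ and $\bd$ must be combined as in Step 1. A minor notational point is that the symbol $\bar d$ is reused here for the energy-rate bound, so I keep $\bar d_0$ for the norm bound on $\bd$ to avoid a clash.
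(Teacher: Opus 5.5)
Your proposal is correct and matches the paper's proof: both take a.e. norm bounds on \(\ba\) and \(\bd\) from the \(L_\infty\) membership, bound the quadratic forms by \(\lambda_{\max}(\mathbf T)\) and \(\lambda_{\max}(\mathbf S)\), and set \(\bar d=\gamma_a^2\lambda_{\max}(\mathbf T)\bar a^2+\gamma_d^2\lambda_{\max}(\mathbf S)\bar d_0^2\). Your explicit union of the two null sets and your renaming of the disturbance norm bound (the paper writes \(\bar d_d\)) are small clarifications that the paper leaves implicit.
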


\begin{proof}
Since \(\ba\in L_\infty([0,\infty);\mathbb R^m)\) and
\(\bd\in L_\infty([0,\infty);\mathbb R^n)\), there exist finite
constants \(\bar a,\bar d_d\geq0\) such that
\(\|\ba(t)\|\leq\bar a\) and \(\|\bd(t)\|\leq\bar d_d\) almost
everywhere. Hence, \eqref{eq:adversarial_energy_rate_bound} holds with
\[
\bar d
=
\gamma_a^2\lambda_{\max}(\mathbf T)\bar a^2
+
\gamma_d^2\lambda_{\max}(\mathbf S)\bar d_d^2.
\]
\end{proof}

The state penalty is now selected according to the fixed-time
shaping introduced in
\eqref{eq:fixed_time_state_penalty}:
\refstepcounter{equation}\label{eq:stability_fixed_time_state_cost}
\(Q(\bx)
=
\bx^\top\mathbf Q_x\bx
+
\kappa_1\|\bx\|^{2\alpha}
+
\kappa_2\|\bx\|^{2\beta}\).
Here,
\refstepcounter{equation}\label{eq:stability_state_cost_parameters}
\(\mathbf Q_x
=
\mathbf Q_x^\top>0\),
\(\kappa_1,\kappa_2>0\), and
\(0<\alpha<1<\beta\).

To align the critic fixed-time powers with the state-side powers,
select
\refstepcounter{equation}\label{eq:matched_critic_state_powers}
\(q=\alpha\) and \(r=2\beta-1\).
Then
\refstepcounter{equation}\label{eq:matched_low_power}
\(\rho
:=
\frac{q+1}{2}
=
\frac{\alpha+1}{2}
\in(0,1)\),
and
\refstepcounter{equation}\label{eq:matched_high_power}
\(\theta
:=
\frac{r+1}{2}
=
\beta
>
1\).

% \begin{remark}[Matched state and critic powers]
% \label{rem:matched_state_critic_powers}
% The selection \eqref{eq:matched_critic_state_powers} is not
% required for critic convergence alone. It is introduced to make
% the critic Lyapunov powers identical to the powers obtained from
% the finite-window state analysis. The state cost produces the
% powers
% \((\alpha+1)/2\) and \(\beta\), while the critic update produces
% \((q+1)/2\) and \((r+1)/2\). The matching condition allows both
% parts to be combined into one fixed-time Lyapunov inequality.
% \end{remark}

We have
\refstepcounter{equation}\label{eq:value_gradient_mismatch}
\(\nabla\hat V(\bx)-\nabla V^*(\bx)
=
\big(\nabla\bphi(\bx)\big)^\top
\widetilde{\mathbf W}
-
\nabla\varepsilon(\bx)\).
Using
\eqref{eq:u_star} and \eqref{eq:learned_control_policy},
\refstepcounter{equation}\label{eq:learned_control_mismatch_bound}
\(\|\hat{\bu}-\bu^*\|
\leq
\frac{1}{2}
\|\mathbf R^{-1}\|
\|\mathbf g(\bx)\|
\|
\nabla\hat V(\bx)-\nabla V^*(\bx)
\|
\leq
m_W\|\widetilde{\mathbf W}\|
+
\varepsilon_u\),
where
\refstepcounter{equation}\label{eq:control_weight_mismatch_coefficient}
\(m_W
=
\frac{1}{2}
\|\mathbf R^{-1}\|
\bar g\phi_{gM}\),
and
\refstepcounter{equation}\label{eq:control_approximation_mismatch}
\(\varepsilon_u
=
\frac{1}{2}
\|\mathbf R^{-1}\|
\bar g\bar\varepsilon_g\).

% \begin{remark}
% \label{rem:no_actor_error}
% The control-policy error in
% \eqref{eq:learned_control_mismatch_bound} contains no independent
% actor-weight error. It is determined entirely by the
% critic-weight estimation error and the value-function gradient
% approximation error. This is why the subsequent composite
% Lyapunov functional requires only one adaptive-weight component.
% \end{remark}

Let
\refstepcounter{equation}\label{eq:state_quadratic_minimum}
\(q_x
=
\lambda_{\min}(\mathbf Q_x)>0\).
For an arbitrary constant \(\epsilon_x\in(0,1)\), Young's
inequality gives
\refstepcounter{equation}\label{eq:state_weight_cross_bound}
\(c_{Vg}\bar g\,m_W
\|\bx\|
\|\widetilde{\mathbf W}\|
\leq
\frac{\epsilon_x}{2}
\bx^\top\mathbf Q_x\bx
+
\frac{c_{Vg}^2\bar g^2m_W^2}
{2\epsilon_xq_x}
\|\widetilde{\mathbf W}\|^2\),
and
\refstepcounter{equation}\label{eq:state_approximation_cross_bound}
\(c_{Vg}\bar g\,\varepsilon_u\|\bx\|
\leq
\frac{\epsilon_x}{2}
\bx^\top\mathbf Q_x\bx
+
\frac{c_{Vg}^2\bar g^2\varepsilon_u^2}
{2\epsilon_xq_x}\).
Define
\refstepcounter{equation}\label{eq:state_weight_coupling_coefficient}
\(\ell_W
=
\frac{c_{Vg}^2\bar g^2m_W^2}
{2\epsilon_xq_x}\),
and
\refstepcounter{equation}\label{eq:state_control_residual}
\(\Delta_u
=
\frac{c_{Vg}^2\bar g^2\varepsilon_u^2}
{2\epsilon_xq_x}\).

\begin{lemma}[Derivative-limited fractional-power window bound]
\label{lem:fractional_window_bound}
Let \(\xi:[t-\Delta T,t]\to\mathbb R_{\geq0}\) be continuously
differentiable and satisfy
\begin{equation}
    |\dot\xi(\tau)|
    \leq
    L_\xi,
    \qquad
    0\leq
    \xi(\tau)
    \leq
    \xi_{\max},
\label{eq:window_signal_bounds}
\end{equation}
for all \(\tau\in[t-\Delta T,t]\), where \(\Delta T>0\). Define
\begin{equation}
    I_\xi(t)
    =
    \int_{t-\Delta T}^{t}\xi(\tau)d\tau.
\label{eq:window_signal_integral}
\end{equation}
Then, for every \(0<\alpha<1\),
\begin{equation}
    \int_{t-\Delta T}^{t}
    \xi^\alpha(\tau)d\tau
    \geq
    \omega_\alpha
    I_\xi^{\frac{\alpha+1}{2}}(t),
\label{eq:fractional_window_lower_bound}
\end{equation}
where
\begin{equation}
    \omega_\alpha
    =
    C_\xi^{\alpha-1},
\label{eq:window_low_power_coefficient}
\end{equation}
and
\begin{equation}
    C_\xi
    =
    \max
    \left\{
        2\sqrt{L_\xi},
        \;
        4\sqrt{\frac{\xi_{\max}}{\Delta T}}
    \right\}.
\label{eq:window_signal_constant}
\end{equation}
\end{lemma}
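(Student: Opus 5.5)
The plan is to compare the fractional integrand with the window maximum and then to use the derivative bound to show that the window integral cannot be small relative to that maximum. Let \(M:=\max_{\tau\in[t-\Delta T,t]}\xi(\tau)\), attained at some \(\tau^*\). This maximum exists by continuity on the compact window, and \(0\le M\le\xi_{\max}\). If \(M=0\), then \(\xi\equiv0\) and both sides of \eqref{eq:fractional_window_lower_bound} vanish, so we may assume \(M>0\).

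\textbf{Step 1 (pointwise comparison).} Since \(0\le\xi\le M\) and \(\alpha-1<0\), we have \(\xi^\alpha=\xi\cdot\xi^{\alpha-1}\ge \xi\,M^{\alpha-1}\) wherever \(\xi>0\). The inequality holds trivially where \(\xi=0\). Integrating gives
\begin{equation*}
\int_{t-\Delta T}^{t}\xi^\alpha(\tau)\,d\tau\ \ge\ M^{\alpha-1}I_\xi(t).
\end{equation*}
It therefore suffices to show \(M^{\alpha-1}I_\xi\ge C_\xi^{\alpha-1}I_\xi^{(\alpha+1)/2}\). Because \(1-\alpha>0\), this is equivalent to \(I_\xi^{(1-\alpha)/2}\ge (M/C_\xi)^{1-\alpha}\), that is, to
\begin{equation*}
I_\xi(t)\ \ge\ M^2/C_\xi^2 .
\end{equation*}

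\textbf{Step 2 (lower bound on \(I_\xi\) from the Lipschitz bound).} This step is the main obstacle; the derivative limit \(L_\xi\) is exactly what rules out thin spikes. By the mean value theorem, \(\xi(\tau)\ge M-L_\xi|\tau-\tau^*|\) on the window. At least one side of \(\tau^*\) inside \([t-\Delta T,t]\) has length at least \(\Delta T/2\); we integrate only over that side and distinguish two cases.

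\emph{Case \(M/L_\xi\le\Delta T/2\).} The linear lower bound stays nonnegative over a subinterval of length \(M/L_\xi\), and the triangle under it gives \(I_\xi\ge M^2/(2L_\xi)\). Since \(C_\xi^2\ge4L_\xi\), this yields \(I_\xi\ge M^2/(2L_\xi)\ge M^2/C_\xi^2\).

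\emph{Case \(M/L_\xi>\Delta T/2\).} On the side of length \(\Delta T/2\) we have \(L_\xi\,\Delta T/4<M/2\), so
\begin{equation*}
I_\xi\ \ge\ \int_0^{\Delta T/2}(M-L_\xi h)\,dh\ =\ \frac{\Delta T}{2}\Big(M-\frac{L_\xi\Delta T}{4}\Big)\ \ge\ \frac{M\Delta T}{4}.
\end{equation*}
Since \(C_\xi^2\ge16\,\xi_{\max}/\Delta T\ge16M/\Delta T\), we get \(M^2/C_\xi^2\le M\Delta T/16\le I_\xi\).

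\textbf{Step 3 (conclusion).} In both cases \(I_\xi\ge M^2/C_\xi^2\). Combining this with Step 1 gives \eqref{eq:fractional_window_lower_bound} with \(\omega_\alpha=C_\xi^{\alpha-1}\). This explains the two entries of the maximum in \eqref{eq:window_signal_constant}: \(2\sqrt{L_\xi}\) covers the slope-limited case and \(4\sqrt{\xi_{\max}/\Delta T}\) covers the window-limited case.
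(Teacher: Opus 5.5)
Your proof is correct and follows essentially the same route as the paper. Both bound \(\int_{t-\Delta T}^{t}\xi^\alpha\,d\tau\) below by \(M^{\alpha-1}I_\xi(t)\), then prove \(M\le C_\xi I_\xi^{1/2}(t)\) with a two-case argument that uses the slope bound \(L_\xi\) on the longer side of the maximizer. The only difference is that you integrate the linear lower bound \(M-L_\xi|\tau-\tau^*|\) as a triangle or trapezoid, where the paper uses a half-maximum rectangle of width \(\min\{M/(2L_\xi),\Delta T/2\}\); this sharpens the intermediate estimates (for example \(M^2/(2L_\xi)\) instead of \(M^2/(4L_\xi)\)) but yields the same constant \(C_\xi\).
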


\begin{proof}
Define
\begin{equation}
    M_\xi(t)
    =
    \max_{\tau\in[t-\Delta T,t]}\xi(\tau).
\label{eq:window_maximum}
\end{equation}
If \(M_\xi(t)=0\), then
\(\xi(\tau)=0\) throughout the window and
\eqref{eq:fractional_window_lower_bound} holds trivially.
Suppose \(M_\xi(t)>0\), and let
\(\tau_M\in[t-\Delta T,t]\) satisfy
\begin{equation}
    \xi(\tau_M)
    =
    M_\xi(t).
\label{eq:window_maximum_location}
\end{equation}

At least one of the two intervals adjacent to \(\tau_M\) and
contained in \([t-\Delta T,t]\) has length no smaller than \(\Delta T/2\).
Along that interval, the derivative bound in
\eqref{eq:window_signal_bounds} implies that
\begin{equation}
    \xi(\tau)
    \geq
    \frac{M_\xi(t)}{2}
\label{eq:half_maximum_bound}
\end{equation}
over a subinterval of length at least
\begin{equation}
    \delta_\xi(t)
    =
    \min
    \left\{
        \frac{M_\xi(t)}{2L_\xi},
        \frac{\Delta T}{2}
    \right\}.
\label{eq:half_maximum_interval}
\end{equation}
Consequently,
\begin{equation}
    I_\xi(t)
    \geq
    \frac{M_\xi(t)}{2}
    \delta_\xi(t).
\label{eq:window_integral_from_maximum}
\end{equation}

If \(M_\xi(t)\leq L_\xi\Delta T\), then
\[
    \delta_\xi(t)
    =
    \frac{M_\xi(t)}{2L_\xi},
\]
and \eqref{eq:window_integral_from_maximum} gives
\begin{equation}
    I_\xi(t)
    \geq
    \frac{M_\xi^2(t)}
    {4L_\xi}.
\label{eq:window_maximum_case_one}
\end{equation}
Thus,
\begin{equation}
    M_\xi(t)
    \leq
    2\sqrt{L_\xi I_\xi(t)}.
\label{eq:window_maximum_case_one_bound}
\end{equation}

If \(M_\xi(t)>L_\xi\Delta T\), then
\[
    \delta_\xi(t)
    =
    \frac{\Delta T}{2},
\]
and \eqref{eq:window_integral_from_maximum} gives
\begin{equation}
    I_\xi(t)
    \geq
    \frac{\Delta T}{4}M_\xi(t).
\label{eq:window_maximum_case_two}
\end{equation}
Hence,
\begin{equation}
    M_\xi(t)
    \leq
    \frac{4}{\Delta T}I_\xi(t).
\label{eq:window_maximum_case_two_first}
\end{equation}
Since
\[
    I_\xi(t)
    \leq
    \Delta T\xi_{\max},
\]
one has
\begin{equation}
\begin{aligned}
    \frac{4}{\Delta T}I_\xi(t)
    &=
    4I_\xi^{1/2}(t)
    \frac{I_\xi^{1/2}(t)}{\Delta T}\\
    &\leq
    4\sqrt{\frac{\xi_{\max}}{\Delta T}}
    I_\xi^{1/2}(t).
\end{aligned}
\label{eq:window_maximum_case_two_bound}
\end{equation}
Combining both cases yields
\begin{equation}
    M_\xi(t)
    \leq
    C_\xi I_\xi^{1/2}(t).
\label{eq:window_maximum_unified_bound}
\end{equation}

Since \(0\leq\xi(\tau)\leq M_\xi(t)\) and
\(\alpha-1<0\),
\begin{equation}
    \xi^{\alpha-1}(\tau)
    \geq
    M_\xi^{\alpha-1}(t)
\label{eq:fractional_negative_power_bound}
\end{equation}
whenever \(\xi(\tau)>0\). Therefore,
\begin{equation}
\begin{aligned}
    \int_{t-\Delta T}^{t}\xi^\alpha(\tau)d\tau
    &=
    \int_{t-\Delta T}^{t}
    \xi^{\alpha-1}(\tau)\xi(\tau)d\tau\\
    &\geq
    M_\xi^{\alpha-1}(t)
    I_\xi(t).
\end{aligned}
\label{eq:fractional_integral_maximum_bound}
\end{equation}
Because \(\alpha-1<0\), inequality
\eqref{eq:window_maximum_unified_bound} implies
\begin{equation}
    M_\xi^{\alpha-1}(t)
    \geq
    C_\xi^{\alpha-1}
    I_\xi^{\frac{\alpha-1}{2}}(t).
\label{eq:maximum_negative_power_conversion}
\end{equation}
Substitution into
\eqref{eq:fractional_integral_maximum_bound} gives
\begin{equation}
\begin{aligned}
    \int_{t-\Delta T}^{t}\xi^\alpha(\tau)d\tau
    \geq
    C_\xi^{\alpha-1}
    I_\xi^{\frac{\alpha+1}{2}}(t),
\end{aligned}
\label{eq:fractional_window_proof_final}
\end{equation}
which proves \eqref{eq:fractional_window_lower_bound}.
\end{proof}

\begin{remark}
\label{rem:fractional_window_interpretation}
For \(0<\alpha<1\), the mapping \(s\mapsto s^\alpha\) is
concave. Jensen's inequality therefore provides an upper bound on
the integral of \(s^\alpha\), whereas the Lyapunov analysis
requires a lower bound. Lemma~\ref{lem:fractional_window_bound}
uses the bounded rate of variation of the value function to rule
out arbitrarily narrow energy spikes. This removes the need to
postulate an independent moving-window coercivity condition.
\end{remark}

\begin{lemma}[Applied Jensen’s inequality]
\label{lem:convex_window_bound}
Let \(\xi:[t-\Delta T,t]\to\mathbb R_{\geq0}\) be integrable and let
\(\beta>1\). Then
\begin{equation}
\begin{aligned}
    \int_{t-\Delta T}^{t}
    \xi^\beta(\tau)d\tau
    \geq
    (\Delta T)^{1-\beta}
    \left(
        \int_{t-\Delta T}^{t}\xi(\tau)d\tau
    \right)^\beta.
\end{aligned}
\label{eq:convex_window_lower_bound}
\end{equation}
\end{lemma}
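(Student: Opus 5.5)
The inequality in Lemma~\ref{lem:convex_window_bound} is Jensen's inequality for the convex map \(s\mapsto s^\beta\) on \(\mathbb R_{\geq0}\), applied to the normalized uniform (probability) measure on the window. Equivalently, it follows from Hölder's inequality. I would use the Hölder route, since it needs no measure-theoretic normalization and handles the case where \(\int\xi^\beta=\infty\) without extra comment.

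\textbf{Step 1: Hölder.} Let \(p=\beta>1\) and \(p'=\beta/(\beta-1)\), so that \(1/p+1/p'=1\). Hölder's inequality on \([t-\Delta T,t]\) gives
\[
\int_{t-\Delta T}^{t}\xi(\tau)\cdot 1\,d\tau
\leq
\left(\int_{t-\Delta T}^{t}\xi^\beta(\tau)d\tau\right)^{1/\beta}
(\Delta T)^{(\beta-1)/\beta}.
\]
If \(\int\xi^\beta=\infty\), the claimed inequality holds trivially. Otherwise, both sides are finite and nonnegative.

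\textbf{Step 2: rearrange.} Raise both sides to the power \(\beta\), which is monotone on \(\mathbb R_{\geq0}\). This gives
\[
\left(\int\xi\right)^\beta\leq(\Delta T)^{\beta-1}\int\xi^\beta .
\]
Multiplying both sides by \((\Delta T)^{1-\beta}>0\) yields \eqref{eq:convex_window_lower_bound}.

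\textbf{Alternative: Jensen directly.} Set \(d\mu=d\tau/\Delta T\). Convexity of \(s^\beta\) gives \(\big(\int\xi\,d\mu\big)^\beta\leq\int\xi^\beta d\mu\), and clearing the factors of \(\Delta T\) gives the same result.

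No step is a real obstacle. The only care points are the nonnegativity of \(\xi\), which is needed so that \(s\mapsto s^\beta\) is convex and monotone on the relevant range, and the possibly infinite right-hand integral, which makes the inequality trivial.
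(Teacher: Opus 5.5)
Your proof is correct: your alternative argument, Jensen with the normalized measure \(d\tau/\Delta T\) followed by multiplication by \(\Delta T\), is exactly the paper's proof. Your main H\"older route, with the constant function \(1\), is an equivalent rephrasing of the same inequality, and your handling of the case \(\int\xi^\beta=\infty\) is a harmless extra bit of care that the paper omits.
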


\begin{proof}
The function \(s\mapsto s^\beta\) is convex on
\(\mathbb R_{\geq0}\). Jensen's inequality gives
\begin{equation}
\begin{aligned}
    \frac{1}{\Delta T}
    \int_{t-\Delta T}^{t}\xi^\beta(\tau)d\tau
    \geq
    \left[
        \frac{1}{\Delta T}
        \int_{t-\Delta T}^{t}\xi(\tau)d\tau
    \right]^\beta.
\end{aligned}
\label{eq:convex_window_Jensen}
\end{equation}
Multiplication by \(\Delta T\) proves
\eqref{eq:convex_window_lower_bound}.
\end{proof}

\begin{lemma}[Mixed-power domination]
\label{lem:mixed_power_domination}
Let \(s\geq0\) and \(0<\rho<1<\theta\). Then
\begin{equation}
    s
    \leq
    s^\rho+s^\theta.
\label{eq:linear_mixed_power_bound}
\end{equation}
Moreover, for all \(x,y\geq0\),
\begin{equation}
    x^\rho+y^\rho
    \geq
    (x+y)^\rho,
\label{eq:fractional_subadditivity}
\end{equation}
and
\begin{equation}
    x^\theta+y^\theta
    \geq
    2^{1-\theta}(x+y)^\theta.
\label{eq:convex_power_sum}
\end{equation}
\end{lemma}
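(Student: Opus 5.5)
The three inequalities are elementary; I will prove each by a case split or by normalization.

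For \eqref{eq:linear_mixed_power_bound}, I split at \(s=1\). If \(0\le s\le1\), then \(\rho<1\) gives \(s^\rho\ge s\), and since \(s^\theta\ge0\) we get \(s\le s^\rho+s^\theta\). If \(s>1\), then \(\theta>1\) gives \(s^\theta\ge s\), and since \(s^\rho\ge0\) the same conclusion follows. The point \(s=0\) is covered by the first case.

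For the subadditivity \eqref{eq:fractional_subadditivity}, I first dispose of the trivial case \(x+y=0\), where both sides vanish. Otherwise I normalize by setting \(a=x/(x+y)\) and \(b=y/(x+y)\), so that \(a,b\in[0,1]\) and \(a+b=1\). Because \(\rho<1\), we have \(a^\rho\ge a\) and \(b^\rho\ge b\), hence \(a^\rho+b^\rho\ge1\). Multiplying by \((x+y)^\rho\) gives \(x^\rho+y^\rho\ge(x+y)^\rho\).

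For \eqref{eq:convex_power_sum}, I will use convexity of \(s\mapsto s^\theta\) on \(\mathbb R_{\ge0}\), which holds because \(\theta>1\). Jensen's inequality with weights \(1/2,1/2\), in the same spirit as Lemma~\ref{lem:convex_window_bound}, gives
\[
\Big(\tfrac{x+y}{2}\Big)^\theta\le\tfrac12\big(x^\theta+y^\theta\big).
\]
Multiplying by \(2\) yields \(x^\theta+y^\theta\ge 2\cdot2^{-\theta}(x+y)^\theta=2^{1-\theta}(x+y)^\theta\).

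None of these steps poses a real obstacle. The only points needing care are the degenerate case \(x+y=0\) in the normalization argument and the boundary value \(s=1\) in the first case split; both are handled by direct checking.
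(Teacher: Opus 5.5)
Your proposal is correct and follows the paper's proof essentially step for step. Both use the case split at \(s=1\) for the linear bound, the normalization \(a=x/(x+y)\), \(b=y/(x+y)\) with \(a^\rho\ge a\) (after disposing of \(x+y=0\)) for subadditivity, and midpoint Jensen for \(x^\theta+y^\theta\ge2^{1-\theta}(x+y)^\theta\).
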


% \begin{proof}
% If \(0\leq s\leq1\), then \(s^\rho\geq s\). If \(s\geq1\),
% then \(s^\theta\geq s\). Hence
% \eqref{eq:linear_mixed_power_bound} follows.

% Inequality \eqref{eq:fractional_subadditivity} follows from the
% subadditivity of fractional powers for \(0<\rho<1\). For
% \(\theta>1\), convexity gives
% \[
%     \left(\frac{x+y}{2}\right)^\theta
%     \leq
%     \frac{x^\theta+y^\theta}{2},
% \]
% which is equivalent to
% \eqref{eq:convex_power_sum}.
% \end{proof}

\begin{proof} For the first inequality, consider separately the cases \(0\le s\le 1\) and \(s\ge 1\). Since \(0<\rho<1\), for \(0\le s\le 1\) one has \(s^\rho\ge s\), and hence \(s\le s^\rho\le s^\rho+s^\theta\). On the other hand, since \(\theta>1\), for \(s\ge 1\) one has \(s^\theta\ge s\), which similarly yields \(s\le s^\theta\le s^\rho+s^\theta\). Therefore, \(s\le s^\rho+s^\theta\) for every \(s\ge0\). To prove the second inequality, let \(x,y\ge0\). If \(x+y=0\), then \(x=y=0\) and the result is immediate. Otherwise, define \(a=x/(x+y)\) and \(b=y/(x+y)\), so that \(a,b\in[0,1]\) and \(a+b=1\). Since \(0<\rho<1\), one has \(a^\rho\ge a\) and \(b^\rho\ge b\), and therefore \(a^\rho+b^\rho\ge a+b=1\). Multiplying both sides by \((x+y)^\rho\) gives \(x^\rho+y^\rho=(x+y)^\rho(a^\rho+b^\rho)\ge(x+y)^\rho\). Finally, because \(\theta>1\), the function \(r\mapsto r^\theta\) is convex on \([0,\infty)\). Jensen's inequality therefore gives \(((x+y)/2)^\theta\le(x^\theta+y^\theta)/2\), or equivalently \(x^\theta+y^\theta\ge2^{1-\theta}(x+y)^\theta\). This establishes all three inequalities. \end{proof}

\begin{theorem}[Practical fixed-time stability of the learned nonlinear system]
\label{thm:closed_loop_fixed_time}
Suppose Assumptions~\ref{ass:signal_regular},~\ref{ass:system_regular}, \ref{ass:critic_approximation}, and \ref{ass:finite_excitation} hold. Let the state penalty be selected as in \eqref{eq:stability_fixed_time_state_cost}, and let the critic powers satisfy \eqref{eq:matched_critic_state_powers}. Define the moving-window value energy \(V_I(t)=\int_{t-\Delta T}^{t}V^*(\bx(\tau))d\tau\), where \(\Delta T>0\), and consider the mixed Lyapunov functional
\begin{equation}
    \mathcal J(t)=V_I(t)+\lambda_WV_W(t),
\label{eq:mixed_Lyapunov_functional}
\end{equation}
where \(V_W\) is defined in \eqref{eq:critic_weight_Lyapunov}. Choose \(\lambda_W>0\) such that
\begin{equation}
    \lambda_W>
    \max\left\{
        \frac{e_W\Delta T}{h_q},
        \frac{e_W\Delta T}{h_r}
    \right\},
\label{eq:lambda_W_condition}
\end{equation}
where \(e_W>0\) is defined in the proof. Then there exist constants \(c_1>0\), \(c_2>0\), and \(\Delta_c\geq0\) such that
\begin{equation}
    \dot{\mathcal J}
    \leq
    -c_1\mathcal J^\rho
    -c_2\mathcal J^\theta
    +\Delta_c.
\label{eq:closed_loop_fixed_time_inequality}
\end{equation}
Consequently, for every \(\vartheta_c\in(0,1)\), \(\mathcal J(t)\) enters \(\Omega_c(\vartheta_c)=\{\mathcal J\geq0:\;c_1\mathcal J^\rho+c_2\mathcal J^\theta\leq\Delta_c/\vartheta_c\}\) within the fixed time \(T_c\leq[(1-\vartheta_c)c_1(1-\rho)]^{-1}+[(1-\vartheta_c)c_2(\theta-1)]^{-1}\). The bound is independent of \(\mathcal J(0)\). Consequently, the augmented signal \(\operatorname{col}\{\bx(t),\widetilde{\mathbf W}(t)\}\) is practically fixed-time uniformly ultimately bounded.
\end{theorem}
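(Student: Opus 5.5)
The plan is to differentiate $\mathcal J$ term by term. The window part gives
\[
\dot V_I(t)=V^*(\bx(t))-V^*(\bx(t-\Delta T))=\int_{t-\Delta T}^{t}\dot V^*(\bx(\tau))\,d\tau .
\]
This keeps the integral structure emphasized in the paper. Along the physical closed loop with $\hat{\bu}$, $\ba$, $\bd$, I would write $\dot V^*=\nabla V^{*\top}[f+\mathbf g(\hat{\bu}+\ba)+\bd]$. I would then add and subtract $\nabla V^{*\top}\mathbf g\bu^*$ and the worst-case terms, and use the closed HJI identity \eqref{eq:HJI_closed} to eliminate the unknown drift $f$. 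This yields a pointwise bound of the form
\[
\dot V^*\le -Q(\bx)+\nabla V^{*\top}\mathbf g(\hat{\bu}-\bu^*)+\bar d+(\text{bounded adversarial terms}).
\]
The completion-of-squares $-\gamma_a^2\|\ba-\ba^*\|_{\mathbf T}^2\le0$, $-\gamma_d^2\|\bd-\bd^*\|_{\mathbf S}^2\le0$ together with Lemma~\ref{lem:adversarial_bound} and \eqref{eq:ideal_adversarial_bound} handle the adversarial pieces.

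Next I would control the mismatch term. Using $\|\nabla V^*\|\le c_{Vg}\|\bx\|$ from Lemma~\ref{lem:local_value_bounds} and the mismatch bound $\|\hat{\bu}-\bu^*\|\le m_W\|\widetilde{\mathbf W}\|+\varepsilon_u$, the Young inequalities \eqref{eq:state_weight_cross_bound}--\eqref{eq:state_approximation_cross_bound} absorb it into $\epsilon_x\bx^\top\mathbf Q_x\bx$. The price is $\ell_W\|\widetilde{\mathbf W}\|^2+\Delta_u$. Integrating over the window then gives
\[
\dot V_I\le -\kappa_1\!\int\!\|\bx\|^{2\alpha}-\kappa_2\!\int\!\|\bx\|^{2\beta}+\ell_W\!\int\!\|\widetilde{\mathbf W}(\tau)\|^2d\tau+\Delta T(\Delta_u+\Delta_H+\bar d).
\]
Since $V_W$ is ODE-driven with bounded right-hand side on the window, I would bound the last integral by $\Delta T\, e_W$ times a mixed power $V_W^\rho+V_W^\theta$ at time $t$ (using Lemma~\ref{lem:mixed_power_domination}, $s\le s^\rho+s^\theta$, and $\|\widetilde{\mathbf W}\|^2\le2\bar\gamma V_W$), plus a bounded residual. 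This defines $e_W$, and \eqref{eq:lambda_W_condition} ensures $\lambda_W h_q-e_W\Delta T>0$ and $\lambda_W h_r-e_W\Delta T>0$ when combined with $\lambda_W\dot V_W\le-\lambda_W h_qV_W^\rho-\lambda_W h_rV_W^\theta+\lambda_W\Delta_W$ from Theorem~\ref{thm:critic_fixed_time}, where the matched powers make $\mu=\rho$ and $\nu=\theta$.

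The main obstacle is converting the state integrals into powers of $V_I$. For the high power I would use $\|\bx\|^{2\beta}\ge \bar c_V^{-\beta}V^{*\beta}$ and then Lemma~\ref{lem:convex_window_bound}, giving $\int V^{*\beta}\ge(\Delta T)^{1-\beta}V_I^{\beta}$. For the low power I would apply Lemma~\ref{lem:fractional_window_bound} with $\xi=V^*(\bx(\cdot))$. Its hypotheses $\xi_{\max}$ and $L_\xi$ come from compactness of $\Omega^r$, boundedness of $\nabla V^*$, the vector field, $\hat{\bu}$ (saturated), and $\ba,\bd\in L_\infty$. This yields $\int V^{*\alpha}\ge\omega_\alpha V_I^{(\alpha+1)/2}=\omega_\alpha V_I^\rho$. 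This step also forces the restriction to $\Omega^r$; the case $\|\bx\|<r_-$ is handled as in the paper, by the terminal-region argument.

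Finally I would combine the pieces. Lemma~\ref{lem:mixed_power_domination} gives $V_I^\rho+(\lambda_WV_W)^\rho\ge\mathcal J^\rho$ and $V_I^\theta+(\lambda_WV_W)^\theta\ge2^{1-\theta}\mathcal J^\theta$. Taking
\[
c_1=\min\{\kappa_1\bar c_V^{-\alpha}\omega_\alpha,\,(\lambda_Wh_q-e_W\Delta T)\lambda_W^{-\rho}\},
\]
$c_2$ analogous with the factor $2^{1-\theta}$, and $\Delta_c$ the sum of all constant residuals, yields \eqref{eq:closed_loop_fixed_time_inequality}. Theorem~\ref{thm:fixed_time_value_condition} then gives the residual set and the fixed time $T_c$. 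Boundedness of $\col\{\bx,\widetilde{\mathbf W}\}$ follows from $V_W\le\mathcal J/\lambda_W$ and from $V_I$ bounding $\underline\alpha(\|\bx\|)$ on a window via the same Lipschitz argument.
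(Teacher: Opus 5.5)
Your proposal is correct and follows the paper's proof essentially step for step. The shared steps are: the HJI/saddle bound on $\dot V^*$; Young absorption of the policy mismatch into $\epsilon_x x^\top\mathbf Q_x x$; window integration via Lemma~\ref{lem:fractional_window_bound} and Lemma~\ref{lem:convex_window_bound}; absorption of $e_W\Delta T\,V_W$ through $s\le s^\rho+s^\theta$ under \eqref{eq:lambda_W_condition}; and the subadditivity/convexity merge into powers of $\mathcal J$, which yields the same $c_1$ and $c_2$.

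The only differences are cosmetic. You reach the adversarial bound by completing squares in the closed HJI identity instead of citing Proposition~\ref{prop:saddle}; the two are equivalent, and the extra $\Delta_H$ you carry is unnecessary but harmless. You also make explicit how a bound on $\|x\|$ is recovered from $V_I$, which the paper leaves implicit.
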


\begin{proof}
Along the physical learned closed-loop system~\eqref{eq:plant}, add and subtract the ideal minimizing policy \(\bu^*\) to obtain \(\dot V^*(\bx)=(\nabla V^*(\bx))^\top[f(\bx)+\mathbf g(\bx)(\bu^*+\ba)+\bd]+(\nabla V^*(\bx))^\top\mathbf g(\bx)(\hat{\bu}-\bu^*)\). By the saddle property in Proposition~\ref{prop:saddle},
\begin{equation}
    H(\bx,\bu^*,\ba,\bd,\nabla V^*)
    \leq
    H(\bx,\bu^*,\ba^*,\bd^*,\nabla V^*)
    =0.
\label{eq:saddle_Hamiltonian_bound}
\end{equation}
Expanding \eqref{eq:saddle_Hamiltonian_bound} gives \((\nabla V^*)^\top[f(\bx)+\mathbf g(\bx)(\bu^*+\ba)+\bd]\leq-Q(\bx)-U(\bu^*)+\gamma_a^2\ba^\top\mathbf T\ba+\gamma_d^2\bd^\top\mathbf S\bd\). Using Lemma~\ref{lem:adversarial_bound} and \(U(\bu^*)\geq0\), one obtains
\begin{equation}
    \dot V^*(\bx)
    \leq
    -Q(\bx)
    +(\nabla V^*(\bx))^\top
    \mathbf g(\bx)(\hat{\bu}-\bu^*)
    +\bar d.
\label{eq:value_derivative_before_policy_bound}
\end{equation}
From \eqref{eq:learned_control_mismatch_bound}, \(\left|(\nabla V^*(\bx))^\top\mathbf g(\bx)(\hat{\bu}-\bu^*)\right|\leq c_{Vg}\bar g\|\bx\|\left(m_W\|\widetilde{\mathbf W}\|+\varepsilon_u\right)\). Applying \eqref{eq:state_weight_cross_bound} and \eqref{eq:state_approximation_cross_bound} yields
\begin{equation}
    \left|
        (\nabla V^*(\bx))^\top
        \mathbf g(\bx)(\hat{\bu}-\bu^*)
    \right|
    \leq
    \epsilon_x\bx^\top\mathbf Q_x\bx
    +\ell_W\|\widetilde{\mathbf W}\|^2
    +\Delta_u.
\label{eq:policy_mismatch_final_bound}
\end{equation}
Substituting the state penalty \eqref{eq:stability_fixed_time_state_cost} and \eqref{eq:policy_mismatch_final_bound} into \eqref{eq:value_derivative_before_policy_bound} gives
\begin{equation}
\begin{aligned}
    \dot V^*(\bx)
    \leq{}&
    -(1-\epsilon_x)\bx^\top\mathbf Q_x\bx
    -\kappa_1\|\bx\|^{2\alpha}
    -\kappa_2\|\bx\|^{2\beta}\\
    &+\ell_W\|\widetilde{\mathbf W}\|^2
    +\Delta_x,
\end{aligned}
\label{eq:value_derivative_state_powers}
\end{equation}
where \(\Delta_x=\bar d+\Delta_u\). Using the value-function upper bound \eqref{eq:state_power_to_value_power}, \(\|\bx\|^{2\alpha}\geq\bar c_V^{-\alpha}(V^*(\bx))^\alpha\) and \(\|\bx\|^{2\beta}\geq\bar c_V^{-\beta}(V^*(\bx))^\beta\). Define \(a_x=\kappa_1\bar c_V^{-\alpha}\) and \(b_x=\kappa_2\bar c_V^{-\beta}\). Dropping the nonpositive quadratic term in \eqref{eq:value_derivative_state_powers} gives
\begin{equation}
    \dot V^*(\bx)
    \leq
    -a_x\big(V^*(\bx)\big)^\alpha
    -b_x\big(V^*(\bx)\big)^\beta
    +\ell_W\|\widetilde{\mathbf W}\|^2
    +\Delta_x.
\label{eq:value_derivative_value_powers}
\end{equation}

By the Leibniz rule, \(\dot V_I(t)=V^*(\bx(t))-V^*(\bx(t-\Delta T))=\int_{t-\Delta T}^{t}\dot V^*(\bx(\tau))d\tau\). Integrating \eqref{eq:value_derivative_value_powers} over \([t-\Delta T,t]\) gives
\begin{equation}
\begin{aligned}
    \dot V_I
    \leq{}&
    -a_x\int_{t-\Delta T}^{t}
    \big(V^*(\bx(\tau))\big)^\alpha d\tau
    -b_x\int_{t-\Delta T}^{t}
    \big(V^*(\bx(\tau))\big)^\beta d\tau\\
    &+\ell_W\int_{t-\Delta T}^{t}
    \|\widetilde{\mathbf W}(\tau)\|^2d\tau
    +\Delta T\Delta_x.
\end{aligned}
\label{eq:window_value_derivative_before_bounds}
\end{equation}
From the lower bound in \eqref{eq:critic_Lyapunov_eigenvalue_bounds}, \(\|\widetilde{\mathbf W}\|^2\leq2\bar\gamma V_W\). Define \(e_W=2\bar\gamma\ell_W\). Then
\begin{equation}
    \ell_W
    \int_{t-\Delta T}^{t}
    \|\widetilde{\mathbf W}(\tau)\|^2d\tau
    \leq
    e_W
    \int_{t-\Delta T}^{t}
    V_W(\tau)d\tau.
\label{eq:weight_window_conversion}
\end{equation}
Theorem~\ref{thm:critic_fixed_time} guarantees that \(V_W(t)\) is bounded. Since its differential equation is continuous on the resulting compact set, there exists \(M_W>0\) such that \(|\dot V_W(t)|\leq M_W\). For every \(\tau\in[t-\Delta T,t]\),
\begin{equation}
    V_W(\tau)
    \leq
    V_W(t)+M_W(t-\tau).
\label{eq:critic_window_pointwise_bound}
\end{equation}
Integrating \eqref{eq:critic_window_pointwise_bound} yields
\begin{equation}
    \int_{t-\Delta T}^{t}V_W(\tau)d\tau
    \leq
    \Delta T V_W(t)
    +\frac{1}{2}M_W(\Delta T)^2.
\label{eq:critic_window_integral_bound}
\end{equation}

Because the learned closed-loop trajectory remains in \(\Omega^r\), there exist constants \(L_V>0\) and \(V_{\max}>0\) such that \(|\dot V^*(\bx(t))|\leq L_V\) and \(0\leq V^*(\bx(t))\leq V_{\max}\). Applying Lemma~\ref{lem:fractional_window_bound} to \(\xi(t)=V^*(\bx(t))\) gives
\begin{equation}
    \int_{t-\Delta T}^{t}
    \big(V^*(\bx(\tau))\big)^\alpha d\tau
    \geq
    \omega_\alpha V_I^\rho(t),
\label{eq:state_low_power_window_bound}
\end{equation}
where \(\omega_\alpha=C_V^{\alpha-1}\) and \(C_V=\max\{2\sqrt{L_V},\,4\sqrt{V_{\max}/\Delta T}\}\). Applying Lemma~\ref{lem:convex_window_bound} gives
\begin{equation}
    \int_{t-\Delta T}^{t}
    \big(V^*(\bx(\tau))\big)^\beta d\tau
    \geq
    (\Delta T)^{1-\beta}V_I^\beta(t).
\label{eq:state_high_power_window_bound}
\end{equation}
Substituting \eqref{eq:weight_window_conversion}, \eqref{eq:critic_window_integral_bound}, \eqref{eq:state_low_power_window_bound}, and \eqref{eq:state_high_power_window_bound} into \eqref{eq:window_value_derivative_before_bounds} yields
\begin{equation}
    \dot V_I
    \leq
    -A_IV_I^\rho
    -B_IV_I^\theta
    +e_W\Delta T V_W
    +\Delta_I,
\label{eq:window_value_final_bound}
\end{equation}
where \(A_I=a_x\omega_\alpha\), \(B_I=b_x(\Delta T)^{1-\beta}\), and \(\Delta_I=\Delta T\Delta_x+\frac{1}{2}e_WM_W(\Delta T)^2\).

By Theorem~\ref{thm:critic_fixed_time} and the matched-power selection \eqref{eq:matched_critic_state_powers},
\begin{equation}
    \dot V_W
    \leq
    -h_qV_W^\rho
    -h_rV_W^\theta
    +\Delta_W.
\label{eq:matched_critic_derivative}
\end{equation}
Differentiating the mixed Lyapunov functional \eqref{eq:mixed_Lyapunov_functional} and using \eqref{eq:window_value_final_bound} and \eqref{eq:matched_critic_derivative} give \(\dot{\mathcal J}\leq-A_IV_I^\rho-B_IV_I^\theta-\lambda_Wh_qV_W^\rho-\lambda_Wh_rV_W^\theta+e_W\Delta T V_W+\Delta_I+\lambda_W\Delta_W\). By Lemma~\ref{lem:mixed_power_domination}, \(e_W\Delta T V_W\leq e_W\Delta T V_W^\rho+e_W\Delta T V_W^\theta\). Therefore,
\begin{equation}
    \dot{\mathcal J}
    \leq
    -A_IV_I^\rho
    -B_IV_I^\theta
    -C_WV_W^\rho
    -D_WV_W^\theta
    +\Delta_c,
\label{eq:mixed_Lyapunov_after_absorption}
\end{equation}
where \(C_W=\lambda_Wh_q-e_W\Delta T\), \(D_W=\lambda_Wh_r-e_W\Delta T\), and \(\Delta_c=\Delta_I+\lambda_W\Delta_W\). Condition \eqref{eq:lambda_W_condition} ensures \(C_W>0\) and \(D_W>0\).

Rewrite the critic powers as \(C_WV_W^\rho=C_W\lambda_W^{-\rho}(\lambda_WV_W)^\rho\) and \(D_WV_W^\theta=D_W\lambda_W^{-\theta}(\lambda_WV_W)^\theta\). Define \(\bar A=\min\{A_I,C_W\lambda_W^{-\rho}\}\) and \(\bar B=\min\{B_I,D_W\lambda_W^{-\theta}\}\). Then \eqref{eq:mixed_Lyapunov_after_absorption} implies \(\dot{\mathcal J}\leq-\bar A[V_I^\rho+(\lambda_WV_W)^\rho]-\bar B[V_I^\theta+(\lambda_WV_W)^\theta]+\Delta_c\). Using \eqref{eq:fractional_subadditivity}, \(V_I^\rho+(\lambda_WV_W)^\rho\geq(V_I+\lambda_WV_W)^\rho=\mathcal J^\rho\). Using \eqref{eq:convex_power_sum}, \(V_I^\theta+(\lambda_WV_W)^\theta\geq2^{1-\theta}(V_I+\lambda_WV_W)^\theta=2^{1-\theta}\mathcal J^\theta\). Consequently,
\begin{equation}
    \dot{\mathcal J}
    \leq
    -c_1\mathcal J^\rho
    -c_2\mathcal J^\theta
    +\Delta_c,
\label{eq:composite_fixed_time_final}
\end{equation}
where \(c_1=\bar A\) and \(c_2=2^{1-\theta}\bar B\). This proves \eqref{eq:closed_loop_fixed_time_inequality}. From \eqref{eq:composite_fixed_time_final}, by Theorem~\ref{thm:fixed_time_value_condition}, we deduce the result to be proved.
\end{proof}

\endgroup

\begin{remark}
\label{rem:inner_radius_tradeoff}
The inner radius \(r_->0\) represents a deliberate certification
tradeoff rather than a physical limitation of the closed-loop system.
Indeed, the fixed-time comparison analysis is carried out on the
nonterminal compact set
\(\Omega^r=\{x\in\Omega:\|x\|\ge r_-\}\); hence, the resulting
certificate guarantees convergence to the prescribed terminal
neighborhood \(B_{r_-}(0)\), rather than claiming exact convergence to
the origin. Decreasing \(r_-\) tightens this terminal neighborhood, but
simultaneously enlarges the range of the comparison quantities
\(V^*(x)/\|x\|^2\) and
\(\|\nabla_x V^*(x)\|/\|x\|\) that must be bounded over
\(\Omega^r\). Consequently, a smaller \(r_-\) generally leads to more
conservative comparison constants and may require larger control,
state-penalty, or learning gains to preserve the same fixed-time
certificate. Thus, approaching the origin more closely is obtained at
the cost of increased conservatism, control effort, and numerical
sensitivity. If the actual trajectory enters \(B_{r_-}(0)\), the
outer-region comparison is simply no longer invoked, since the
prescribed practical regulation objective has already been achieved.
If exact convergence to the origin is additionally required, a
two-region design may instead be adopted, in which a local stabilizing
controller or a conventional ADP/IRL law is activated inside
\(B_{r_-}(0)\), while the proposed fixed-time comparison mechanism is
retained on \(\Omega^r\).
\end{remark}

\section{Offline Data-Driven Construction of an Admissible Critic-Weight Initialization Region}
\label{khoitao}

The initialization \(\hat{\mathbf W}(0)\) is not merely a numerical choice, since it directly determines the initially implemented control policy. Although the saturated parametrization guarantees \(|\hat u_{\ell}|<\lambda\), \(\ell=1,\ldots,m\), for any finite \(\hat{\mathbf W}(0)\), this constraint alone does not ensure admissibility in the sense of Definition~1. To make this dependence quantitative, let \(\mathbf u_a\in\Psi(\Omega)\) denote an admissible reference policy associated with a critic weight vector \(\mathbf W_a\), and suppose that the implemented policy is locally Lipschitz with respect to the critic weights such that \(\|\hat{\mathbf u}(\mathbf x,\hat{\mathbf W}(0))-\mathbf u_a(\mathbf x)\|\leq L_u\|\hat{\mathbf W}(0)-\mathbf W_a\|\) for all \(\mathbf x\in\Omega\), where \(L_u>0\) denotes the corresponding local Lipschitz constant. If the nominal closed loop under \(\mathbf u_a\) admits a Lyapunov function satisfying \(\dot V_a\leq-\alpha_a\|\mathbf x\|^2\), while the perturbation induced by the initialization error contributes at most \(c_u\|\mathbf x\|\|\hat{\mathbf u}-\mathbf u_a\|\) to its derivative, then a sufficient initialization condition is \(\|\hat{\mathbf W}(0)-\mathbf W_a\|<\alpha_a/(c_uL_u)\), which preserves the stabilizing control direction over \(\Omega\). Moreover, a smaller initial critic-weight mismatch generally yields a smaller initial policy error and Bellman--Isaacs residual, thereby reducing transient control effort, critic correction magnitude, and sensitivity to approximation errors during the early learning stage. Consequently, the choice \(\hat{\mathbf W}(0)=\mathbf 0_{20}\)  should be interpreted as a simulation initialization rather than a generic admissibility guarantee; for practical implementation, \(\hat{\mathbf W}(0)\) should preferably be selected from the data-driven admissible initialization region \(\mathcal W_0^{\mathrm{adm}}\) constructed by follow.

The construction follows the sequence
\(\{\mathcal D_j^{\mathrm{off}}\}
\rightarrow
\{A_{K,j},B_{K,j}\}
\rightarrow
\{\mathbf u_{K,j}\}
\rightarrow
\{\hat{\mathbf W}_{j}^{K}\}
\rightarrow
\mathcal W_{0}^{\mathrm{adm}}\),
where \(\mathcal D_j^{\mathrm{off}}\) denotes the \(j\)th finite offline
state--input data batch, \(A_{K,j}\) and \(B_{K,j}\) are the corresponding
identified finite-dimensional Koopman lifted-system matrices,
\(\mathbf u_{K,j}\) is the stabilizing control policy constructed from the
identified lifted model, \(\hat{\mathbf W}_{j}^{K}\) is the critic-weight
initialization inferred from \(\mathbf u_{K,j}\) through the inverse
saturated-policy relation, and \(\mathcal W_{0}^{\mathrm{adm}}\) denotes
the resulting common admissible critic-weight initialization region.
The Koopman model is used only as an offline bridge for constructing a
stabilizing warm start; the subsequent online learning continues to use
the finite-window Bellman--Isaacs residual of Section~\ref{sec:fixed_time_irl}.

Consider \(M\) finite offline data batches \(\mathcal D_j^{\mathrm{off}}\), \(j=1,\ldots,M\), containing measured state--input trajectories over compact operating regions \(\Omega_j\), where \(j\) denotes the offline-data-batch index. Introduce a continuously differentiable lifting \(\boldsymbol{\eta}_j=\mathcal K_j(\mathbf x)\in\mathbb R^{n_j^{K}}\), where \(n_j^{K}\) denotes the dimension of the lifted state associated with the \(j\)th offline data batch, with \(\mathcal K_j(\mathbf 0)=\mathbf 0\), and suppose that \(c_{\eta,j}^{-}\|\mathbf x\|\leq\|\boldsymbol{\eta}_j(\mathbf x)\|\leq c_{\eta,j}^{+}\|\mathbf x\|\) on \(\Omega_j\), where \(c_{\eta,j}^{-}>0\) and \(c_{\eta,j}^{+}>0\) denote finite lifting bounds.

For each sampling interval \([t_{j,i}^{-},t_{j,i}^{+}]\), \(i=1,\ldots,N_j^{\mathrm{id}}\), define \(\Delta\boldsymbol{\eta}_{j,i}:=\boldsymbol{\eta}_j(\mathbf x(t_{j,i}^{+}))-\boldsymbol{\eta}_j(\mathbf x(t_{j,i}^{-}))\), \(\mathbf H_{j,i}:=\int_{t_{j,i}^{-}}^{t_{j,i}^{+}}\boldsymbol{\eta}_j(\mathbf x(\tau))\,d\tau\), and \(\mathbf U_{j,i}:=\int_{t_{j,i}^{-}}^{t_{j,i}^{+}}\mathbf u_{\mathrm{off}}(\tau)\,d\tau\), where \(i\) denotes the sampling-interval index and \(N_j^{\mathrm{id}}\) denotes the number of Koopman-identification intervals in the \(j\)th offline data batch. The finite-dimensional lifted relation is \(\Delta\boldsymbol{\eta}_{j,i}=A_{K,j}\mathbf H_{j,i}+B_{K,j}\mathbf U_{j,i}+\mathbf e_{K,j,i}\), where \(\mathbf e_{K,j,i}\) denotes the corresponding finite-interval lifting residual.

Let \(\mathbf Y_j:=[\Delta\boldsymbol{\eta}_{j,1},\ldots,\Delta\boldsymbol{\eta}_{j,N_j^{\mathrm{id}}}]\) and \(\mathbf Z_j:=\operatorname{col}\{[\mathbf H_{j,1},\ldots,\mathbf H_{j,N_j^{\mathrm{id}}}],[\mathbf U_{j,1},\ldots,\mathbf U_{j,N_j^{\mathrm{id}}}]\}\). If \(\operatorname{rank}(\mathbf Z_j)=n_j^{K}+m\), one may use \([A_{K,j}\;\;B_{K,j}]=\mathbf Y_j\mathbf Z_j^{\dagger}\), where \((\cdot)^{\dagger}\) denotes the Moore--Penrose pseudoinverse. The corresponding local lifted dynamics are written as \(\dot{\boldsymbol{\eta}}_j=A_{K,j}\boldsymbol{\eta}_j+B_{K,j}\mathbf u+\mathbf d_{K,j}\), where \(\mathbf d_{K,j}\) denotes the local Koopman-model mismatch and satisfies \(\|\mathbf d_{K,j}\|\leq\bar d_{K,j}\|\boldsymbol{\eta}_j\|\) on \(\Omega_j\).

Choose \(Q_{K,j}=Q_{K,j}^{\top}>0\) and \(R_{K,j}=R_{K,j}^{\top}>0\), and let \(P_j=P_j^{\top}>0\) solve \(A_{K,j}^{\top}P_j+P_jA_{K,j}-P_jB_{K,j}R_{K,j}^{-1}B_{K,j}^{\top}P_j+Q_{K,j}=0\). Define \(K_j:=R_{K,j}^{-1}B_{K,j}^{\top}P_j\), \(\mathbf u_{K,j}(\mathbf x):=-K_j\boldsymbol{\eta}_j(\mathbf x)\), and \(Q_{c,j}:=Q_{K,j}+K_j^{\top}R_{K,j}K_j\). Let \(c_{K,j}:=\lambda_{\min}(Q_{c,j})-2\|P_j\|\bar d_{K,j}\), and assume \(c_{K,j}>0\). With \(V_{K,j}:=\boldsymbol{\eta}_j^{\top}P_j\boldsymbol{\eta}_j\), choose \(\Omega_j^{0}:=\{\mathbf x\in\Omega_j:V_{K,j}(\mathbf x)\leq\rho_j\}\) such that \(\sqrt{\rho_j}\|K_jP_j^{-1/2}\|_{\infty,2}\leq(1-\delta_{u,j})\lambda\) for some \(\delta_{u,j}\in(0,1)\), where \(\|\cdot\|_{\infty,2}\) denotes the induced matrix norm from the Euclidean norm to the infinity norm. Then \(\|\mathbf u_{K,j}(\mathbf x)\|_{\infty}<\lambda\) on \(\Omega_j^{0}\), while \(\dot V_{K,j}\leq-c_{K,j}\|\boldsymbol{\eta}_j\|^2<0\). Hence \(\mathbf u_{K,j}\) is an admissible local stabilizing policy on \(\Omega_j^{0}\).

The next step maps this certified policy into the critic-weight space. Recall that the critic-induced saturated policy is \(\hat{\mathbf u}(\mathbf x;\hat{\mathbf W})=-\lambda\tanh\left(\frac{1}{2\lambda}R^{-1}\mathbf g^{\top}(\mathbf x)(\nabla\boldsymbol{\phi}(\mathbf x))^{\top}\hat{\mathbf W}\right)\). Define \(\mathcal G(\mathbf x):=\mathbf g^{\top}(\mathbf x)(\nabla\boldsymbol{\phi}(\mathbf x))^{\top}\). Since \(\|\mathbf u_{K,j}(\mathbf x)\|_{\infty}<\lambda\), the exact matching condition \(\hat{\mathbf u}(\mathbf x;\hat{\mathbf W})=\mathbf u_{K,j}(\mathbf x)\) is equivalent to \(\mathcal G(\mathbf x)\hat{\mathbf W}=\mathbf y_j(\mathbf x)\), where \(\mathbf y_j(\mathbf x):=-2\lambda R\operatorname{artanh}(\lambda^{-1}\mathbf u_{K,j}(\mathbf x))\) denotes the inverse-policy matching target and \(\operatorname{artanh}(\cdot)\) is applied componentwise.

Select samples \(\{\mathbf x_{j,i}\}_{i=1}^{N_j^{\mathrm{fit}}}\subset\Omega_j^{0}\) from the same offline data and define \(\mathbf G_j:=\operatorname{col}\{\mathcal G(\mathbf x_{j,1}),\ldots,\mathcal G(\mathbf x_{j,N_j^{\mathrm{fit}}})\}\) and \(\mathbf y_j^{\mathrm{fit}}:=\operatorname{col}\{\mathbf y_j(\mathbf x_{j,1}),\ldots,\mathbf y_j(\mathbf x_{j,N_j^{\mathrm{fit}}})\}\), where \(N_j^{\mathrm{fit}}\) denotes the number of offline samples used for inverse-policy fitting.

To guarantee \(\hat{\mathbf u}(\mathbf 0;\hat{\mathbf W})=\mathbf 0\), define \(\mathcal N_0:=\{\hat{\mathbf W}\in\mathbb R^{L}:\mathcal G(\mathbf 0)\hat{\mathbf W}=\mathbf 0\}\). The Koopman-induced initial critic center is selected as \(\hat{\mathbf W}_{j}^{K}:=\arg\min_{\hat{\mathbf W}\in\mathcal N_0}\{\|\mathbf G_j\hat{\mathbf W}-\mathbf y_j^{\mathrm{fit}}\|^2+\kappa_W\|\hat{\mathbf W}\|^2\}\), where \(\kappa_W\geq0\) denotes the regularization coefficient.

Define the inverse-policy mismatch \(\mathbf e_{\mathrm{inv},j}(\mathbf x):=\mathcal G(\mathbf x)\hat{\mathbf W}_{j}^{K}-\mathbf y_j(\mathbf x)\), and suppose \(\|\mathbf e_{\mathrm{inv},j}(\mathbf x)\|\leq\bar e_{\mathrm{inv},j}\|\boldsymbol{\eta}_j(\mathbf x)\|\). Assume also that \(\|\mathcal G(\mathbf x)-\mathcal G(\mathbf 0)\|\leq\ell_{\mathcal G,j}\|\mathbf x\|\) on \(\Omega_j^{0}\), where \(\ell_{\mathcal G,j}>0\) denotes the corresponding local Lipschitz constant. Since the componentwise hyperbolic tangent is globally one-Lipschitz, every \(\hat{\mathbf W}\in\mathcal N_0\) satisfies \(\|\hat{\mathbf u}(\mathbf x;\hat{\mathbf W})-\mathbf u_{K,j}(\mathbf x)\|\leq[\varepsilon_{u,j}+L_{uW,j}\|\hat{\mathbf W}-\hat{\mathbf W}_{j}^{K}\|]\|\boldsymbol{\eta}_j(\mathbf x)\|\), where \(\varepsilon_{u,j}:=\frac{1}{2}\|R^{-1}\|\bar e_{\mathrm{inv},j}\) and \(L_{uW,j}:=\frac{\|R^{-1}\|\ell_{\mathcal G,j}}{2c_{\eta,j}^{-}}\).

\begin{proposition}[Offline certified critic initialization]
\label{prop:offline_critic_initialization}
For any \(\sigma_j\in(0,1)\), suppose \(\frac{(1-\sigma_j)c_{K,j}}{2\|P_jB_{K,j}\|}>\varepsilon_{u,j}\), and define \(r_j^{W}:=\frac{\frac{(1-\sigma_j)c_{K,j}}{2\|P_jB_{K,j}\|}-\varepsilon_{u,j}}{L_{uW,j}}\). Then every initial critic estimate \(\hat{\mathbf W}^{(0)}\in\mathcal W_j^{\mathrm{adm}}\), where \(\mathcal W_j^{\mathrm{adm}}:=\{\hat{\mathbf W}^{(0)}\in\mathcal N_0:\|\hat{\mathbf W}^{(0)}-\hat{\mathbf W}_{j}^{K}\|\leq r_j^{W}\}\), induces an admissible saturated policy on \(\Omega_j^{0}\).
\end{proposition}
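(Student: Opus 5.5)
The plan is a perturbation argument. We take the lifted quadratic $V_{K,j}=\boldsymbol{\eta}_j^{\top}P_j\boldsymbol{\eta}_j$ as a common Lyapunov function and view the critic-induced policy $\hat{\mathbf u}(\mathbf x;\hat{\mathbf W}^{(0)})$ as $\mathbf u_{K,j}$ plus a state-proportional perturbation. Along $\dot{\boldsymbol{\eta}}_j=A_{K,j}\boldsymbol{\eta}_j+B_{K,j}\hat{\mathbf u}+\mathbf d_{K,j}$, substituting $\hat{\mathbf u}=\mathbf u_{K,j}+(\hat{\mathbf u}-\mathbf u_{K,j})$ and using the Riccati identity gives
\[
\dot V_{K,j}=\boldsymbol{\eta}_j^{\top}\big[(A_{K,j}-B_{K,j}K_j)^{\top}P_j+P_j(A_{K,j}-B_{K,j}K_j)\big]\boldsymbol{\eta}_j+2\boldsymbol{\eta}_j^{\top}P_j\mathbf d_{K,j}+2\boldsymbol{\eta}_j^{\top}P_jB_{K,j}(\hat{\mathbf u}-\mathbf u_{K,j}).
\]
The first bracket equals $-Q_{c,j}$. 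Combining it with $\|\mathbf d_{K,j}\|\le\bar d_{K,j}\|\boldsymbol{\eta}_j\|$ reproduces the nominal estimate $-c_{K,j}\|\boldsymbol{\eta}_j\|^2$ already established for $\mathbf u_{K,j}$.

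For the perturbation term, we invoke the bound stated just before the proposition. For every $\hat{\mathbf W}\in\mathcal N_0$,
\[
\|\hat{\mathbf u}-\mathbf u_{K,j}\|\le[\varepsilon_{u,j}+L_{uW,j}\|\hat{\mathbf W}-\hat{\mathbf W}_j^K\|]\|\boldsymbol{\eta}_j\|.
\]
This bound rests on three ingredients: the one-Lipschitz property of $\tanh$, the identity $\mathbf u_{K,j}=-\lambda\tanh(\frac{1}{2\lambda}R^{-1}\mathbf y_j)$, and the splitting $\mathcal G(\mathbf x)(\hat{\mathbf W}-\hat{\mathbf W}_j^K)=(\mathcal G(\mathbf x)-\mathcal G(\mathbf 0))(\hat{\mathbf W}-\hat{\mathbf W}_j^K)$. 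The splitting is valid because both weights lie in $\mathcal N_0$. The remaining steps are $\|\mathcal G(\mathbf x)-\mathcal G(\mathbf 0)\|\le\ell_{\mathcal G,j}\|\mathbf x\|$ and $\|\mathbf x\|\le\|\boldsymbol{\eta}_j\|/c^-_{\eta,j}$, which we will recheck when writing out the details. For $\hat{\mathbf W}^{(0)}\in\mathcal W_j^{\mathrm{adm}}$, the definition of $r_j^W$ gives
\[
\varepsilon_{u,j}+L_{uW,j}r_j^W=\frac{(1-\sigma_j)c_{K,j}}{2\|P_jB_{K,j}\|}.
\]
Hence
\[
2\boldsymbol{\eta}_j^{\top}P_jB_{K,j}(\hat{\mathbf u}-\mathbf u_{K,j})\le(1-\sigma_j)c_{K,j}\|\boldsymbol{\eta}_j\|^2,
\]
and therefore $\dot V_{K,j}\le-\sigma_jc_{K,j}\|\boldsymbol{\eta}_j\|^2\le-\sigma_jc_{K,j}(c^-_{\eta,j})^2\|\mathbf x\|^2$. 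This is negative definite, so it yields asymptotic stability in the absence of exogenous signals.

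The remaining conditions of Definition~\ref{def:admissible_secure_policy} are routine. The input constraint $|\hat u_\ell|<\lambda$ holds by the $\tanh$ structure, as in \eqref{eq:learned_input_bound}. The condition $\hat{\mathbf u}(\mathbf 0)=\mathbf 0$ follows from $\hat{\mathbf W}^{(0)}\in\mathcal N_0$. Continuity follows from continuity of $\mathbf g$ and $\nabla\boldsymbol\phi$.

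For finite cost, the plan is as follows. The state converges exponentially, since $V_{K,j}$ is sandwiched between multiples of $\|\mathbf x\|^2$ through $P_j>0$ and the lifting bounds. $Q(\mathbf x)$ is bounded by a sum of powers of $\|\mathbf x\|$. $U(\hat{\mathbf u})$ is bounded by a multiple of $\|\hat{\mathbf u}\|^2$ near zero, and $\|\hat{\mathbf u}\|\le\|\mathbf u_{K,j}\|+\|\hat{\mathbf u}-\mathbf u_{K,j}\|$ is linear in $\|\boldsymbol{\eta}_j\|$. Each integrand is therefore dominated by decaying exponentials (the $\|\mathbf x\|^{2\alpha}$ term also decays exponentially), so the cost integral is finite.

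The main obstacle is invariance. The Lyapunov decrease is certified only on $\Omega_j^0\subset\Omega_j$, where the Koopman mismatch bound and the Lipschitz bound on $\mathcal G$ hold. I would argue that $\Omega_j^0$ is a sublevel set of $V_{K,j}$, which is nonincreasing along the perturbed closed loop. A trajectory starting in $\Omega_j^0$ thus cannot reach the boundary $V_{K,j}=\rho_j$ with positive derivative, and a standard first-exit-time contradiction shows it stays in $\Omega_j^0$. This step needs $\Omega_j^0$ to be exactly that sublevel set relative to $\Omega_j$, that is, not clipped by $\partial\Omega_j$; I would state this explicitly as the interpretation of $\Omega_j^0$.
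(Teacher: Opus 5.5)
Your proposal is correct and follows the same route as the paper's proof. You add and subtract \(\mathbf u_{K,j}\) in \(\dot V_{K,j}\), bound the perturbation with the pre-stated estimate \(\|\hat{\mathbf u}-\mathbf u_{K,j}\|\le[\varepsilon_{u,j}+L_{uW,j}\|\hat{\mathbf W}-\hat{\mathbf W}_j^K\|]\|\boldsymbol{\eta}_j\|\), use the definition of \(r_j^W\) to get \(\dot V_{K,j}\le-\sigma_jc_{K,j}\|\boldsymbol{\eta}_j\|^2\), and read off \(\hat{\mathbf u}(\mathbf 0)=\mathbf 0\) from \(\mathcal N_0\) and \(|\hat u_\ell|<\lambda\) from the \(\tanh\) structure, exactly as the paper does. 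Your extra steps fill in points the paper's short proof leaves implicit without changing the argument: rederiving that perturbation bound, the finite-cost argument, and the invariance of \(\Omega_j^0\) under your explicit assumption that this sublevel set is not clipped by \(\partial\Omega_j\).
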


\begin{proof}
Along the true lifted dynamics under \(\hat{\mathbf u}(\mathbf x;\hat{\mathbf W}^{(0)})\), add and subtract \(\mathbf u_{K,j}\). Then \(\dot V_{K,j}\leq-[c_{K,j}-2\|P_jB_{K,j}\|(\varepsilon_{u,j}+L_{uW,j}\|\hat{\mathbf W}^{(0)}-\hat{\mathbf W}_{j}^{K}\|)]\|\boldsymbol{\eta}_j\|^2\). For every \(\hat{\mathbf W}^{(0)}\in\mathcal W_j^{\mathrm{adm}}\), the definition of \(r_j^{W}\) gives \(\dot V_{K,j}\leq-\sigma_jc_{K,j}\|\boldsymbol{\eta}_j\|^2<0\). Moreover, \(\hat{\mathbf W}^{(0)}\in\mathcal N_0\) implies \(\hat{\mathbf u}(\mathbf 0;\hat{\mathbf W}^{(0)})=\mathbf 0\), while the hyperbolic-tangent parameterization guarantees \(|\hat u_{\ell}(\mathbf x;\hat{\mathbf W}^{(0)})|<\lambda\) for every component \(\ell=1,\ldots,m\). Thus the induced initial policy is admissible on \(\Omega_j^{0}\).
\end{proof}

Repeating the construction over the \(M\) offline data batches yields \(\{\hat{\mathbf W}_{j}^{K},r_j^{W},\Omega_j^{0}\}_{j=1}^{M}\). If the active operating region is known, one may initialize with any \(\hat{\mathbf W}(0)\in\mathcal W_j^{\mathrm{adm}}\). For a single initialization required to be valid across all certified offline regimes, define \(\Omega^{0}:=\bigcap_{j=1}^{M}\Omega_j^{0}\) and \(\mathcal W_{0}^{\mathrm{adm}}:=\bigcap_{j=1}^{M}\mathcal W_j^{\mathrm{adm}}\).

Since each \(\mathcal W_j^{\mathrm{adm}}\) is the intersection of a Euclidean ball and the linear subspace \(\mathcal N_0\), \(\mathcal W_{0}^{\mathrm{adm}}\) is convex whenever it is nonempty. Therefore, if \(\hat{\mathbf W}_{h}^{(0)}\in\mathcal W_{0}^{\mathrm{adm}}\), \(h=1,\ldots,N_W\), then \(\operatorname{co}\{\hat{\mathbf W}_{1}^{(0)},\ldots,\hat{\mathbf W}_{N_W}^{(0)}\}\subseteq\mathcal W_{0}^{\mathrm{adm}}\), where \(\operatorname{co}\{\cdot\}\) denotes the convex hull. Hence every initialization \(\hat{\mathbf W}(0)=\sum_{h=1}^{N_W}\alpha_h\hat{\mathbf W}_{h}^{(0)}\), with \(\alpha_h\geq0\) and \(\sum_{h=1}^{N_W}\alpha_h=1\), generates an admissible initial policy on \(\Omega^{0}\).

It is important that convexification is performed only after the common admissible region has been certified. In general, two independently stabilizing critic initializations \(\hat{\mathbf W}_{j_1}^{K}\) and \(\hat{\mathbf W}_{j_2}^{K}\), \(j_1\neq j_2\), do not imply that \(\alpha\hat{\mathbf W}_{j_1}^{K}+(1-\alpha)\hat{\mathbf W}_{j_2}^{K}\) is stabilizing for every \(\alpha\in[0,1]\), because the mapping from critic weights to the implemented policy is nonlinear through \(\tanh(\cdot)\).

A robust single initialization may, for example, be selected as the Chebyshev center \(\hat{\mathbf W}_{0}^{\star}\) of \(\mathcal W_{0}^{\mathrm{adm}}\), obtained from \(\max_{\hat{\mathbf W},r}r\) subject to \(\mathcal G(\mathbf 0)\hat{\mathbf W}=\mathbf 0\), \(\|\hat{\mathbf W}-\hat{\mathbf W}_{j}^{K}\|+r\leq r_j^{W}\) for all \(j=1,\ldots,M\), and \(r\geq0\), where \(r\) denotes the radius of the largest Euclidean ball centered at \(\hat{\mathbf W}\) and contained in the common admissible region. Thus \(\hat{\mathbf W}(0)=\hat{\mathbf W}_{0}^{\star}\) provides a data-informed alternative to zero or random initialization.

\begin{remark}[Relation to the replay learning mechanism]
\label{rem:koopman_replay_relation}
The proposed construction changes only the initialization of the critic estimate. After online learning is activated, the history stack remains exactly that of Section~III-C, \(\mathcal H=\{(\Delta\boldsymbol{\phi}_k,\hat{\mathcal R}_k)\}_{k=1}^{N}\), where \(\Delta\boldsymbol{\phi}_k=\boldsymbol{\phi}(\mathbf x(t_k))-\boldsymbol{\phi}(\mathbf x(t_k-\Delta T))\) and \(\hat{\mathcal R}_k=\int_{t_k-\Delta T}^{t_k}\ell(\mathbf x(\tau),\hat{\mathbf u}(\tau),\hat{\mathbf a}(\tau),\hat{\mathbf d}(\tau))\,d\tau\).

Accordingly, the replayed integral Bellman--Isaacs residual remains \(\xi_k(t)=\hat{\mathbf W}^{\top}(t)\Delta\boldsymbol{\phi}_k+\hat{\mathcal R}_k\), while \(m_k=1+\Delta\boldsymbol{\phi}_k^{\top}\Delta\boldsymbol{\phi}_k\), \(\boldsymbol{\psi}_k=\Delta\boldsymbol{\phi}_k/m_k\), and \(s_k(t)=\xi_k(t)/m_k\) remain unchanged.

Likewise, the ideal critic weight \(\mathbf W\) and the critic-weight estimation error \(\widetilde{\mathbf W}(t)=\hat{\mathbf W}(t)-\mathbf W\) retain exactly the definitions used in the main analysis. Thus, the Bellman--Isaacs equation, critic update, and finite-data informativity condition are unaffected by the Koopman-based initialization.

The Koopman rank requirement \(\operatorname{rank}(\mathbf Z_j)=n_j^{K}+m\) and the replay informativity condition \(\lambda_{\min}(\Psi)\geq\Lambda\), where \(\Psi:=\sum_{k=1}^{N}\boldsymbol{\psi}_k\boldsymbol{\psi}_k^{\top}\), serve different purposes: the former identifies a finite-dimensional lifted control model, whereas the latter supplies informative directions for critic learning. Neither condition implies the other.

If an offline trajectory was collected under the same policy and signal convention used to define \(\hat{\mathcal R}_k\), the associated \(\Delta\boldsymbol{\phi}_k\) and \(\hat{\mathcal R}_k\) may also be reused in the history stack \(\mathcal H\). Otherwise, the offline data are used only for the Koopman warm start and the Bellman--Isaacs replay stack is populated according to Section~\ref{subsec:experience_replay}.
\end{remark}

\begin{remark}[Scope of the certificate]
\label{rem:koopman_initialization_scope}
The construction certifies the critic-induced policy at online initialization, namely \(\hat{\mathbf W}(0)\in\mathcal W_{0}^{\mathrm{adm}}\). It does not by itself guarantee that the unconstrained online critic update preserves \(\hat{\mathbf W}(t)\in\mathcal W_{0}^{\mathrm{adm}}\) for all future time. If such transient safety is required, the learning vector field may be projected onto the tangent cone of \(\mathcal W_{0}^{\mathrm{adm}}\). A complete projected-learning and fixed-time analysis is beyond the scope of this appendix and is left for separate investigation.
\end{remark}

\begin{remark}[Interpretation]
\label{rem:koopman_initialization_interpretation}
The essential role of the Koopman model is therefore not to replace the unknown nonlinear plant in the online HJI learning problem. Instead, it converts finite pre-deployment trajectory information into a certified stabilizing policy, which is subsequently mapped through the inverse saturated-policy relation into a non-arbitrary admissible region of the critic-estimate space. The online critic then continues to learn through the original finite-window Bellman--Isaacs residual without requiring the identified Koopman model to remain exact.
\end{remark}

% ============================================================
\section{Simulation Studies}
\label{sec:simulation}
% ============================================================

To facilitate implementation, the algorithm presented in the paper is outlined below.

\begin{algorithm}[H]
\footnotesize
\caption{Koopman-Warm-Started Fixed-Time Critic-Only IRL}
\label{alg:koopman_ft_irl_compact}

\begin{algorithmic}[1]

\Require $\bx(0)$, pre-deployment data $\mathcal D^{\mathrm{off}}$, $\Delta T$, $N$, $0<q<1<r$.

\State Lift the available offline data through $\boldsymbol{\eta}=\mathcal K(\bx)$ and identify a finite-dimensional Koopman model $(A_K,B_K)$.

\State Construct a bounded stabilizing Koopman policy $\mathbf u_K(\bx)$ on the certified local operating region.

\State Define $\mathcal G(\bx)=\mathbf g^{\top}(\bx)(\nabla\boldsymbol{\phi}(\bx))^{\top}$ and recover the critic warm start from
$\mathcal G(\bx_\ell)\hat{\mathbf W}^{(0)}
\approx
-2\lambda R\tanh^{-1}\!\big(\lambda^{-1}\mathbf u_K(\bx_\ell)\big)$
over the stored offline states.

\State Set $\hat{\mathbf W}(0)=\hat{\mathbf W}^{(0)}$ and initialize
$\hat V(\bx)=\hat{\mathbf W}^{\top}\boldsymbol{\phi}(\bx)$.

\State Reuse compatible offline samples, or collect finite transient data, to form
$\mathcal H=\{(\Delta\boldsymbol{\phi}_k,\hat{\mathcal R}_k)\}_{k=1}^{N}$
until
$\sum_{k=1}^{N}\boldsymbol{\psi}_k\boldsymbol{\psi}_k^{\top}
\succeq\underline{\lambda}\mathbf I$.

\While{$t<t_f$ \Comment{\(t_f\): final simulation time}}

\State Generate $\hat{\mathbf u}$, $\hat{\mathbf a}$, and $\hat{\mathbf d}$ from
$(\nabla\boldsymbol{\phi})^{\top}\hat{\mathbf W}$ using
\eqref{eq:ideal_disturbance_NN}--\eqref{eq:learned_attack_policy}.

\State Acquire the trajectory over $[t-\Delta T,t]$ and compute
$\Delta\boldsymbol{\phi}(t)$ and $\hat{\mathcal R}(t)$.

\State Evaluate the normalized current and replay Bellman--Isaacs residuals
$s(t)$ and $\{s_k(t)\}_{k=1}^{N}$.

\State Update $\hat{\mathbf W}$ using the two-power replay law
\eqref{eq:fixed_time_loss_gradient}.

\EndWhile

\Ensure Data-driven critic warm start, bounded policy $|\hat u_j|<\lambda$, finite-data learning without persistent excitation, and practical fixed-time closed-loop convergence.

\end{algorithmic}
\end{algorithm}

\FloatBarrier

% \begin{algorithm}[H]
% \footnotesize
% \caption{Fixed-Time Critic-Only IRL}
% \label{alg:ft_irl_compact}
% \begin{algorithmic}[1]
% \Require $\bm{x}(0)$, $\widehat{\bm W}(0)$, $\Delta T$, $N$, $0<q<1<r$.
% \State Initialize $\hat V(\bm{x})=\widehat{\bm W}^{\top}\boldsymbol{\phi}(\bm{x})$.
% \State Collect a finite history stack
% $\mathcal H=\{(\Delta\boldsymbol{\phi}_k,\hat{\mathcal R}_k)\}_{k=1}^{N}$
% until
% $\sum_{k=1}^{N}\boldsymbol{\psi}_k\boldsymbol{\psi}_k^{\top}
% \succeq\underline{\lambda}\mathbf I$.
% \While{$t<t_f$ \Comment{$t_f$: final simulation time}}
% \State Generate $\hat{\mathbf u}$, $\hat{\mathbf a}$, and $\hat{\mathbf d}$ from
% $(\nabla\boldsymbol{\phi})^{\top}\widehat{\bm W}$ using
% \eqref{eq:ideal_disturbance_NN}--\eqref{eq:learned_attack_policy}.
% \State Acquire the trajectory over $[t-\Delta T,t]$ and compute
% $\Delta\boldsymbol{\phi}(t)$ and $\hat{\mathcal R}(t)$.
% \State Evaluate the normalized current and replay
% Bellman residuals $s(t)$ and $\{s_k\}_{k=1}^{N}$.
% \State Update $\widehat{\bm W}$ using the two-power replay law
% \eqref{eq:fixed_time_loss_gradient}.
% \EndWhile
% \Ensure Bounded policy $|\hat u_j|<\lambda$ and practical
% fixed-time closed-loop convergence.
% \end{algorithmic}
% \end{algorithm}
% \FloatBarrier

\begingroup
\setlength{\abovedisplayskip}{3pt}
\setlength{\belowdisplayskip}{3pt}
\setlength{\abovedisplayshortskip}{2pt}
\setlength{\belowdisplayshortskip}{2pt}

% ------------------------------------------------------------
\subsection{Simulation Setup}
\label{subsec:simulation_setup}
% ------------------------------------------------------------

The present simulation considers the regulation problem of stabilizing the manipulator at the origin. The two-link model is adopted only as a simple representative nonlinear benchmark; the proposed framework is not specific to this plant and can be extended to higher-dimensional control-affine robotic systems.

A standard two-link planar manipulator is considered to evaluate
the proposed method. Define
\begin{equation}
    \bx
    =
    \col
    \left\{
        q_1,q_2,\dot q_1,\dot q_2
    \right\}
    =
    \col
    \left\{
        x_1,x_2,x_3,x_4
    \right\}.
\label{eq:sim_state}
\end{equation}
The manipulator dynamics are
\begin{equation}
\begin{aligned}
    \mathbf M(\bm q)\ddot{\bm q}
    +
    \mathbf C(\bm q,\dot{\bm q})\dot{\bm q}
    +
    \mathbf G(\bm q)
    +
    \mathbf D\dot{\bm q}
    =
    \bu+\ba+\bm\omega,
\end{aligned}
\label{eq:sim_manipulator}
\end{equation}
where
\begin{equation}
\begin{aligned}
    \mathbf M(\bm q)
    =
    \begin{bmatrix}
        p_1+2p_3\cos q_2
        &
        p_2+p_3\cos q_2\\
        p_2+p_3\cos q_2
        &
        p_2
    \end{bmatrix},
\end{aligned}
\label{eq:sim_mass_matrix}
\end{equation}
\begin{equation}
\begin{aligned}
    \mathbf C(\bm q,\dot{\bm q})
    =
    \begin{bmatrix}
        -p_3\dot q_2\sin q_2
        &
        -p_3(\dot q_1+\dot q_2)\sin q_2\\
        p_3\dot q_1\sin q_2
        &
        0
    \end{bmatrix},
\end{aligned}
\label{eq:sim_coriolis_matrix}
\end{equation}
and
\begin{equation}
    \mathbf G(\bm q)
    =
    \begin{bmatrix}
        g_1\sin q_1
        +
        g_2\sin(q_1+q_2)\\
        g_2\sin(q_1+q_2)
    \end{bmatrix},
    \qquad
    \mathbf D
    =
    \diag(d_1,d_2).
\label{eq:sim_gravity_damping}
\end{equation}
The selected gravity convention ensures that
\(\bx=\bm 0\) is an equilibrium of the unforced system.

The control-affine representation corresponding to
\eqref{eq:plant} is
\begin{equation}
    f(\bx)
    =
    \begin{bmatrix}
        x_3\\
        x_4\\
        -\mathbf M^{-1}(\bm q)
        \left[
            \mathbf C(\bm q,\dot{\bm q})\dot{\bm q}
            +
            \mathbf G(\bm q)
            +
            \mathbf D\dot{\bm q}
        \right]
    \end{bmatrix},
\label{eq:sim_drift}
\end{equation}
\begin{equation}
    \mathbf g(\bx)
    =
    \begin{bmatrix}
        \bm 0_{2\times2}\\
        \mathbf M^{-1}(\bm q)
    \end{bmatrix},
    \qquad
    \bd(\bx,t)
    =
    \begin{bmatrix}
        \bm 0_2\\
        \mathbf M^{-1}(\bm q)\bm\omega(t)
    \end{bmatrix}.
\label{eq:sim_input_disturbance_matrices}
\end{equation}
With the parameters in Table~\ref{tab:sim_parameters},
\(\mathbf M(\bm q)\) is uniformly positive definite for every
\(\bm q\).

The matched FDI attack is selected as
\begin{equation}
    \ba(t)
    =
    \begin{cases}
    \begin{bmatrix}
        0.65+0.30\sin(1.8t)\\
        -0.50+0.25\cos(1.4t)
    \end{bmatrix},
    &4\leq t\leq10,\\[4mm]
    \bm 0_2,
    &\text{otherwise},
    \end{cases}
\label{eq:sim_attack}
\end{equation}
while the unknown external torque disturbance is
\begin{equation}
    \bm\omega(t)
    =
    \begin{bmatrix}
        0.10\sin(2.6t)+0.04\cos(5t)\\
        0.08\cos(2.1t)+0.03\sin(4.3t)
    \end{bmatrix}.
\label{eq:sim_external_disturbance}
\end{equation}

% The critic is implemented as
% a single-layer polynomial neural network with \(L=20\) fixed
% activation functions:
% \begin{equation}
% \begin{aligned}
%     \bphi(\bx)
%     =
%     \col\big\{&
%         x_ix_j,\quad 1\leq i\leq j\leq4;\\
%         &x_i^4,\quad 1\leq i\leq4;\\
%         &x_i^2x_j^2,\quad 1\leq i<j\leq4
%     \big\}.
% \end{aligned}
% \label{eq:sim_polynomial_basis}
% \end{equation}

% The critic is implemented as a single-layer RBF neural network with
% \(L=20\) fixed Gaussian basis functions,
% \begin{equation}
% \bphi(\bx)
% =
% \col\left\{
% \phi_1(\bx),\ldots,\phi_L(\bx)
% \right\},
% \qquad
% \phi_\ell(\bx)
% =
% \exp\left(
% -\frac{\|\bx-\boldsymbol{c}_\ell\|^2}{2\sigma_\ell^2}
% \right)
% -
% \exp\left(
% -\frac{\|\boldsymbol{c}_\ell\|^2}{2\sigma_\ell^2}
% \right),
% \end{equation}
% where \(\boldsymbol{c}_\ell\in\mathbb{R}^4\) and \(\sigma_\ell>0\)
% are the fixed centers and widths, respectively. The centers are
% distributed over the explored state region, and the constant offset
% ensures \(\bphi(\boldsymbol{0})=\boldsymbol{0}\).

The critic is implemented as a single-layer RBF neural network with
\(L=20\) fixed Gaussian basis functions,
\begin{equation}
\begin{aligned}
\bphi(\bx)
&=\col\{\phi_1(\bx),\ldots,\phi_L(\bx)\},\\
\phi_\ell(\bx)
&=\exp\!\left(
-\frac{\|\bx-\boldsymbol{\mu}_\ell\|^2}{2\sigma_\ell^2}
\right)\\
&\quad-\exp\!\left(
-\frac{\|\boldsymbol{\mu}_\ell\|^2}{2\sigma_\ell^2}
\right),
\quad \ell=1,\ldots,L .
\end{aligned}
\label{eq:sim_rbf_basis}
\end{equation}
Here, \(\boldsymbol{c}_\ell\in\mathbb{R}^4\) and \(\sigma_\ell>0\)
are fixed centers and widths selected over the explored state region.
The offset ensures \(\bphi(\boldsymbol{0})=\boldsymbol{0}\).

% The basis consists of ten quadratic neurons, four pure quartic
% neurons, and six mixed quartic neurons. It is continuously
% differentiable and satisfies
% \begin{equation}
%     \bphi(\bm 0)=\bm 0.
% \label{eq:sim_basis_origin}
% \end{equation}
% This polynomial critic structure is commonly used in
% continuous-time adaptive dynamic programming because its
% gradient is available analytically and no center or width
% selection is required.

The cost, constrained policy, virtual maximizing policies, and
critic update are implemented exactly as defined in
Sections~\ref{sec:prelim_problem} and
\ref{sec:fixed_time_irl}; they are not repeated here. The
numerical values are summarized in
Table~\ref{tab:sim_parameters}.

% \begin{table}[t]
% \caption{Plant, control, and learning parameters}
% \label{tab:sim_parameters}
% \centering
% \scriptsize
% \renewcommand{\arraystretch}{0.96}
% \setlength{\tabcolsep}{2.5pt}
% \resizebox{\columnwidth}{!}{
% \begin{tabular}{@{}llll@{}}
% \toprule
% Parameter & Value & Parameter & Value\\
% \midrule
% \(p_1\) & \(2.70\)
% &
% \(p_2\) & \(0.80\)\\
% \(p_3\) & \(0.35\)
% &
% \((g_1,g_2)\) & \((8.5,2.6)\)\\
% \((d_1,d_2)\) & \((0.12,0.08)\)
% &
% \(h\) & \(10^{-3}\,\mathrm{s}\)\\
% \(t_f\) & \(16\,\mathrm{s}\)
% &
% \(\Delta T\) & \(0.04\,\mathrm{s}\)\\
% \(\bx(0)\)
% &
% \(\col\{0.70,-0.55,0.20,-0.15\}\)
% &
% \(\lambda\) & \(8\)\\
% \(\mathbf Q_x\)
% &
% \(\diag(7,6,1.5,1.2)\)
% &
% \(\mathbf R\)
% &
% \(0.06\mathbf I_2\)\\
% \((\kappa_1,\kappa_2)\)
% &
% \((0.60,0.08)\)
% &
% \((\alpha,\beta)\)
% &
% \((0.70,1.50)\)\\
% \(\mathbf T\)
% &
% \(0.8\mathbf I_2\)
% &
% \(\mathbf S\)
% &
% \(\mathbf I_4\)\\
% \((\gamma_a,\gamma_d)\)
% &
% \((2.0,2.5)\)
% &
% \((q,r)\)
% &
% \((0.70,2.0)\)\\
% \(L\)
% &
% \(20\)
% &
% \(\mathbf\Gamma\)
% &
% \(3\mathbf I_{20}\)\\
% \(\sigma\)
% &
% \(10^{-3}\)
% &
% \(\hat{\mathbf W}(0)\)
% &
% \(\bm 0_{20}\)\\
% \(N\)
% &
% \(60\)
% &
% \(\underline\lambda\)
% &
% \(5\times10^{-2}\)\\
% \bottomrule
% \end{tabular}}
% \end{table}

\begin{table}[H]
\caption{Plant, control, and learning parameters}
\label{tab:sim_parameters}
\centering
\scriptsize
\renewcommand{\arraystretch}{0.96}
\setlength{\tabcolsep}{2.5pt}
\resizebox{\columnwidth}{!}{
\begin{tabular}{@{}llll@{}}
\toprule
Parameter & Value & Parameter & Value\\
\midrule
\(p_1\) & \(2.70\)
&
\(p_2\) & \(0.80\)\\
\(p_3\) & \(0.35\)
&
\((g_1,g_2)\) & \((8.5,2.6)\)\\
\((d_1,d_2)\) & \((0.12,0.08)\)
&
\(h\) & \(10^{-3}\,\mathrm{s}\)\\
\(t_f\) & \(16\,\mathrm{s}\)
&
\(\Delta T\) & \(0.04\,\mathrm{s}\)\\
\(\bx(0)\)
&
\(\col\{0.70,-0.55,0.20,-0.15\}\)
&
\(\lambda\) & \(8\)\\
\(\mathbf Q_x\)
&
\(\diag(7,6,1.5,1.2)\)
&
\(\mathbf R\)
&
\(0.06\mathbf I_2\)\\
\((\kappa_1,\kappa_2)\)
&
\((0.60,0.08)\)
&
\((\alpha,\beta)\)
&
\((0.70,1.50)\)\\
\(\mathbf T\)
&
\(0.8\mathbf I_2\)
&
\(\mathbf S\)
&
\(\mathbf I_4\)\\
\((\gamma_a,\gamma_d)\)
&
\((2.0,2.5)\)
&
\((q,r)\)
&
\((0.70,2.0)\)\\
\(L\)
&
\(20\)
&
\(\mathbf\Gamma\)
&
\(3\mathbf I_{20}\)\\
\(\sigma\)
&
\(10^{-3}\)
&
\(\hat{\mathbf W}(0)\)
&
\(\bm 0_{20}\)\\
\(N\)
&
\(60\)
&
\(\underline\lambda\)
&
\(5\times10^{-2}\)\\
\bottomrule
\end{tabular}}
\end{table}

% The differential equations are integrated using the fourth-order
% Runge--Kutta method. During the first \(2.5\,\mathrm{s}\), a
% small bounded two-channel multisine signal with peak magnitude
% \(0.05\) is used to populate the replay stack. The excitation is
% introduced before the hyperbolic-tangent map already defined in
% \eqref{eq:learned_control_policy}; therefore, the input constraint
% remains satisfied. The excitation is removed after the replay
% matrix reaches the prescribed finite-data level.

% A new finite-window datum is retained only when it increases
% \(\lambda_{\min}(\mathbf\Psi)\). Once
% \begin{equation}
%     \lambda_{\min}(\mathbf\Psi)
%     \geq
%     \underline\lambda
% \label{eq:sim_replay_requirement}
% \end{equation}
% is achieved, the history stack is frozen. All compared methods
% use the same initial condition, FDI signal, disturbance,
% integration step, and actuator bound.

% ------------------------------------------------------------
% The following subsections will be added after the simulation
% results have been generated.
% ------------------------------------------------------------

% \subsection{State and Control Responses}
% \label{subsec:state_control_results}

% \subsection{Critic Learning and Replay Informativity}
% \label{subsec:critic_results}

% \subsection{Comparative and Ablation Results}
% \label{subsec:ablation_results}

\endgroup

\subsection{Simulation Results}
\label{subsec:simulation_results}

The proposed architecture is evaluated in terms of resilient regulation,
actuator admissibility, finite-data critic learning, practical fixed-time
behavior, and sensitivity to critic initialization. The FDI attack is active
over \(4\leq t\leq10~\mathrm{s}\), while the external disturbance persists
throughout the simulation. Unless otherwise stated, the critic is initialized
by the proposed offline Koopman-based warm start.

% ================================================================
% CLOSED-LOOP REGULATION AND INPUT CONSTRAINT
% ================================================================

\begin{figure}[H]
    \centering
    \makebox[\columnwidth][c]{%
        \subfloat[Joint-position responses.]{
            \includegraphics[width=0.47\columnwidth]
            {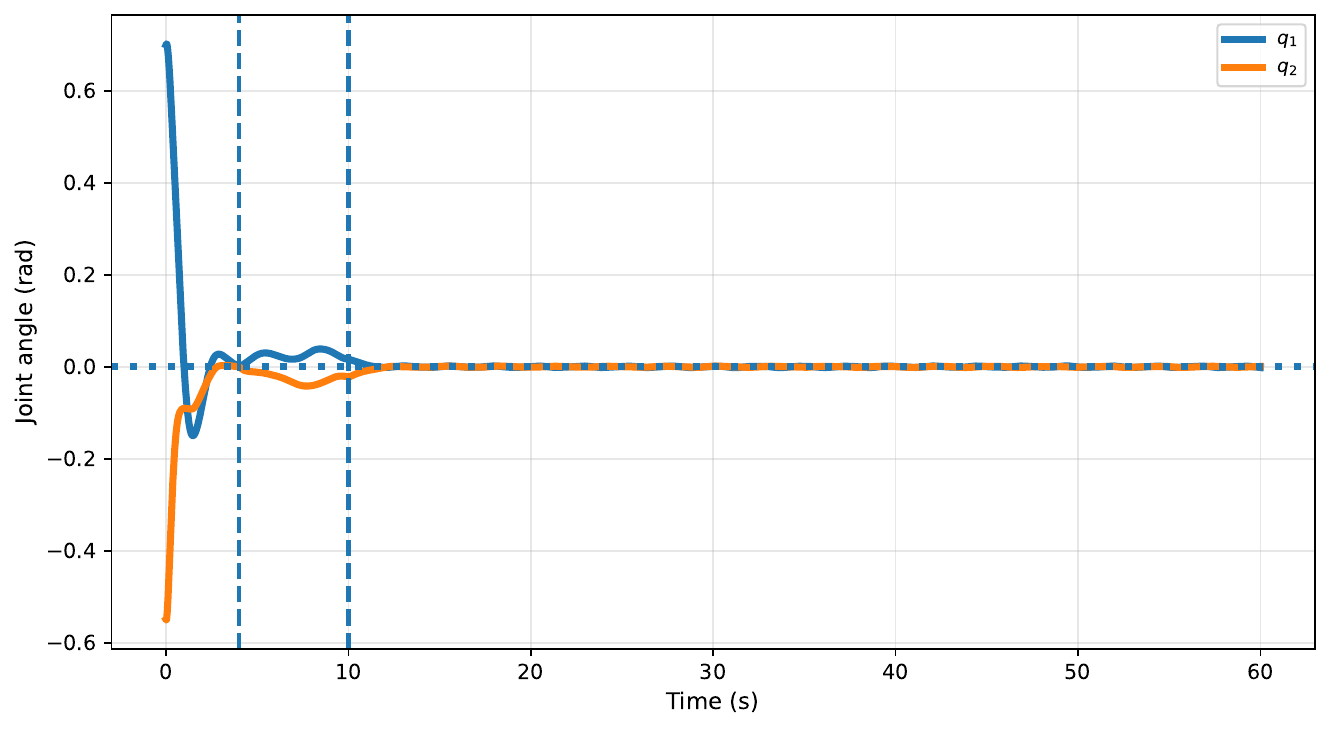}
            \label{fig:sim_joint_angles}
        }
        \hspace{0.015\columnwidth}
        \subfloat[Bounded control inputs.]{
            \includegraphics[width=0.47\columnwidth]
            {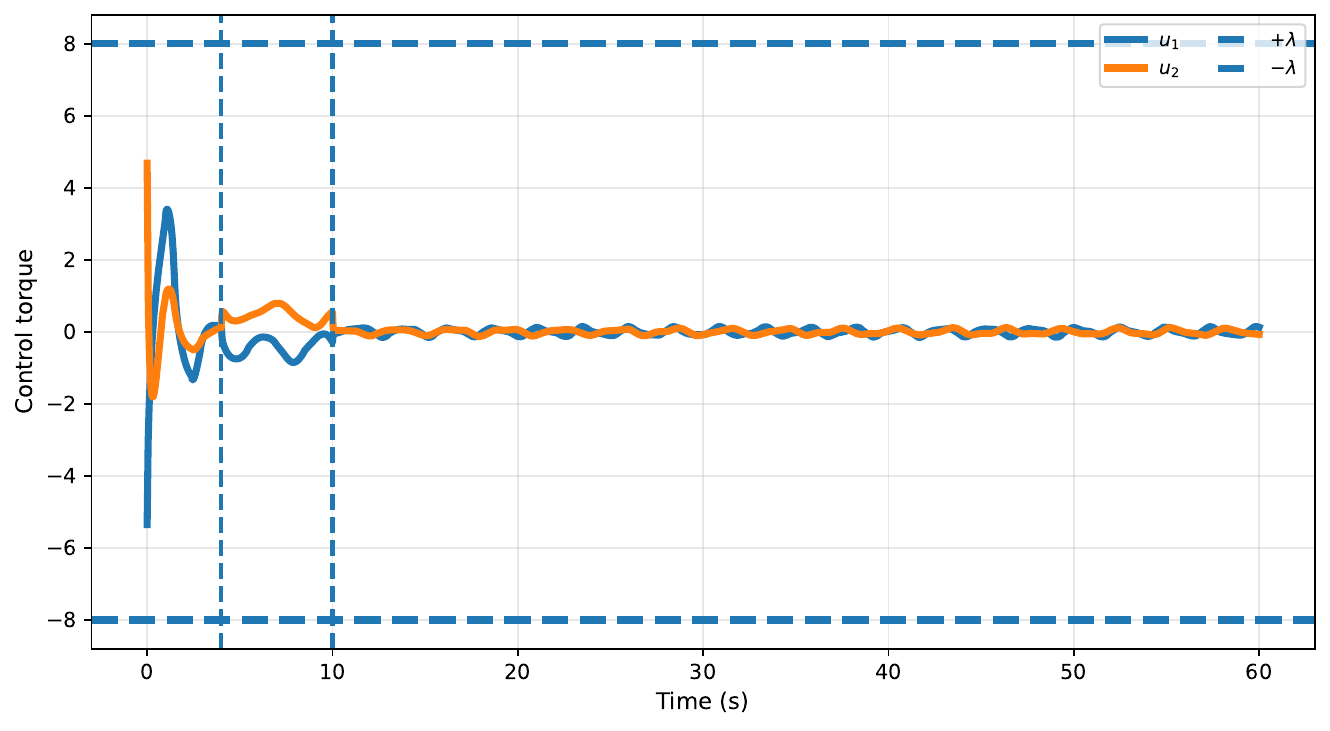}
            \label{fig:sim_control_inputs}
        }
    }
    \caption{Closed-loop regulation and actuator-constraint verification.}
    \label{fig:sim_closed_loop}
\end{figure}

Starting from
\(\bx(0)=\operatorname{col}\{0.70,-0.55,0.20,-0.15\}\),
Fig.~\ref{fig:sim_closed_loop}(a) shows rapid regulation of both joint
positions despite the subsequent attack and persistent disturbance.
Meanwhile, Fig.~\ref{fig:sim_closed_loop}(b) verifies
\(|\hat u_j(t)|<\lambda=8\) throughout the simulation, confirming that
the learned saturated policy preserves actuator admissibility by construction.

% ================================================================
% ATTACK AND DISTURBANCE
% ================================================================

\begin{figure}[H]
    \centering
    \makebox[\columnwidth][c]{%
        \subfloat[Matched FDI attack.]{
            \includegraphics[width=0.47\columnwidth]
            {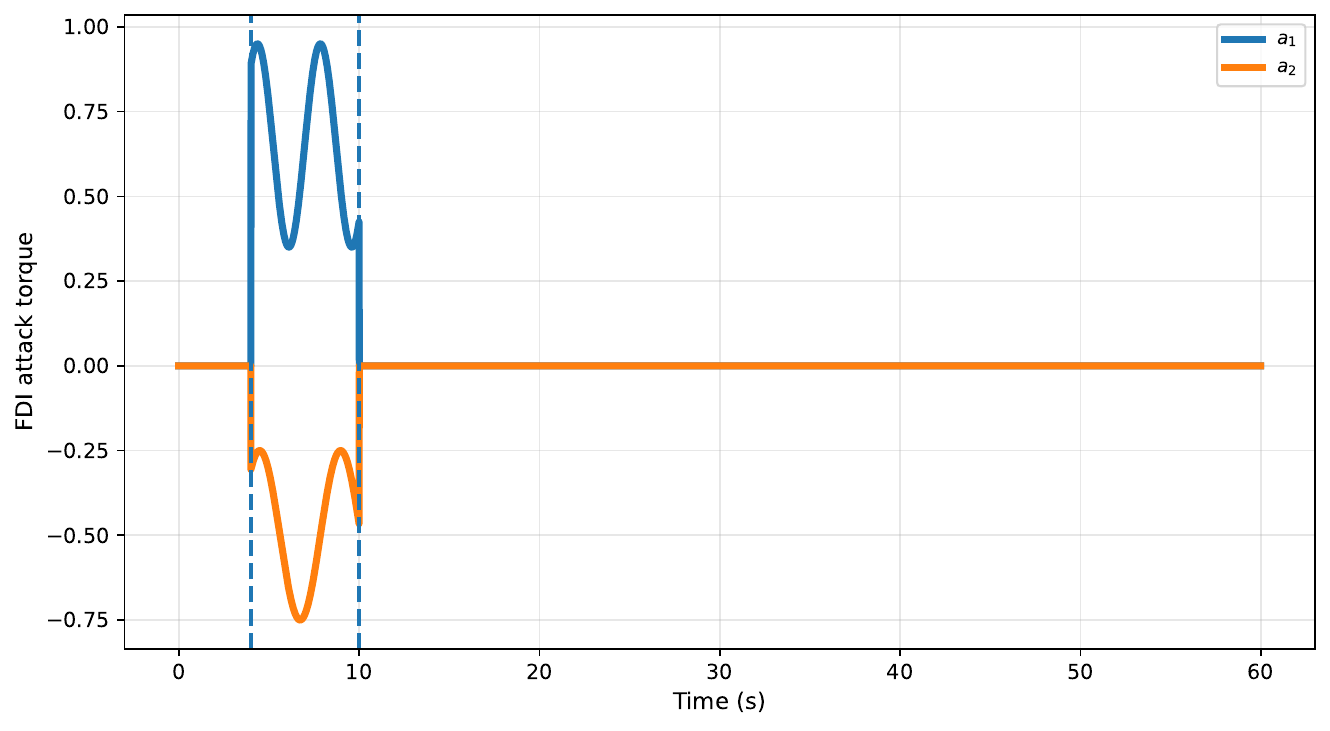}
            \label{fig:sim_fdi}
        }
        \hspace{0.015\columnwidth}
        \subfloat[External disturbance.]{
            \includegraphics[width=0.47\columnwidth]
            {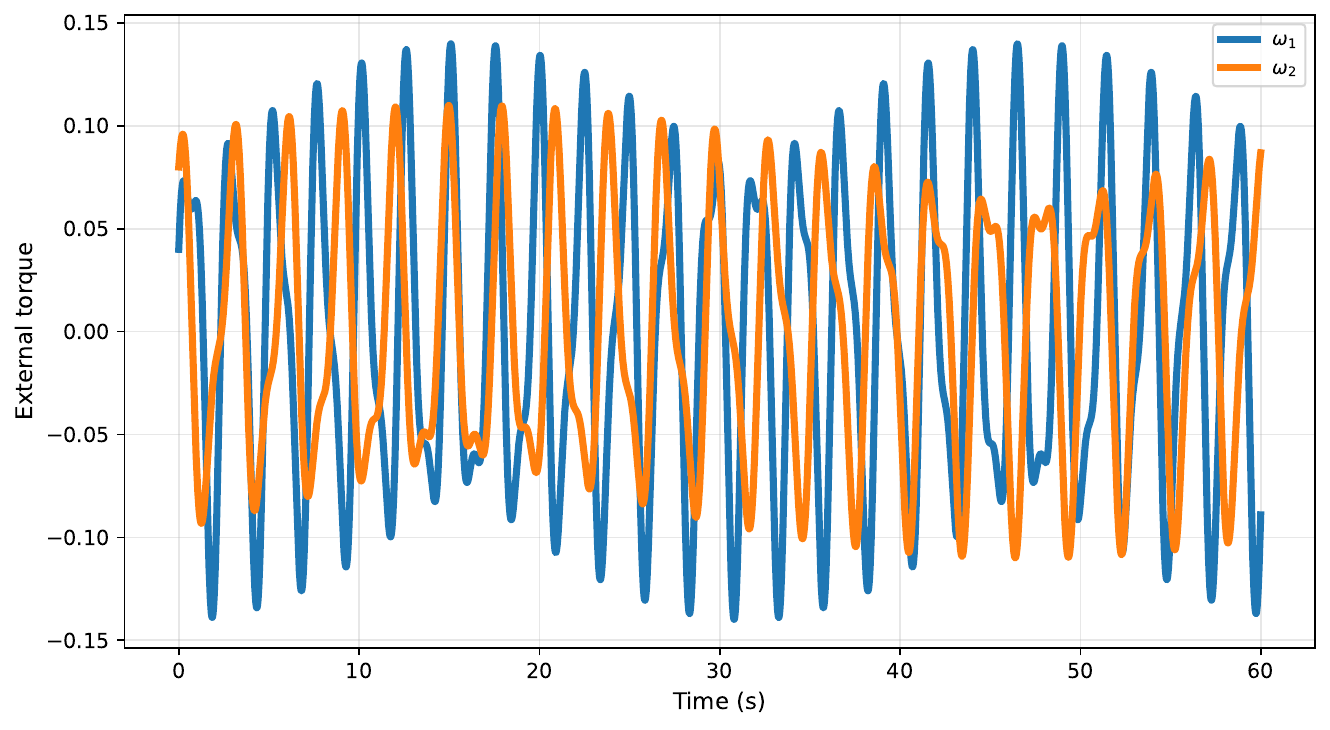}
            \label{fig:sim_disturbance}
        }
    }
    \caption{Adversarial and exogenous signals applied to the plant.}
    \label{fig:sim_adversarial_signals}
\end{figure}

As shown in Fig.~\ref{fig:sim_adversarial_signals}, the FDI channels reach
approximately \(0.95\) and \(0.75\) in magnitude, whereas the persistent
disturbance remains approximately within
\(|\omega_1|\leq0.14\) and \(|\omega_2|\leq0.11\).
Thus, the regulation in Fig.~\ref{fig:sim_closed_loop} is achieved under
simultaneous malicious actuation and nonvanishing exogenous disturbance
rather than under nominal operation.

% ================================================================
% CRITIC LEARNING AND BELLMAN RESIDUAL
% ================================================================

\begin{figure}[H]
    \centering
    \makebox[\columnwidth][c]{%
        \subfloat[Critic-weight estimates.]{
            \includegraphics[width=0.47\columnwidth]
            {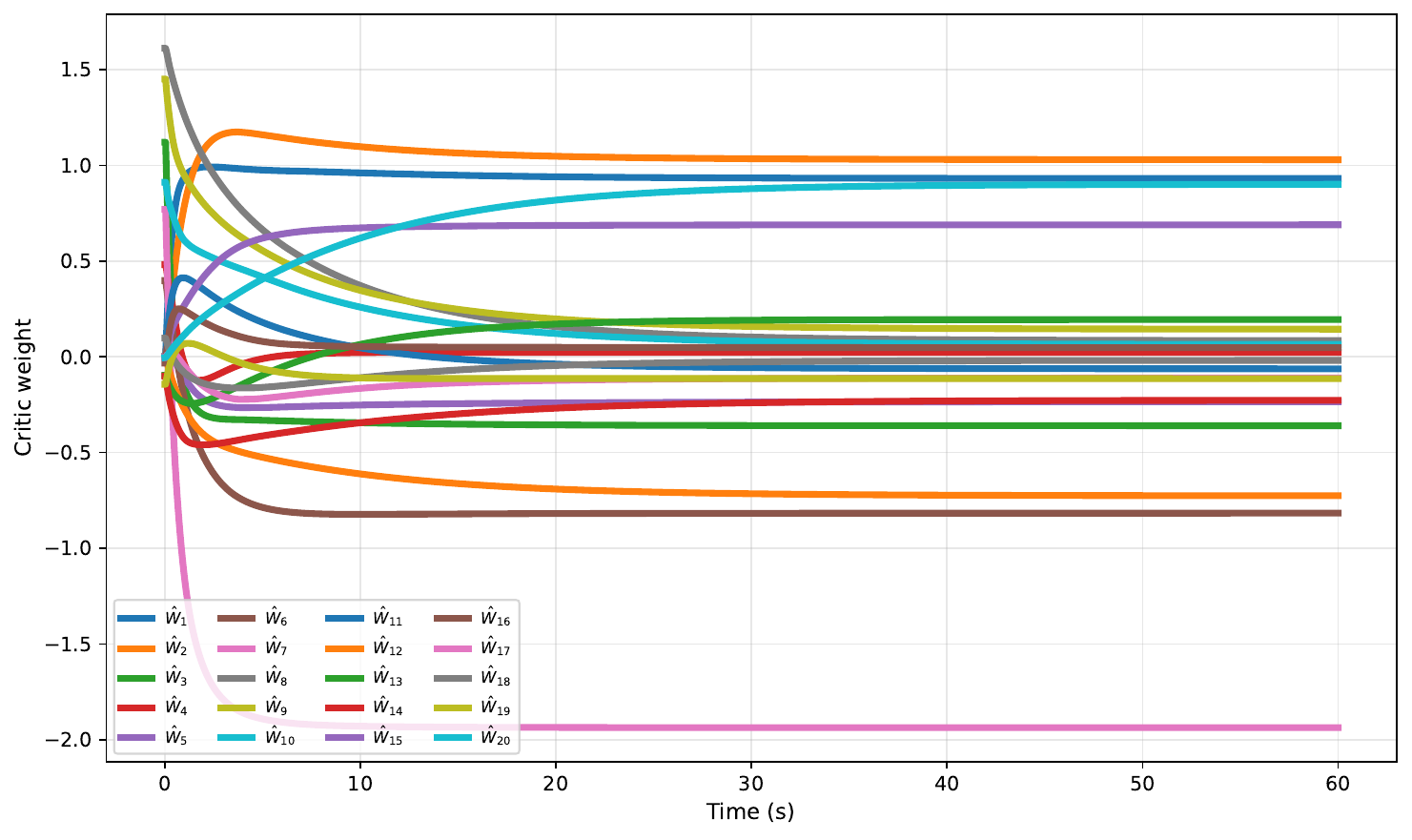}
            \label{fig:sim_critic_weights}
        }
        \hspace{0.015\columnwidth}
        \subfloat[Normalized Bellman--Isaacs residual.]{
            \includegraphics[width=0.47\columnwidth]
            {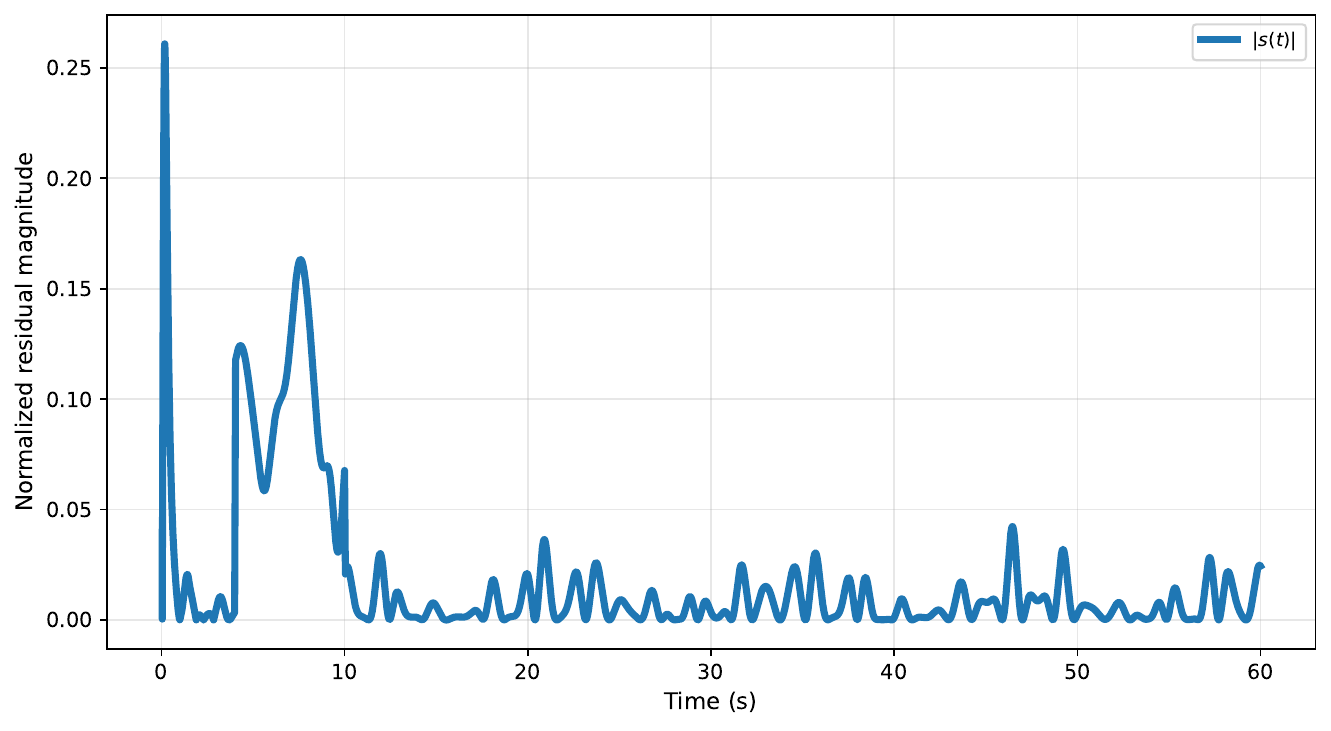}
            \label{fig:sim_bellman_residual}
        }
    }
    \caption{Critic-learning behavior under finite experience replay.}
    \label{fig:sim_learning}
\end{figure}

Figure~\ref{fig:sim_learning}(a) shows that all \(20\) critic weights remain
bounded and approach steady values. The Bellman--Isaacs residual in
Fig.~\ref{fig:sim_learning}(b) reaches about \(0.26\) during the initial
transient and subsequently remains in a small neighborhood of zero, including
during the FDI-active interval. These responses are consistent with the
practical critic-convergence result.

% ================================================================
% FINITE-DATA INFORMATIVITY AND FIXED-TIME VERIFICATION
% ================================================================

\begin{figure}[H]
    \centering
    \makebox[\columnwidth][c]{%
        \subfloat[Replay-Gramian eigenvalues.]{
            \includegraphics[width=0.47\columnwidth]
            {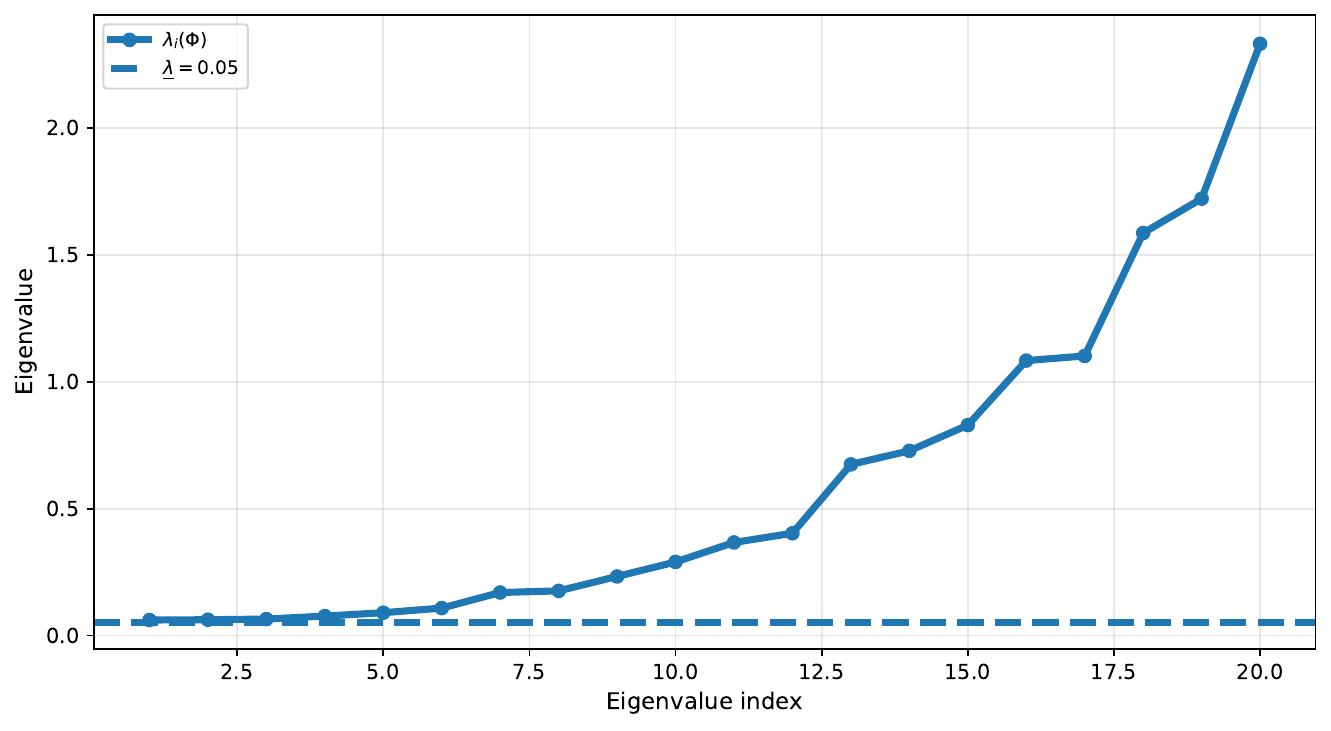}
            \label{fig:sim_replay_gramian}
        }
        \hspace{0.015\columnwidth}
        \subfloat[Multiple-initial-condition verification.]{
            \includegraphics[width=0.47\columnwidth]
            {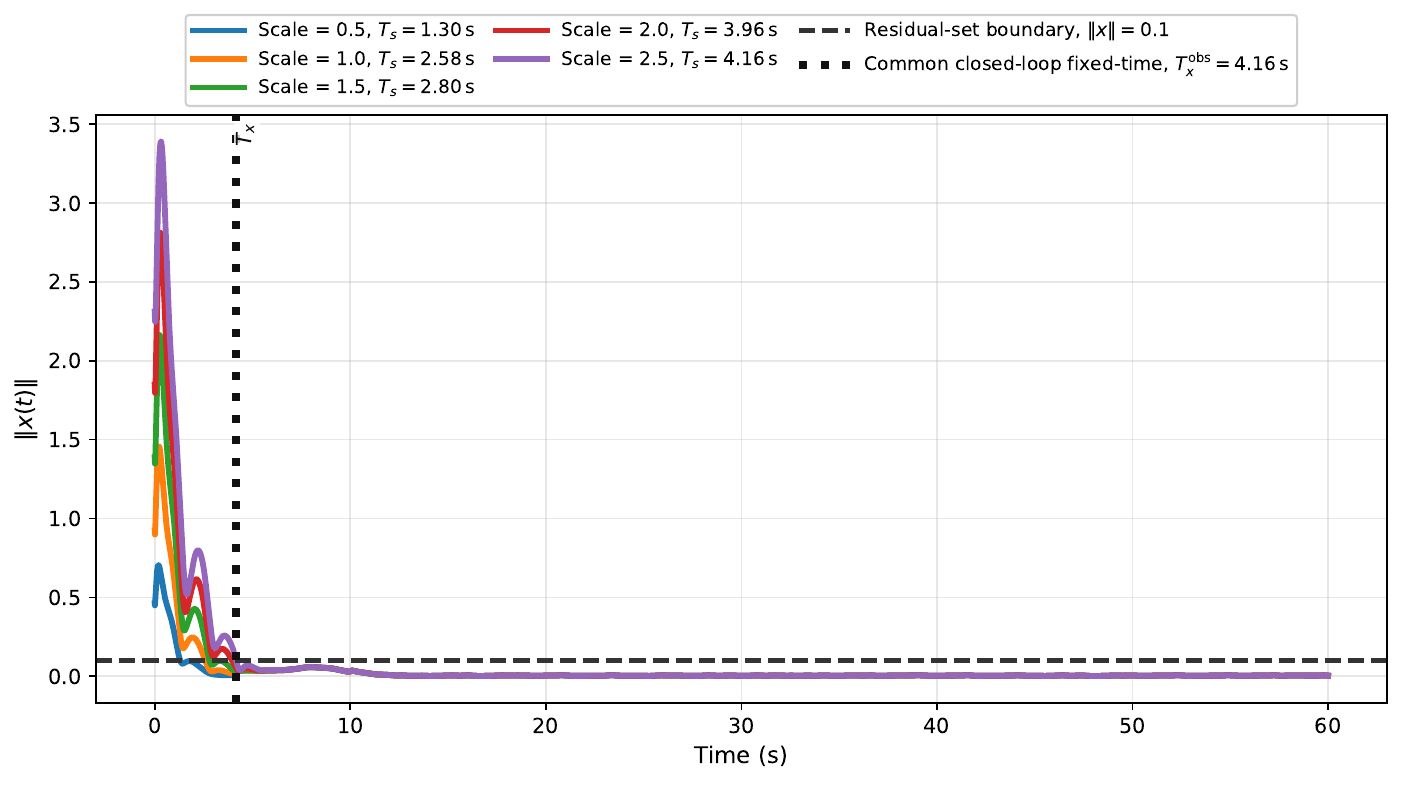}
            \label{fig:sim_multiple_initial_conditions}
        }
    }
    \caption{Finite-data informativity and practical fixed-time verification.}
    \label{fig:sim_theoretical_verification}
\end{figure}

The replay Gramian satisfies
\(\lambda_{\min}(\mathbf{\Phi})=0.0614>
\underline{\lambda}=0.05\), verifying finite-data informativity without
persistent excitation. For initial-state scales
\(0.5,\ 1.0,\ 1.5,\ 2.0,\ 2.5\), the corresponding residual-set entry
times are \(1.30,\ 2.58,\ 2.80,\ 3.96,\) and \(4.16~\mathrm{s}\).
Hence,
\[
T_x^{\mathrm{obs}}
=
\max_i T_{s,i}
=
4.16~\mathrm{s},
\]
numerically corroborating a common practical fixed-time bound over the
tested initial conditions.

% ================================================================
% CRITIC-INITIALIZATION SENSITIVITY
% ================================================================

\begin{figure}[H]
    \centering
    \makebox[\columnwidth][c]{%
        \subfloat[Closed-loop state norm.]{
            \includegraphics[width=0.47\columnwidth]
            {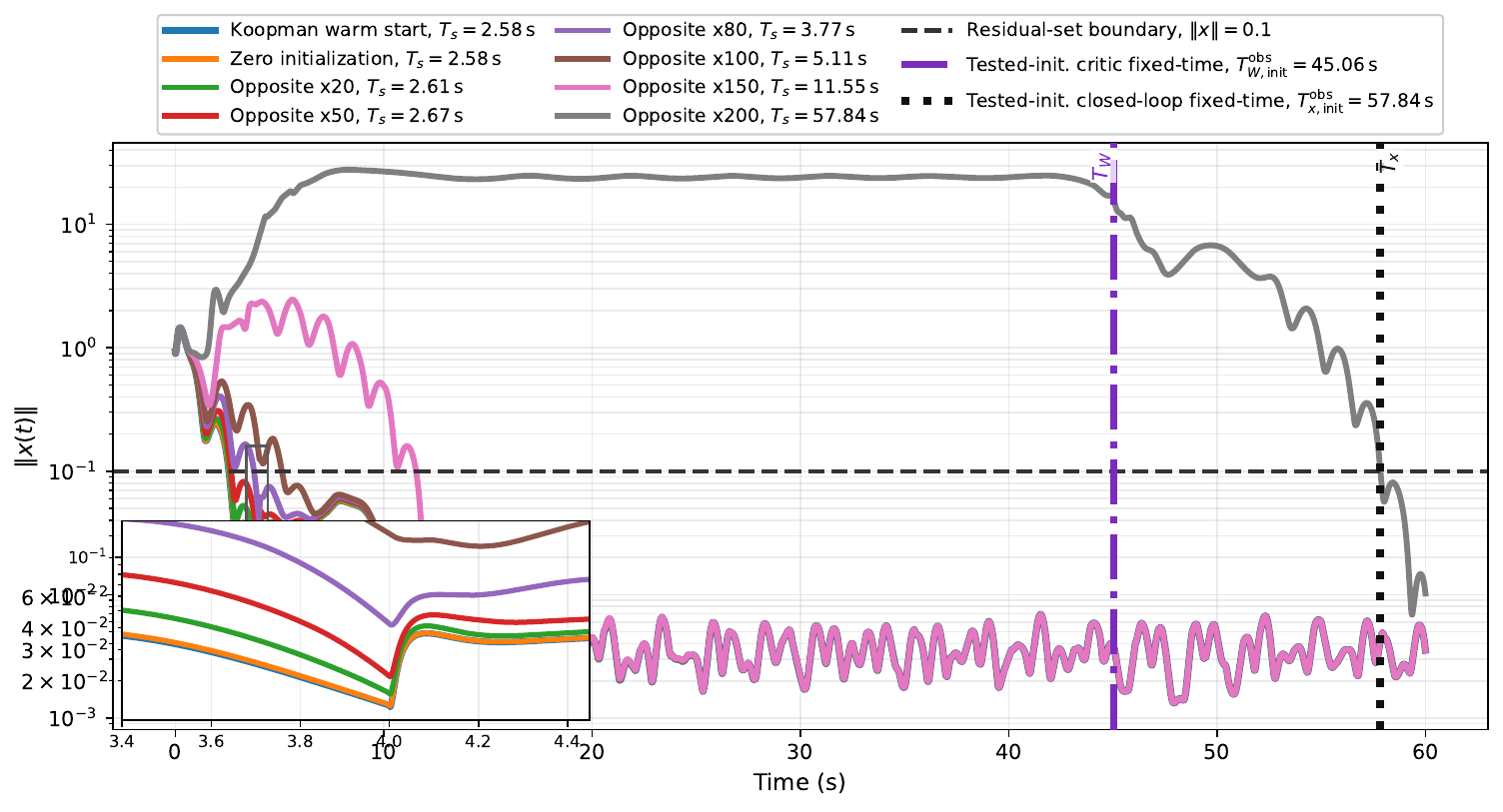}
            \label{fig:sim_adverse_state}
        }
        \hspace{0.015\columnwidth}
        \subfloat[Corresponding control effort.]{
            \includegraphics[width=0.47\columnwidth]
            {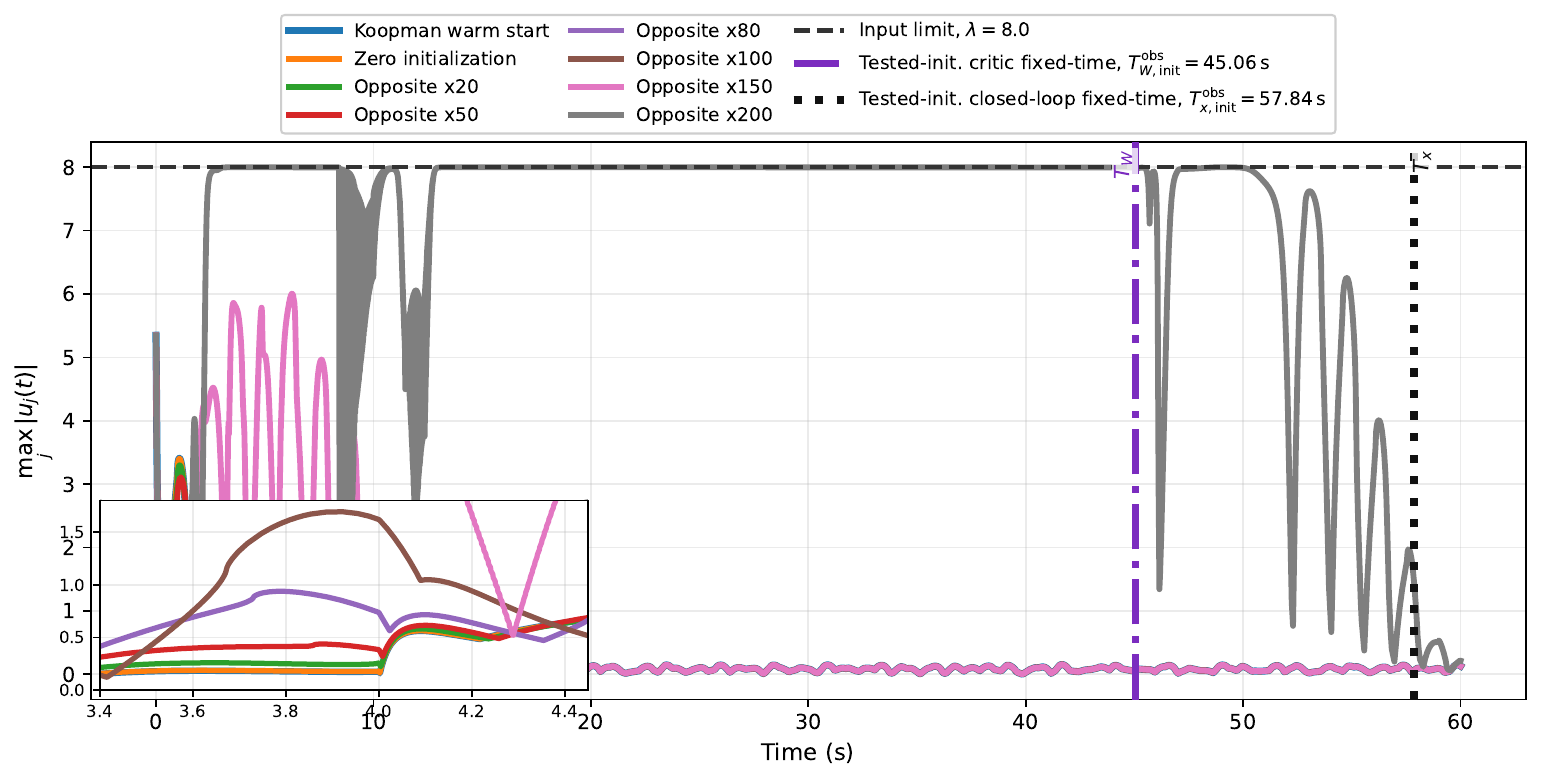}
            \label{fig:sim_adverse_control}
        }
    }
    \caption{Closed-loop sensitivity and recovery under adverse critic
    initialization.}
    \label{fig:sim_initialization}
\end{figure}

To quantify initialization sensitivity, the Koopman warm start is compared
with zero initialization and
\(\hat{\mathbf W}_{\mathrm{adv}}(0)
=-\sigma\hat{\mathbf W}_{K}(0)\),
where \(\sigma\in\{20,50,80,100,150,200\}\).
The state-set entry time increases from \(2.58~\mathrm{s}\) for the Koopman
warm start to \(57.84~\mathrm{s}\) for the extreme \(\sigma=200\)
initialization, which also drives the control close to \(\lambda=8\) for
extended intervals. Nevertheless, all tested cases ultimately recover, with
the observed bounds
\(T_{W,\mathrm{init}}^{\mathrm{obs}}=45.06~\mathrm{s}\) and
\(T_{x,\mathrm{init}}^{\mathrm{obs}}=57.84~\mathrm{s}\).
Thus, critic initialization strongly shapes the physical learning transient,
while the data-driven warm start avoids the large excursions induced by
poorly oriented initial weights. These observations also motivate the
development of certified admissible regions in critic-weight space and
initialization-dependent recovery bounds as natural extensions of the
present framework.

\section{Conclusion}

This paper developed a critic-only fixed-time integral reinforcement
learning framework for input-constrained unknown nonlinear systems
subject to matched FDI attacks and external disturbances. A saturated
HJI formulation, finite-data experience replay, and a two-power critic
update were integrated to enable model-free secure policy learning
without requiring persistent excitation along the entire trajectory.
The analysis established practical fixed-time boundedness of both the
critic weight error and the resulting nonlinear closed-loop system,
while preserving the prescribed actuator limits. Simulation results
on a nonlinear two-link manipulator corroborated the theoretical
properties under persistent disturbances and intermittent FDI attacks.
Future work will focus on relaxing the value-function regularity
requirements and developing data-informed critic initialization.

\section*{Acknowledgment}
The authors would like to thank Ho Chi Minh City University of Technology (HCMUT) and Vietnam National University Ho Chi Minh City (VNU-HCM) for supporting this research.

\bibliographystyle{IEEEtran}
\bibliography{references}

@article{AbuKhalafLewis2005,
  author  = {Abu-Khalaf, Murad and Lewis, Frank L.},
  title   = {Nearly optimal control laws for nonlinear systems with saturating actuators using a neural network {HJB} approach},
  journal = {Automatica},
  volume  = {41},
  number  = {5},
  pages   = {779--791},
  year    = {2005}
}

@article{VrabieLewis2009,
  author  = {Vrabie, Draguna and Pastravanu, Octavian and Abu-Khalaf, Murad and Lewis, Frank L.},
  title   = {Adaptive optimal control for continuous-time linear systems based on reinforcement learning},
  journal = {Automatica},
  volume  = {45},
  number  = {2},
  pages   = {477--484},
  year    = {2009}
}

@article{Modares2014,
  author  = {Modares, Hamidreza and Lewis, Frank L. and Naghibi-Sistani, Mohammad-Bagher},
  title   = {Integral reinforcement learning and experience replay for adaptive optimal control of partially-unknown constrained-input continuous-time systems},
  journal = {Automatica},
  volume  = {50},
  number  = {1},
  pages   = {193--202},
  year    = {2014}
}

@book{BasarBernhard1995,
  author    = {Ba{\c{s}}ar, Tamer and Bernhard, Pierre},
  title     = {$H_\infty$-Optimal Control and Related Minimax Design Problems},
  publisher = {Birkh{\"a}user},
  year      = {1995}
}

@article{Polyakov2012,
  author  = {Polyakov, Andrey},
  title   = {Nonlinear feedback design for fixed-time stabilization of linear control systems},
  journal = {IEEE Transactions on Automatic Control},
  volume  = {57},
  number  = {8},
  pages   = {2106--2110},
  year    = {2012}
}

@article{Gong2025,
  author  = {Gong, Zhenyu and Yang, Feisheng and Yuan, Yuan and Ma, Qian and Zheng, Wei Xing},
  title   = {Secure Formation Control of Multiagent System Against {FDI} Attack Using Fixed-Time Convergent Reinforcement Learning},
  journal = {IEEE Transactions on Control of Network Systems},
  volume  = {12},
  number  = {2},
  pages   = {1203--1213},
  year    = {2025}
}

@article{LewisVrabie2009,
  author  = {Lewis, Frank L. and Vrabie, Draguna},
  title   = {Reinforcement Learning and Adaptive Dynamic Programming for Feedback Control},
  journal = {IEEE Circuits and Systems Magazine},
  volume  = {9},
  number  = {3},
  pages   = {32--50},
  year    = {2009}
}

@article{VamvoudakisLewis2010,
  author  = {Vamvoudakis, Kyriakos G. and Lewis, Frank L.},
  title   = {Online Actor--Critic Algorithm to Solve the Continuous-Time Infinite Horizon Optimal Control Problem},
  journal = {Automatica},
  volume  = {46},
  number  = {5},
  pages   = {878--888},
  year    = {2010}
}

@article{Bhasin2013,
  author  = {Bhasin, Shubhendu and Sharma, Nitin and Yang, H. and Dixon, Warren E.},
  title   = {A Novel Actor--Critic--Identifier Architecture for Approximate Optimal Control of Uncertain Nonlinear Systems},
  journal = {Automatica},
  volume  = {49},
  number  = {1},
  pages   = {82--92},
  year    = {2013}
}

@article{AbuKhalafLewisHuang2006,
  author  = {Abu-Khalaf, Murad and Lewis, Frank L. and Huang, Jie},
  title   = {Policy Iterations on the Hamilton--Jacobi--Isaacs Equation for {$H_\infty$} State Feedback Control With Input Saturation},
  journal = {IEEE Transactions on Automatic Control},
  volume  = {51},
  number  = {12},
  pages   = {1989--1995},
  year    = {2006}
}

@article{ModaresLewisJiang2015,
  author  = {Modares, Hamidreza and Lewis, Frank L. and Jiang, Zhong-Ping},
  title   = {{$H_\infty$} Tracking Control of Completely Unknown Continuous-Time Systems via Off-Policy Reinforcement Learning},
  journal = {IEEE Transactions on Neural Networks and Learning Systems},
  volume  = {26},
  number  = {10},
  pages   = {2550--2562},
  year    = {2015}
}

@article{LuoWuHuang2015,
  author  = {Luo, Biao and Wu, Huai-Ning and Huang, Tingwen},
  title   = {Off-Policy Reinforcement Learning for {$H_\infty$} Control Design},
  journal = {IEEE Transactions on Cybernetics},
  volume  = {45},
  number  = {1},
  pages   = {65--76},
  year    = {2015}
}

@article{BianJiang2014,
  author  = {Bian, Tao and Jiang, Zhong-Ping},
  title   = {Adaptive Dynamic Programming and Optimal Control of Nonlinear Nonaffine Systems},
  journal = {Automatica},
  volume  = {50},
  number  = {10},
  pages   = {2624--2632},
  year    = {2014}
}

@article{Lin1992,
  author  = {Lin, Long-Ji},
  title   = {Self-Improving Reactive Agents Based on Reinforcement Learning, Planning and Teaching},
  journal = {Machine Learning},
  volume  = {8},
  pages   = {293--321},
  year    = {1992}
}

@article{Mnih2015,
  author  = {Mnih, Volodymyr and Kavukcuoglu, Koray and Silver, David and Rusu, Andrei A. and Veness, Joel and Bellemare, Marc G. and Graves, Alex and Riedmiller, Martin and Fidjeland, Andreas K. and Ostrovski, Georg and Petersen, Stig and Beattie, Charles and Sadik, Amir and Antonoglou, Ioannis and King, Helen and Kumaran, Dharshan and Wierstra, Daan and Legg, Shane and Hassabis, Demis},
  title   = {Human-Level Control Through Deep Reinforcement Learning},
  journal = {Nature},
  volume  = {518},
  number  = {7540},
  pages   = {529--533},
  year    = {2015}
}

@article{BhatBernstein2000,
  author  = {Bhat, Sanjay P. and Bernstein, Dennis S.},
  title   = {Finite-Time Stability of Continuous Autonomous Systems},
  journal = {SIAM Journal on Control and Optimization},
  volume  = {38},
  number  = {3},
  pages   = {751--766},
  year    = {2000}
}

@article{MoulayPerruquetti2006,
  author  = {Moulay, Emmanuel and Perruquetti, Wilfrid},
  title   = {Finite Time Stability and Stabilization of a Class of Continuous Systems},
  journal = {Journal of Mathematical Analysis and Applications},
  volume  = {323},
  number  = {2},
  pages   = {1430--1443},
  year    = {2006}
}

@article{Pasqualetti2013,
  author  = {Pasqualetti, Fabio and D{\"o}rfler, Florian and Bullo, Francesco},
  title   = {Attack Detection and Identification in Cyber-Physical Systems},
  journal = {IEEE Transactions on Automatic Control},
  volume  = {58},
  number  = {11},
  pages   = {2715--2729},
  year    = {2013}
}

@article{Fawzi2014,
  author  = {Fawzi, Hamza and Tabuada, Paulo and Diggavi, Suhas},
  title   = {Secure Estimation and Control for Cyber-Physical Systems Under Adversarial Attacks},
  journal = {IEEE Transactions on Automatic Control},
  volume  = {59},
  number  = {6},
  pages   = {1454--1467},
  year    = {2014}
}

@article{Teixeira2015,
  author  = {Teixeira, Andr{\'e} M. H. and Shames, Iman and Sandberg, Henrik and Johansson, Karl H.},
  title   = {A Secure Control Framework for Resource-Limited Adversaries},
  journal = {Automatica},
  volume  = {51},
  pages   = {135--148},
  year    = {2015}
}

@article{Dibaji2019,
  author  = {Dibaji, Seyed Mehran and Pirani, Mohammad and Flamholz, Daniel B. and Annaswamy, Anuradha M. and Johansson, Karl H. and Chakrabortty, Aranya},
  title   = {A Systems and Control Perspective of CPS Security},
  journal = {Annual Reviews in Control},
  volume  = {47},
  pages   = {394--411},
  year    = {2019}
}

@article{SanchezTorres2018PredefinedTime,
  author  = {S{\'a}nchez-Torres, Juan Diego and G{\'o}mez-Guti{\'e}rrez, David and L{\'o}pez, Esteban and Loukianov, Alexander G.},
  title   = {A Class of Predefined-Time Stable Dynamical Systems},
  journal = {IMA Journal of Mathematical Control and Information},
  volume  = {35},
  number  = {suppl\_1},
  pages   = {i1--i29},
  year    = {2018},
  doi     = {10.1093/imamci/dnx004}
}

@article{Cao2022PrescribedTimeTracking,
  author  = {Cao, Ye and Cao, Jianfu and Song, Yongduan},
  title   = {Practical Prescribed Time Tracking Control over Infinite Time Interval Involving Mismatched Uncertainties and Non-Vanishing Disturbances},
  journal = {Automatica},
  volume  = {136},
  pages   = {110050},
  year    = {2022},
  doi     = {10.1016/j.automatica.2021.110050}
}

@article{WangLai2020FixedTimeControl,
  author  = {Wang, Fang and Lai, Guanyu},
  title   = {Fixed-Time Control Design for Nonlinear Uncertain Systems via Adaptive Method},
  journal = {Systems \& Control Letters},
  volume  = {140},
  pages   = {104704},
  year    = {2020},
  doi     = {10.1016/j.sysconle.2020.104704}
}

@article{Heydari2018StabilizingInitialPolicy,
  author  = {Ali Heydari},
  title   = {Stability Analysis of Optimal Adaptive Control Under Value Iteration Using a Stabilizing Initial Policy},
  journal = {IEEE Transactions on Neural Networks and Learning Systems},
  volume  = {29},
  number  = {9},
  pages   = {4522--4527},
  year    = {2018},
  doi     = {10.1109/TNNLS.2017.2755501}
}

@article{Lee2015InvariantExploration,
  author  = {Jae Young Lee and Jin Bae Park and Yoon Ho Choi},
  title   = {Integral Reinforcement Learning for Continuous-Time Input-Affine Nonlinear Systems With Simultaneous Invariant Explorations},
  journal = {IEEE Transactions on Neural Networks and Learning Systems},
  volume  = {26},
  number  = {5},
  pages   = {916--932},
  year    = {2015},
  doi     = {10.1109/TNNLS.2014.2328590}
}

@article{AbuKhalaf2005SaturatingActuatorsHJB,
  author  = {Murad Abu-Khalaf and Frank L. Lewis},
  title   = {Nearly Optimal Control Laws for Nonlinear Systems With Saturating Actuators Using a Neural Network {HJB} Approach},
  journal = {Automatica},
  volume  = {41},
  number  = {5},
  pages   = {779--791},
  year    = {2005},
  doi     = {10.1016/j.automatica.2005.01.034}
}

@article{Vamvoudakis2010OnlineActorCritic,
  author  = {Kyriakos G. Vamvoudakis and Frank L. Lewis},
  title   = {Online Actor--Critic Algorithm to Solve the Continuous-Time Infinite Horizon Optimal Control Problem},
  journal = {Automatica},
  volume  = {46},
  number  = {5},
  pages   = {878--888},
  year    = {2010},
  doi     = {10.1016/j.automatica.2009.11.011}
}

@unpublished{VuEtAl2026FixedTimeMASIRL,
  author = {Tien Dat Vu and Minh Doan},
  title  = {Fixed-Time Integral Reinforcement Learning for Saturated Nonlinear Multi-Agent Systems Under FDI Attacks},
  note   = {Manuscript submitted to IEEE Transactions on Control Systems Technology, Manuscript No. 26-1065.2},
  year   = {2026}
}

@article{WilliamsKevrekidisRowley2015,
  author  = {Williams, Matthew O. and Kevrekidis, Ioannis G. and Rowley, Clarence W.},
  title   = {A Data-Driven Approximation of the Koopman Operator: Extending Dynamic Mode Decomposition},
  journal = {Journal of Nonlinear Science},
  volume  = {25},
  number  = {6},
  pages   = {1307--1346},
  year    = {2015}
}

@article{BruntonEtAl2016,
  author  = {Brunton, Steven L. and Brunton, Bingni W. and Proctor, Joshua L. and Kutz, J. Nathan},
  title   = {Koopman Invariant Subspaces and Finite Linear Representations of Nonlinear Dynamical Systems for Control},
  journal = {PLoS ONE},
  volume  = {11},
  number  = {2},
  pages   = {e0150171},
  year    = {2016}
}

@article{KordaMezic2018,
  author  = {Korda, Milan and Mezi{\'c}, Igor},
  title   = {Linear Predictors for Nonlinear Dynamical Systems: Koopman Operator Meets Model Predictive Control},
  journal = {Automatica},
  volume  = {93},
  pages   = {149--160},
  year    = {2018}
}

@inproceedings{KrolickiSutavaniVaidya2022,
  author    = {Krolicki, A. and Sutavani, S. and Vaidya, U.},
  title     = {Koopman-Based Policy Iteration for Robust Optimal Control},
  booktitle = {Proceedings of the American Control Conference},
  address   = {Atlanta, GA, USA},
  pages     = {1317--1322},
  year      = {2022}
}

@article{ChenLewisXie2024,
  author  = {Chen, C. and Lewis, Frank L. and Xie, K. and Xie, S.},
  title   = {Adaptive Optimal Control of Unknown Nonlinear Systems via Homotopy-Based Policy Iteration},
  journal = {IEEE Transactions on Automatic Control},
  volume  = {69},
  number  = {5},
  pages   = {3396--3403},
  year    = {2024}
}

@article{PanYu2016CompositeLearning,
  author  = {Yongping Pan and Haoyong Yu},
  title   = {Composite Learning From Adaptive Dynamic Surface Control},
  journal = {IEEE Transactions on Automatic Control},
  year    = {2016},
  volume  = {61},
  number  = {9},
  pages   = {2603--2609},
  doi     = {10.1109/TAC.2015.2495232}
}

@article{ParikhKamalapurkarDixon2019ICL,
  author  = {Anup Parikh and Rushikesh Kamalapurkar and Warren E. Dixon},
  title   = {Integral Concurrent Learning: Adaptive Control with Parameter Convergence Using Finite Excitation},
  journal = {International Journal of Adaptive Control and Signal Processing},
  year    = {2019},
  volume  = {33},
  number  = {12},
  pages   = {1775--1787},
  doi     = {10.1002/acs.2945}
}

\raggedbottom

% ------------------------------------------------------------
% Compact IEEE biography spacing
% IEEEtran default is 4\baselineskip, which may force the
% next biography to a new page unnecessarily.
% ------------------------------------------------------------
\makeatletter
\def\@IEEEBIOskipN{0.5\baselineskip}
\makeatother

\vspace{0.3em}
\section*{Author Biographies}
\vspace{-0.6em}

% ------------------------------------------------------------
% Tien Dat Vu
% ------------------------------------------------------------

\begin{IEEEbiography}
[{\includegraphics[
    width=1in,
    height=1.25in,
    clip,
    keepaspectratio
]{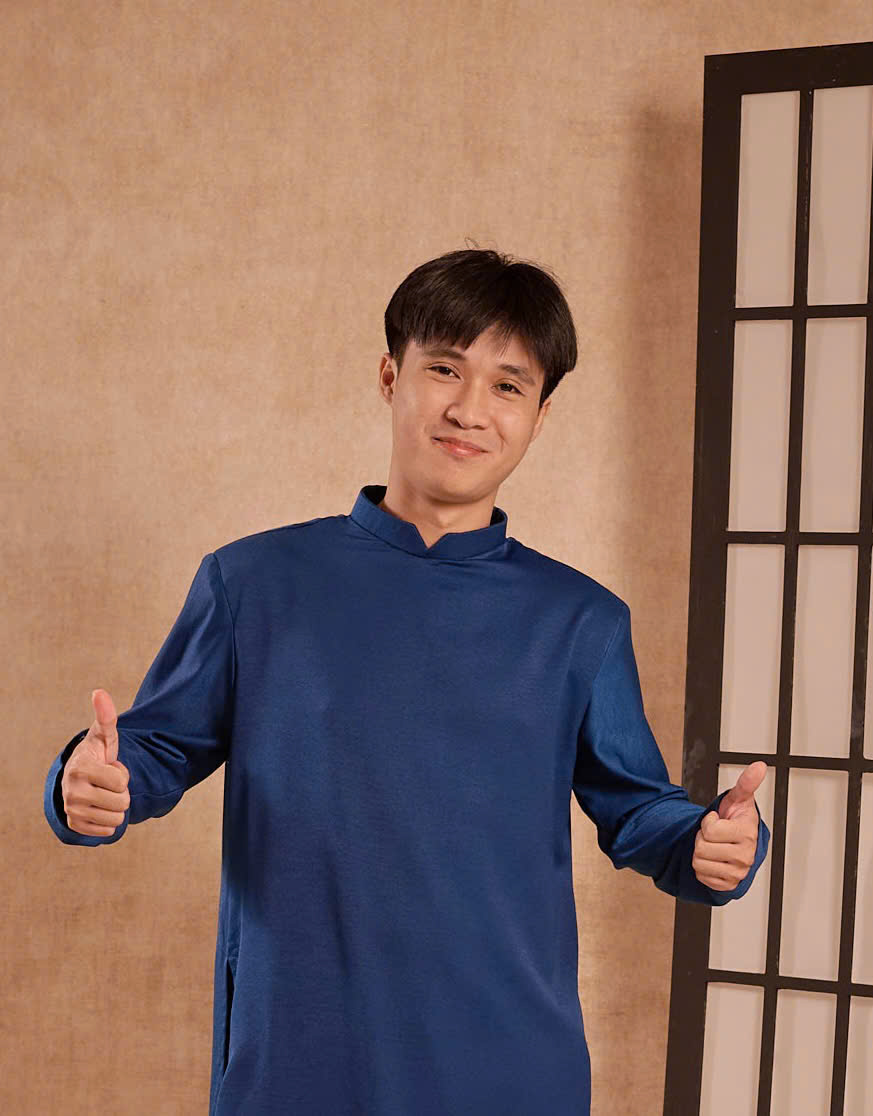}}]
{Tien Dat Vu}
is currently a senior undergraduate student in the
Vietnamese--French Program in Mechatronics Engineering at
Ho Chi Minh City University of Technology (HCMUT),
Vietnam National University Ho Chi Minh City (VNU-HCM),
Ho Chi Minh City, Vietnam. His research interests include
learning-based and nonlinear control, adaptive dynamic
programming, reinforcement learning, data-driven control,
secure and resilient control, optimal control of uncertain
nonlinear systems, multi-agent systems, and artificial
intelligence for dynamical systems.
\end{IEEEbiography}

% ------------------------------------------------------------
% Nhat Minh Doan
% ------------------------------------------------------------

\begin{IEEEbiography}
[{\includegraphics[
    width=1in,
    height=1.25in,
    clip,
    keepaspectratio
]{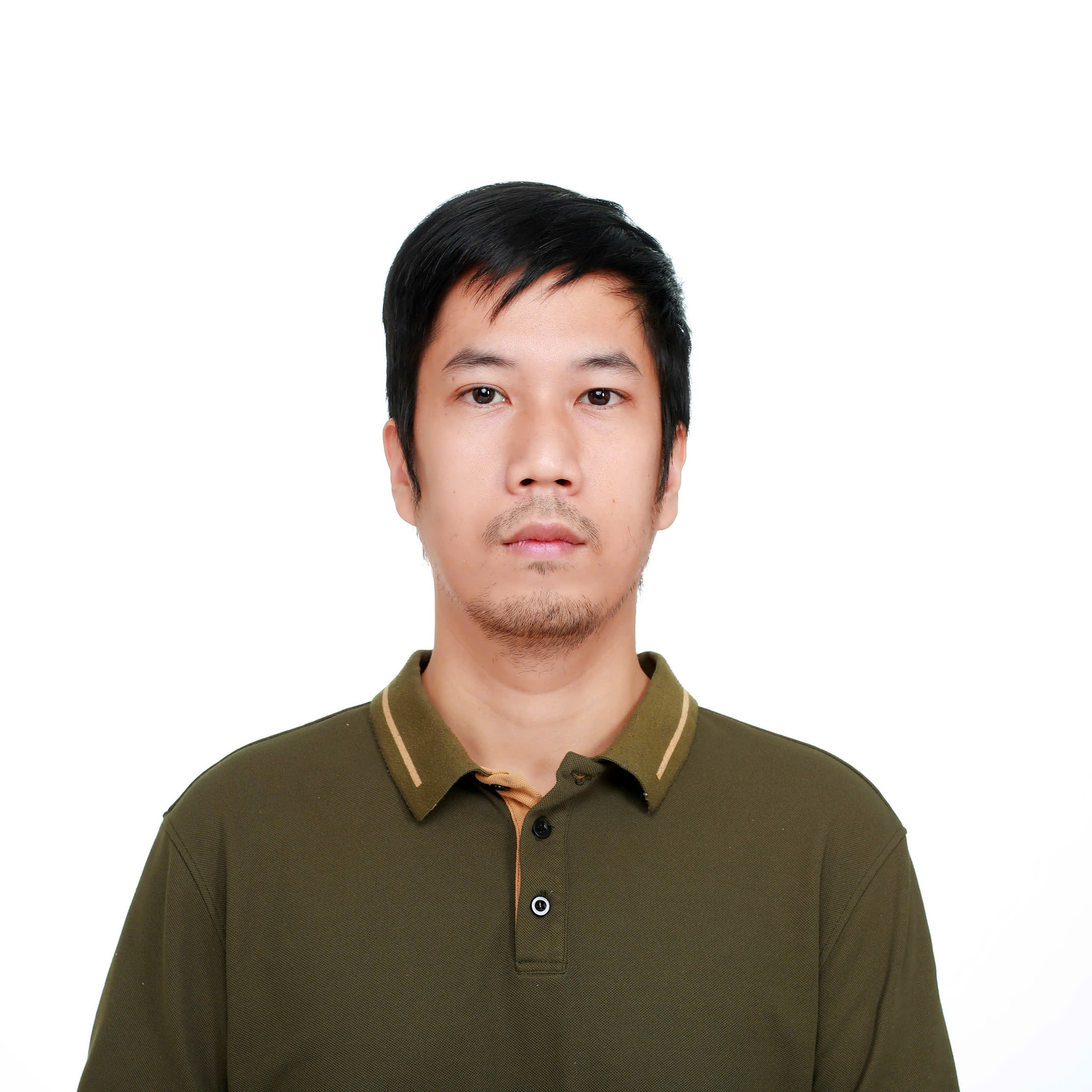}}]
{Nhat Minh Doan}
received the B.S. degree in Mechanical Engineering from
Bucknell University, Lewisburg, PA, USA, in 2015, and the
Ph.D. degree from Keio University, Tokyo, Japan, in 2021.
He is currently a Lecturer with the Faculty of Mechanical
Engineering, Ho Chi Minh City University of Technology
(HCMUT), Vietnam National University Ho Chi Minh City
(VNU-HCM), Ho Chi Minh City, Vietnam. His research interests
include unmanned aerial vehicle design and control, wind
turbine systems, multi-agent systems, networked and distributed
control, and the modeling, analysis, and control of complex
mechanical and networked dynamical systems.
\end{IEEEbiography}

\end{document}